\newcommand{\leplus}{\Large $+$}
\newcommand{\lezero}{\Large $0$}
\newcommand{\leplusbold}[1][black]{%
  \tikz\draw[#1,line width=1.5pt,scale = 0.35,line cap = round] (0,0) -- (1,0)(0.5,0.5) -- (0.5,-0.5);
}
\definecolor{light-gray}{gray}{0.6}
\tikzstyle{propagator}=[decorate,decoration={snake,amplitude=0.8mm}]
\tikzstyle{smallpropagator}=[decorate,decoration={snake,segment length=3mm,amplitude=0.5mm}]
\tikzstyle{linehighlight}=[blue,line width = 3pt,line cap = round, draw opacity = 0.5]
\tikzstyle{firstdash}=[dashed,line cap=round, dash pattern=on 2pt off 1pt]
\tikzstyle{seconddash}=[dashed,line cap=round, dash pattern=on 0.5pt off 1pt]
\tikzstyle{smalldash}=[dashed,line cap=round, dash pattern=on 1.5pt off 2pt]
\pgfmathsetmacro{\arrowangle}{90}
\pgfmathsetmacro{\diagramscale}{1.3}
\tikzstyle{propassignment} = [->,shorten >=2pt,thick]
\newcommand{\drawWLD}[2]{

\pgfmathsetmacro{\n}{#1}
\pgfmathsetmacro{\radius}{#2}
\pgfmathsetmacro{\angle}{360/\n}
\draw (0,0) circle (\radius);
    \foreach \i in {1,2,...,\n} {
      \draw (\angle*\i:\radius) node {$\bullet$};
    }

}
\newcommand{\drawprop}[4]{
\pgfmathsetmacro{\r}{#1}
\pgfmathsetmacro{\bumpr}{#2}
\pgfmathsetmacro{\s}{#3}
\pgfmathsetmacro{\bumps}{#4}
\pgfmathsetmacro{\perturbe}{\angle/\n}
\begin{scope}
\draw[smallpropagator] (\angle*\r + \angle/2 + \bumpr*\perturbe:\radius) -- (\angle*\s + \angle/2 + \bumps*\perturbe:\radius);
\end{scope}
}
\newcommand{\drawlabeledprop}[5]{
\pgfmathsetmacro{\r}{#1}
\pgfmathsetmacro{\bumpr}{#2}
\pgfmathsetmacro{\s}{#3}
\pgfmathsetmacro{\bumps}{#4}
\pgfmathsetmacro{\perturbe}{\angle/\n}

\begin{scope}
\draw[smallpropagator] (\angle*\r + \angle/2 + \bumpr*\perturbe:\radius) -- (\angle*\s + \angle/2 + \bumps*\perturbe:\radius) node[midway, below] {#5};
\end{scope}
}
\newcommand{\drawnumbers}{
  \foreach \i in {1,2,...,\n} {
  \pgfmathsetmacro{\x}{\angle*\i}
  \draw (\x:\radius*1.25) node {\footnotesize \i};
}
}
\def\centerarc[#1](#2)(#3:#4:#5)
\def\clipcenterarc(#1)(#2:#3:#4)
\newcommand{\drawWLDfragment}[3][10]{
\pgfmathsetmacro{\n}{#1} 
\pgfmathsetmacro{\radius}{#2}
\pgfmathsetmacro{\fragment}{#3} 
\pgfmathsetmacro{\halfangle}{360*\fragment/2}
\pgfmathsetmacro{\startpoint}{270 - \halfangle}
\pgfmathsetmacro{\endpoint}{270 + \halfangle}
\pgfmathsetmacro{\step}{2*\halfangle/\n} 
\pgfmathsetmacro{\zero}{\startpoint-0.5*\step} 
\centerarc[black](0,0)(\startpoint:\endpoint:\radius)
}
\newcommand{\newnode}[3][left]{
  \node[label={[label distance=-1mm]#1:{\scriptsize $#3$}}] at (\zero + #2*\step:\radius) {\scriptsize $\bullet$};
}
\newcommand{\newbetternode}[3][{label distance=-1mm]left}]{
  \node[label={#1:{\scriptsize $#3$}}] at (\zero + #2*\step:\radius) {\scriptsize $\bullet$};
}
\newcommand{\newprop}[4][midway,below]{
\pgfmathsetmacro{\startnode}{#2}
\pgfmathsetmacro{\endnode}{#3}

\draw[smallpropagator] (\zero+\startnode*\step:\radius) -- (\zero + \endnode*\step:\radius) node[#1] {#4};
}
\newcommand{\newpropbend}[3]{
\draw[smallpropagator] (\zero+#1*\step:\radius*1.1) to[bend left = #3] (\zero + #2*\step:\radius*1.1);
}
\newcommand{\R}{\mathbb{R}}
\newcommand{\Gr}{\mathbb{G}_{\R, \geq 0}}
\def\ba #1\ea{\begin{align} #1 \end{align}}
\def\bas #1\eas{\begin{align*} #1 \end{align*}}
\def\bml #1\eml{\begin{multline} #1 \end{multline}}
\def\bmls #1\emls{\begin{multline*} #1 \end{multline*}}
\newcommand{\cP}{\mathcal{P}}
\newcommand{\cB}{\mathcal{B}}
\newcommand{\Prop}{\textrm{Prop}}
\newcommand{\cZ}{\mathcal{Z}}
\newcommand{\gale}[1]{\preccurlyeq_{#1}}
\newtheorem{thm}{Theorem}[section]
\newtheorem{lem}[thm]{Lemma}
\newtheorem{cor}[thm]{Corollary}
\newtheorem{prop}[thm]{Proposition}
\newtheorem{algorithm}[thm]{Algorithm}
\theoremstyle{remark}
\newtheorem{eg}[thm]{Example}
\theoremstyle{definition}
\newtheorem{dfn}[thm]{Definition}
\newtheorem{rmk}[thm]{Remark}
\title{Combinatorics of the geometry of Wilson loop diagrams II: Grassmann necklaces, dimensions, and denominators}
\author{Susama Agarwala\thanks{SA was partially supported by an Office of Naval Research grant.}, Si\^an Fryer, and Karen Yeats\thanks{KY is supported by an NSERC Discovery grant, by the Canada Research Chair program, and also, through some the time this work was developed, by a Humboldt Fellowship from the Alexander von Humboldt foundation.}}
\date{}
\begin{document}
\maketitle

\begin{abstract}
  Wilson loop diagrams are an important tool in studying scattering amplitudes of SYM $N=4$ theory and are known by previous work to be associated to positroids. In this paper we study the structure of the associated positroids, as well as the structure of the denominator of the integrand defined by each diagram. We give an algorithm to derive the Grassmann necklace of the associated positroid directly from the Wilson loop diagram, and a recursive proof that the dimension of these cells is thrice the number of propagators in the diagram. We also show that the ideal generated by the denominator in the integrand is the radical of the ideal generated by the product of Grassmann necklace minors.
\end{abstract}
\section{Introduction}

This paper is the second in a two part series investigating the combinatorics and geometry underlying Wilson loop diagrams in SYM $N=4$ theory. This series lays out several results about the relationship between Wilson loop diagrams and the positroid cells that give a CW complex structure to the positive Grassmannian. This paper is concerned with identifying which positroid cells correspond to which Wilson loop diagrams, and how this translates to a geometric tiling of a subspace of the positive Grassmannians. 



In contrast, the first paper in this series focuses on combinatorial and matroidal properties of the Wilson loop diagrams. The main focus of that paper is on enumerating the positroid cells that are associated to Wilson loop diagrams with a fixed number of vertices and propagators, as well as counting the number of Wilson loop diagrams that map to the same positroid. The results of the first paper are not a prerequisite for this paper.

In recent years, there has been significant interest in understanding the geometry and combinatorics underlying the field theory SYM $N=4$ \cite{wilsonloop, Arkani-Hamed:2013jha, Amplituhedronsquared, galashinlam, AmplituhedronDecomposition}. This started with the observation that BCFW diagrams, which represent the on shell interactions of this theory, correspond to plabic graphs \cite{GrassmannAmplitudebook}. Plabic graphs were introduced by Postnikov in \cite[section 11]{Postnikov}, and provide one of the many equivalent ways to classify the positroid cells of the positive real Grassmannian $\Gr(k,n)$. This led to the representation of the on shell integrals associated to this theory in terms of a geometric space (the Amplituhedron) embedded in $\Gr(k,n)$ \cite{Arkani-Hamed:2013kca,Arkani-Hamed:2013jha}. Moreover, integrands associated to the BCFW diagrams define a volume form on the Amplituhedron. This gives a novel and elegant interpretation of scattering amplitudes for this theory, as well as a means of greatly simplifying the calculations involved in traditional Feynman diagram approaches. Since then, there has been significant work devoted to understanding the volume form and its physical implications \cite{UnwindingAmplituhedron,galashinlam}.

Meanwhile, a different body of work has studied SYM $N=4$ theory from the point of view of Feynman integrals in twistor space \cite{Adamo:2011pv,Adamo:2012xe,  Boels:2007qn, Bullimore:2010pj}. These integrals are calculated via holomorphic Wilson loops. As with the Amplituhedron, each Wilson loop represents the sum of a family of $N^kMHV$ Feynman diagrams. Each of the Feynman integrands in this sum associates a volume form to a convex space, parallel to the Amplituhedron \cite{CachazoSvrcekWitten, Hodges2013}. However, the geometric objects decomposing the Wilson loops are different from the geometric object decomposing the Amplituhedron.

Wilson loop diagrams, as we will study them, are objects given by a perturbative expansion of the holomorphic Wilson loop in twistor space.
Classically, the interactions of the theory were studied as maximal helicity violating diagrams ($MHV$ diagrams), next-to-maximal helicity violating diagrams, and so on, ($N^kMHV$ diagrams) in momentum space. Adamo and Mason, in \cite{Adamo:2011pr}, gave an elegant formulation for the $N^kMHV$ diagrams in twistor space. The Wilson loop diagram perspective of this paper is dual to this twistor representation of $N^kMHV$ diagrams.
The objects are closely related and carry the same information.  We choose to emphasize the dual perspective by our focus on the Wilson loop diagrams.


In \cite{wilsonloop}, Agarwala and Marin-Amat uncovered a connection between the Wilson loop diagrams and positroids, defined as matroids that can be realized as an element of $\Gr(k,n)$. Other work has shown a connection between the volume filled by Wilson loop diagrams and the Amplituhedron \cite{casestudy, Amplituhedronsquared}. Specifically, Marcott showed that the space parametrized by the Wilson loop diagrams is  exactly the correct positroid cell in $\Gr(k,n)$, up to a set of measure zero \cite{WLDdim}. 

In the above work and elsewhere, it is clear that while the space parametrized by the Wilson loop diagrams is related to the Amplituhedron, they are not the same space, nor is a direct connection between the geometry of the Amplituhedron and the geometry of the Wilson loop diagrams known. For instance, it is conjectured that the Amplituhedron is orientable, whereas Agarwala and Marcott showed in \cite{non-orientable} that the space parametrized by the Wilson loop diagrams as a subspace of $\Gr(k, n+1)$ can be seen as a non-orientable vector bundle over some submanifold of $\Gr(k,n)$. Furthermore, the cells associated to the Wilson loop diagrams are different from the cells associated to BCFW diagrams, which can be seen by comparing the results of \cite{AmplituhedronDecomposition} and \cite{casestudy}. Furthermore, the spaces in the two situations have different dimensions, as we will discuss further below, and examples show \cite{HeslopStewart} that they do not relate by a coordinate projection or boundary map or other straightforward function.  None the less, from their common physical origin, it is believed that once the integrals over these spaces are taken the results will parametrize related spaces \cite{wilsonloop, Amplituhedronsquared}. 

The Amplituhedron and the Wilson loop diagrams describe the scattering amplitudes from the same physical theory, but the Amplituhedron only captures the on shell portion while the Wilson loop diagrams capture arbitrary kinematics.   Both are related to positroids, and in this way they have similar combinatorics, but, these combinatorial objects manifest themselves differently in the two cases.  For instance, square moves and merge-expand moves on the plabic graph give different graphical representations, hence parametrizations, of the same positroid cell in the Amplituhedron case.  On the other hand, as we show in our companion paper \cite{generalcombinatoricsI}, retriangulations of polygon dissections associated to Wilson loop diagrams give all the Wilson loop diagrams associated to the same positroid cell.  Each distinct Wilson loop diagram associated to the same positroid cell gives a different parametrization \cite{WLDdim}. 
The deeper combinatorial and geometric connection between the Amplituhedron and the Wilson loop diagrams, which physics tells us should exist, remains to be uncovered in future work.

This paper continues the program of establishing concrete connections between the combinatorics of the Wilson loop diagram as determined by the physics, and the combinatorics of the associated positroid cell as identified in \cite{wilsonloop}. In this paper, we give an algorithm to read the Grassmann necklace, one of the standard combinatorial characterizations of the positroid cell, directly off the Wilson loop diagram. We also show that the positroid cells associated to Wilson loop diagrams are all $3k$-dimensional, where $k$ is the number of propagators in the diagram; by contrast, the BCFW cells are all $4k$-dimensional \cite{Arkani-Hamed:2013jha}. In \cite{non-orientable} the authors noted that the Wilson loop diagrams only correspond to positroid cells when the gauge vector is ignored. Including the gauge vector defines a $4k$ dimensional subspace of $\mathbb{G}_\R(k,n+1)$.

As described above, each integrand associated to a Wilson loop diagram gives the volume of a convex polytope. However, each individual tree level integral has singularities. These singularities are of two types: physical and spurious.  The physical singularities come from special kinematical points and we will not consider them in this paper.  The spurious singularities come from zeros of the denominator of the integral, and hence are accessible to our combinatorial techniques.  SYM $N=4$ is a finite theory so we only expect to see the physical singularites in the overall amplitude.  Consequently, the spurious singularities are artifacts of how the integrals are represented and must cancel in the calculation of a tree level scattering amplitude, though the details of how the cancellations come about remain unclear. The cancellation has been conjectured in general \cite{Hodges2013} and verified explicitly in a few cases \cite{casestudy, HeslopStewart}, but it is hard to prove in general due to the fact that the relationship between Wilson loop diagrams and $3k$-dimensional positroid cells of $\Gr(k,n)$ is neither one-to-one nor onto \cite{generalcombinatoricsI}.


In this paper, we give a geometric characterization of the factors appearing in the denominator of these Wilson loop diagram integrals in terms of the Grassmann necklace of the associated positroid. Specifically, we show that the set of irreducible factors in the denominator, as defined by the physics \cite{HeslopStewart}, are exactly the irreducible factors of the minors defined by the Grassmann necklace evaluated on a physically significant realization of the associated positroid cell. In other words, the singularities arising from tree level particle interactions in SYM $N=4$ theory do not just come from the Feynman rules describing the theory: they can also be seen directly from the associated geometry, given the physically correct representation.  We hope these results will pave the way for a general study of these poles in future.

\subsection{Roadmap}

In Section~\ref{section general background}, we summarize the required background for Wilson loop diagrams (subsection~\ref{section WLD background}), the theory of positroids in $\Gr(k,n)$ and their characterization in terms of Grassmann necklaces and Le diagrams (subsection~\ref{sec:positroid background}), and the link between Wilson loop diagrams and positroids (subsection~\ref{sec:WLD as positroids}). 

In Section~\ref{sec GN algorithm} we state an algorithm for constructing the Grassmann necklace of the positroid associated to a Wilson loop diagram (Algorithm~\ref{alg:put GN on WLD}) and prove its correctness (Theorem~\ref{res:alg gives GN}, Theorem~\ref{res alg gives correct GN}). While it was already possible to construct the Grassmann necklace of a Wilson loop diagram via existing bijections between some of the various combinatorial objects that index the positroid cells, this process was convoluted and involved multiple steps; our algorithm allows us to track directly how each propagator in the Wilson loop diagram contributes to each term in the Grassmann necklace. This makes it much easier to track how properties move through the bijection, as we do throughout the later sections. Furthermore, the above algorithm allows us to better understand the positroids associated to Wilson loop diagrams. In particular, Corollary \ref{no coloops} shows that these positroids have no coloops, and gives an exact characterization of their loops. Many of the technical lemmas required to prove Theorem~\ref{res:alg gives GN} and Theorem~\ref{res alg gives correct GN} continue to be important in subsequent sections.  In particular we have Lemma~\ref{lem sian}, which identifies certain configurations of propagators that must appear in any admissible Wilson loop diagram, and Lemma~\ref{vertex cyclic int lem}, which shows that the pattern of values that a propagator contributes to each term of the Grassmann necklace is both simple and predictable.

In Section~\ref{sec dim} we examine the dimension of positroids associated to Wilson loop diagrams, and show that the dimension is always equal to three times the number of propagators in the diagram (Theorem~\ref{thm dim}).  While this has been proved by Marcott in \cite{WLDdim} using matroidal arguments, our proof is constructive and explicitly relates the position of the plusses in the Le diagram to the propagators in the Wilson loop diagram. This proof and the methods used therein show how the positroid cell associated to a Wilson loop diagram changes as one adds propagators in specific positions. This is a significant step towards understanding the broader question about how adding propagators to a Wilson loop diagram changes the associated geometry.

Finally, in Section~\ref{sec poles} we characterize the denominator of the integrand of a Wilson loop diagram in terms of its associated matrix and Grassmann necklace. We show that given the representation of the positroid cell associated to an admissible Wilson loop diagram, the irreducible factors of the denominator are exactly the set of irreducible factors of the minors of said representation given by the Grassmann necklace. This characterization is given explicitly by Algorithm~\ref{alg WLD to denom via GN} and Theorem~\ref{thm denom}.

\section{Background}\label{section general background}

\subsection{Wilson loop diagrams}\label{section WLD background}

\begin{dfn}\label{WLdfn}
A {\bf Wilson loop diagram} $W = (\cP,V)$ consists of a cyclically ordered set $V$ with one distinguished element (giving a compatible linear ordering on $V$ with the distinguished element as the first element), and a set $\cP$ of unordered pairs of elements of $V$. The elements of $V$ are the {\bf vertices} of $W$, and elements of $\cP$ are the {\bf propagators} of $W$. \end{dfn}

A Wilson loop diagram is depicted as a circle, with the vertices $V$ arranged along the edge of the circle and listed counterclockwise in their cyclic order. The arc between two consecutive vertices is referred to as an {\bf edge} of the diagram, and the $i$th edge of $W$ is the edge from vertex $i$ to its successor. Each propagator $(i,j) \in \cP$ is denoted by a wavy line joining edge $i$ and edge $j$ inside the circle.

\begin{figure}[h]
\[W\ =\ \begin{tikzpicture}[rotate=67.5,baseline=(current bounding box.east)]
	\begin{scope}
	\drawWLD{8}{2}
	\drawnumbers
	\drawprop{1}{0}{4}{0}
	\drawprop{2}{0}{4}{-1}
    \drawprop{5}{0}{8}{0}
    \drawprop{5}{1}{7}{0}
		\end{scope}
	\end{tikzpicture}\]

\caption{The Wilson loop diagram $W = \big(\{(1,4), (2,4), (5,7), (5,8)\},[8]\big)$ which has four propagators and eight vertices.}
\label{fig:ex of WLD}
\end{figure}
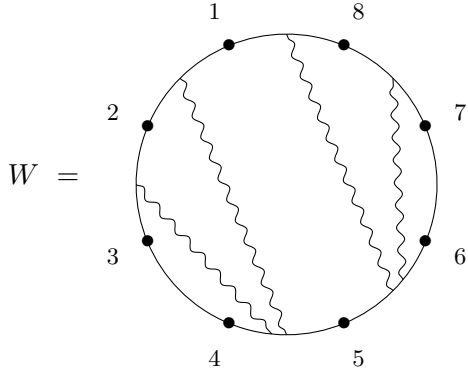

Let $[n]$ denote the set of integers $\{1,\dots,n\}$ with the obvious cyclic and linear orders.

Note that the propagators in a Wilson loop diagram are undirected, that is, $\cP$ defines a symmetric binary relation on $V$.  However, in some of the later arguments we need to impose a direction on a propagator $p = (i,j)$, for example in order to consider the region ``inside'' or ``outside'' of $p$. In this case, we write $(p,<_i)$ to indicate that $p$ is directed from edge $i$ to edge $j$, and $(p,<_j)$ for the opposite direction.

We take the convention that all vertices should always be interpreted in terms of the cyclic order on $V$.  So, for any $i\in V$ we will use $i+1$ to denote the successor of $i$ in the cyclic order on $V$.   In particular, if $V = [n]$ then all vertices are interpreted modulo $n$ and $(n,n+3)$, for example, means the propagator connecting edges $3$ and $n$.

On both of the last two points, we differ from the standard convention in the physics literature, where propagators are written as ordered pairs $(i,j)$ with $1 \leq i<j \leq n$.  Additionally, note that in this paper, we allow for the propagator set $\cP$ to be empty. This is for mathematical convenience, even though these diagrams do not correspond to tree level scattering amplitudes.  Additionally, since $\cP$ is defined to be a set of pairs, parallel propagators (propagators starting and ending on the same pair of edges) are already excluded from this formulation, but nothing of physical importance is lost as the admissibility restriction, Definition~\ref{admisdfn}, would have excluded them on physical grounds.

\begin{dfn} \label{VPropdfn}
Let $W = (\cP, V)$ be a Wilson loop diagram.
\begin{enumerate}
\item For $p = (i,j) \in \cP$, let $V(p) = \{i, i+1, j, j+1\}$ denote the set of {\bf vertices supporting} $p$. For a set of propagators $P \subseteq \cP$, define 
\[V(P) = \bigcup_{p \in P} V(p)\]
to be the {\bf vertex support} of $P$.
\item For $U \subseteq V$, the set of {\bf propagators supported on $U$} is denoted by
\[\Prop(U) = \{ p \in \cP\ |\ V(p) \cap U \neq \emptyset \}.\]
Note that a propagator in $\Prop(U)$ does not necessarily have its entire support contained in $U$. 
\item Vertices of $W$ that do not support any propagators are called {\bf non-supporting}.
\end{enumerate}
\end{dfn}

We will see in later sections that all four vertices supporting a propagator are important in understanding the contribution of that propagator to the objects represented by the diagram. This motivates the following definition of a subdiagram, which can consist of any subset $Q$ of the propagators and any subset $U$ of the vertices of $W$ but $U$ {\em must contain the support of $Q$}.

\begin{dfn} \label{subdiagramdfn}
Let $W = (\cP, V)$ be a Wilson loop diagram. Then $W' = (Q,U)$ is a {\bf subdiagram} of $W$ if $Q \subseteq \cP$ and $V(Q) \subseteq U \subseteq V$.
\end{dfn}

With these definitions in hand, we can impose some conditions on the density and behaviour of the propagators of Wilson loop diagrams.

\begin{dfn}\label{admisdfn}
A Wilson loop diagram $W = (\cP, V)$ is called {\bf admissible} if the following conditions hold.
\begin{enumerate}
\item $|V| \geq |\cP| + 4$.
\item For each non-empty subset of propagators, $Q \subseteq \cP$, we have $|V(Q)|\geq |Q| + 3$.
\item There are no crossing propagators. That is, there does not exist ${\{(i,j),(k,l)\} \subseteq \cP}$ such that $i < k < j < l$.
\end{enumerate}
 \end{dfn}

The first condition bounds the total number of propagators in the diagram, while the second limits how  densely  propagators  can  be  fitted  into  any  part  of  the  diagram.  The third condition simply requires that propagators are non-crossing.

All three of these conditions arise from the physical interpretation of Wilson loop diagrams.  The first condition comes from the fact that these diagrams represent $N^k$MHV interactions, where $k +4 <n$, see for instance the explanation in the introduction of \cite{Adamo:2011pr}.  The second condition can be derived from the integrals associated to the diagrams, as in \cite[Theorem 1.13]{wilsonloop}. Note that the second condition prohibits propagators that connect two adjacent edges. In our definition, $\cP$ is a set so we may not have two propagators starting and ending on the same pair of edges, but we have lost nothing of physical value by not allowing multiple propagators ending on the same pair of edges since these would also fail the second condition. The third condition comes from the fact that these Wilson loop diagrams are a representation of planar Feynman diagrams of $N=4$ SYM theory. For a description of the relationship between planar Feynman diagram and planar Wilson loops, see the discussion in \cite[section 3]{LipsteinMason}.

Note that a subdiagram $(Q,U)$ of an admissible Wilson loop diagram need not be admissible itself.  It inherits conditions 2 and 3 automatically, but we could have $|U| = |Q| + 3$. If a Wilson loop diagram satisfies conditions 2 and 3 of Definition~\ref{admisdfn}, we call it {\bf weakly admissible}.

These diagrams correspond to (tree level) amplitudes of SYM $N=4$ theory in twistor space. As such, a diagram associates a real value to $n$ particles, one for each vertex. Each particle is represented as a section of a $|\cP|$-vector bundle over twistor space, projected onto a real subspace in a process called bosonization \cite[section 2]{Amplituhedronsquared}. In other words, each Wilson loop diagram defines a functional on external particles. This is represented as an integral $I(W)$, which is the equivalent of a Feynman integral in this setting. This integral is built from two main components: an interaction matrix $C(W)$, and a volume form $\Omega(W)$. These are described next.

A $|\cP| \times n$ matrix $C(W)$ represents the (tree level) particle interactions. Each row of $C(W)$ corresponds to an internal propagator, and each column to an external particle (i.e. a vertex). The matrix $C(W)$ is defined as follows:

\begin{equation} C(W)_{p,q} = \begin{cases} x_{p,q} & \textrm{ if } q \in V(p) \\
0  & \textrm{ if } q \not \in V(p)  \end{cases}
\;. \label{C(W) dfn}\end{equation} 

The entries $x_{p,q}$ are nonzero real variables. Thus the zero entries in row $p$ correspond to the external particles that the propagator $p$ does not interact with. See Example~\ref{integraldetails} for an example of a diagram $W$ and its associated matrix.

The matrix $C(W)$ defines a positroid cell associated to $W$ in the positive Grassmannian $\Gr(n, |\cP|)$, which we denote by $\Sigma(W)$ \cite[Theorem 3.41]{wilsonloop}. Marcott \cite[Theorem 8.4]{WLDdim} showed that  the matrix $C(W)$ parametrizes a space whose closure is exactly the closure of $\Sigma(W)$. That is, the space defined by $C(W)$ and the cell $\Sigma(W)$ agree up to a set of measure $0$. Most of the work in this paper and in the previous paper in the series \cite{generalcombinatoricsI} focuses on characterizing the positroid cell associated to each Wilson loop diagram $W$. 

Note that since there is no ordering on the propagators of $W$, $C(W)$ is only defined up to rearrangement of the rows. When needed, we will impose various different orders on these rows to suit the task at hand. The choice of ordering does not affect $\Sigma(W)$.


One may also interpret the functional $I(W)$ as associating a volume to each cell $\Sigma(W)$ \cite{wilsonloop, Amplituhedronsquared, HeslopStewart}, where the volume is determined by the external particles involved. Specifically, the integrand is written in terms of the elements of $C(W)$ as  \begin{equation} \label{eq:omega(W) def}\Omega(W) = \frac{\prod_{r=1}^{|\cP|} \prod_{v \in V(p_r)} \textrm{d}x_{p_r}}{R(W)} \;, \end{equation} multiplied by a $\delta$ function that evaluates the matrix $C(W)$ on the external particles. The denominator $R(W)$ is given in Definition~\ref{def R(W)} below.

The differential form $\Omega(W)$ is playing a similar role to that of volume forms in the Amplituhedron literature.
For the purposes of this paper, we restrict our attention to $\Omega(W)$ only, and ignore the $\delta$ function in the integrand of $I(W)$. The interested reader is referred to \cite{Adamo:2012xe,HeslopStewart,LipsteinMason} for more information about the integrand of $I(W)$. 

For a fixed number of propagators and particles, the tree level amplitude of the Wilson loop diagrams is computed as the sum of the $I(W)$ for all $W$ of the correct size. By {\cite{Adamo:2011pv,Britto:2005fq,Arkani-Hamed:2013jha}} this sum is finite for all sets of external particles, i.e.
\bas \mathcal{A}_{k, n}^{tree} = \sum_{\substack{W = (\cP, [n]), \\ |\cP| = k}} I(W) <\infty. \eas

This would certainly hold if the singularities in the integrands $\Omega(W)$ cancel out in the sum.  The singularities coming from the zeros of $R(W)$ are the spurious poles of the theory, as mentioned in the introduction. The physical theory tells us that the spurious poles should cancel. 
Specifically, it is conjectured that these poles cancel on the boundaries of the positroid cells $\Sigma(W)$, analogous to the analysis in \cite{Arkani-Hamed:2013jha}. This has been checked for particular classes of examples (e.g. \cite{casestudy, Amplituhedronsquared, HeslopStewart}) but has not been proven in general.

With this motivation in mind, in Section~\ref{sec poles} of this paper we study the denominator $R(W)$ of $\Omega(W)$. The definition of $R(W)$ is as follows:

\begin{dfn}\label{def R(W)}
Let $W = (\cP,V)$ be an admissible Wilson loop diagram, and fix an edge $e$. Let $\{p_1 \ldots p_r\}$ be the propagators with one end lying on edge $e$ under a counterclockwise ordering. Define the polynomial
\[ R_e =  x_{p_1,e+1} \prod_{j= 1}^{r-1} \left((x_{p_j,e} x_{p_{j+1},e+1} - x_{p_{j+1},e} x_{p_{j},e+1} ) \right) x_{p_r,e}\;\]
to be the component of $R(W)$ associated to edge $e$. Note that if $r = 1$ (i.e. there is exactly one propagator lying on edge $e$) then this expression simplifies to $R_e = x_{p,e} x_{p,e+1}$. If $r=0$, set $R_e = 1$. The denominator $R(W)$ is defined to be the product of all of the $R_e$, i.e. 
\[R(W) = \prod_{e \in V} R_e.\]
\end{dfn}

This formulation for the denominator comes from the physics literature, see equation (2) in \cite{HeslopStewart}. In Section \ref{sec poles}, we give a geometric meaning to the factors $\{R_e\ |\ e \in V\}$ of $R(W)$.  Specifically, given a Grassmann necklace representation of the positroid cell $\Sigma(W)$ (see Definition~\ref{def:grassmann necklace} and Algorithm \ref{alg:put GN on WLD}), the $R_e$ are exactly the irreducible factors of the minors of $C(W)$ defined by the Grassmann necklace. In particular, the question of \emph{which} minors of $C(W)$ contribute to $R(W)$ is geometrically determined. 
 
We end this section by computing $C(W)$ and $R(W)$ for a particular Wilson loop diagram $W$.

\begin{eg} \label{integraldetails}
Consider the admissible Wilson loop diagram 
\[W = \big(\{(1,4), (2,4), (5,7), (5,8)\},[8]\big),\]
from Figure~\ref{fig:ex of WLD}. Ordering the propagators as listed above, we obtain the associated matrix
\[ C(W) = \left(
\begin{array}{cccccccc}
x_{1,1} & x_{1,2} & 0 & x_{1,4} & x_{1,5} & 0 & 0 & 0 \\
0 & x_{2,2} & x_{2,3} & x_{2,4} & x_{2,5} & 0 & 0 & 0 \\
0 & 0 & 0 & 0 & x_{3,5} & x_{3,6} & x_{3,7} & x_{3,8} \\
x_{4,1} & 0 & 0 & 0 & x_{4,5} & x_{4,6} & 0 & x_{4,8}  \\
\end{array}
\right) \;.\]

To calculate $R(W)$, we first compute the polynomials associated to each edge of $W$: \bmls R_1 = x_{1,1} x_{1, 2}\quad R_2 = x_{2,2} x_{2,3} \quad R_4 = x_{2,5} (x_{2,4}x_{1,5} - x_{2,5}x_{1,4})x_{1,4} \\ R_5 = x_{4,6} (x_{4,5}x_{3,6} - x_{3,5}x_{4,6})x_{3,5} \quad R_7 = x_{3,7} x_{3,8} \quad R_8 = x_{4,8} x_{4,1} \;.\emls All other $R_e$ are $1$.  Finally, the integrand associated to this Wilson loop diagram (recall equation~\eqref{eq:omega(W) def}) is \bas \Omega(W) = \frac{\textrm{d}x_{1,1}\textrm{d}x_{1,2}\textrm{d}x_{1,4}\textrm{d}x_{1,5}\textrm{d}x_{2,2}\textrm{d}x_{2,3}\textrm{d}x_{2,4}\textrm{d}x_{2,5}\textrm{d}x_{3,5}\textrm{d}x_{3,6}\textrm{d}x_{3,7}\textrm{d}x_{3,8}\textrm{d}x_{4,5}\textrm{d}x_{4,6}\textrm{d}x_{4,8}\textrm{d}x_{4,1}}{x_{1,1} x_{1, 2}x_{2,2} x_{2,3}x_{2,5} (x_{2,4}x_{1,5} - x_{2,5}x_{1,4})x_{1,4}x_{4,6} (x_{4,5}x_{3,6} - x_{3,5}x_{4,6})x_{3,5}x_{3,7} x_{3,8}x_{4,8} x_{4,1}} \;. \eas
\end{eg}

In order to compute the contribution of this diagram to the tree level scattering amplitude one needs only to localize the differential form $\Omega(W)$ with respect to a delta function that evaluates the matrix $C(W)$ at the external momenta.

A reader interested in studying these integrals further is invited to read \cite{casestudy, HeslopStewart, Amplituhedronsquared} in order to understand the connection to the scattering amplitude of $N=4$ SYM theory, the cancellation of spurious poles of specific instances of these integrals as well as explicit amplitude calculations.

\subsection{Positroids, Grassmann necklaces, and Le diagrams}\label{sec:positroid background}

We summarize here only the portion of positroid theory that we require in this paper; the interested reader is referred to \cite[sections 6, 16]{Postnikov} for more details. 

Let $\binom{[n]}{k}$ denote the set of all $k$-subsets of $[n]$.  

For our purposes, a {\bf positroid} is\footnote{Note that matroid isomorphism allows arbitrary permutations of the ground set but positroid isomorphism allows only cyclic permutations, so the property of being a positroid is not in general preserved by matroid isomorphism.} a matroid $M$ (with ground set $[n]$ and set of bases $\cB \subseteq \binom{[n]}{k}$) which can be represented by an element of the positive Grassmannian $\Gr(k,n)$. In other words, there exists a full-rank $k\times n$ real matrix whose $k\times k$ minors are all nonnegative and such that the minor comprised of the columns indexed by elements of $J$ 
is positive if and only if $J \in \cB$.

Postnikov showed in \cite{Postnikov} that the positroids of $\Gr(k,n)$ are indexed by many different collections of objects, each with their own advantages and disadvantages. The two most suited to our purposes are Grassmann necklaces \cite[section 16]{Postnikov} and Le diagrams \cite[section 6]{Postnikov}, which we discuss. In order to do so, we first need some preliminary definitions.

For each $j \in [n]$, we can define a total order $<_j$ on the interval $[n]$ by
\[ j <_j j+1 <_j \dots <_j n <_j 1 \dots <_j j-1\;.\]
This in turn induces a total order on $\binom{[n]}{k}$, namely the lexicographic order with respect to $<_j$.  It also induces a separate partial order $\gale{j}$ on $\binom{[n]}{k}$ (the {\bf Gale order} \cite{Gale}), which is defined as follows: for 
\[A = \{a_1 <_j a_2 <_j \dots <_j a_k\} \text{ and } B = \{b_1 <_j b_2 <_j \dots <_j b_k\} \in \binom{[n]}{k},\] we define
\[A \gale{j} B \text{ if and only if } a_r \leq_j b_r \text{ for all }1 \leq r \leq k.\]
For example, in $\binom{[6]}{3}$ we have $\{2,5,6\}\gale{2} \{2,6,1\}$ but $\{2,5,6\}\not\gale{2}\{3,4,6\}$.

\begin{dfn}\label{def:grassmann necklace}
A {\bf Grassmann necklace} of type $(k,n)$ is a sequence $(I_1, \dots, I_n)$ of $n$ sets $I_i \in \binom{[n]}{k}$ such that for each $i \in [n]$ the following statements are true.
\begin{itemize}
\item If $i \in I_i$, then $I_{i+1} = \big(I_i \backslash \{i\}\big) \cup \{j\}$ for some $j \in[n]$.
\item If $i \not\in I_i$, then $I_{i+1} = I_i$.
\end{itemize}
By convention, we set $I_{n+1} = I_1$.
\end{dfn}

By \cite[Theorem 17.1]{Postnikov}, the Grassmann necklaces of type $(k,n)$ are in 1-1 correspondence with the positroid cells in $\Gr(k,n)$. Each term $I_i$ is simply the minimal (with respect to the $<_i$-lex order) basis of the positroid. A characterization of all bases of the positroid in terms of the Grassmann necklace and the Gale order was given by Oh in \cite[Theorem 8]{Oh}: if $(I_1, \dots, I_n)$ is the Grassmann necklace associated to a positroid $M = ([n],\cB)$, then the bases of $M$ are exactly
\ba \cB = \left\{J \in \binom{[n]}{k}\ \Big|\ I_i \gale{i} J \ \ \forall i \in [n]\right\}. \label{basesofmatroids}\ea
Thus the Grassmann necklace is well suited to testing whether a given $k$-set is a basis for $M$, and for generating a list of all bases of $M$.

\begin{dfn}\label{def:le diagram}
A {\bf Le diagram} is a Young diagram in which every square contains either a $+$ or a $0$, subject to the rule that if a square contains a $0$ then either all squares to its left (in the same row) must also contain a $0$, or all squares above it (in the same column) must also contain a $0$, or both.
\end{dfn}

By \cite[Theorem 6.5]{Postnikov}, the set of all Le diagrams that fit within a $k\times(n-k)$ rectangle is in 1-1 correspondence with the positroid cells of $\Gr(k,n)$. Le diagrams are particularly useful for comparing dimensions of positroids, since the dimension of a positroid is equal to the number of $+$ squares in its Le diagram \cite[Theorem 6.5]{Postnikov}.

The rows and columns of a Le diagram are labelled as follows:  given a Le diagram fitting inside a $k\times (n-k)$ box, arrange the numbers $1,2, \dots, n$ along its southeast border, starting from the top-right corner. See Figure \ref{fig:row column numbering} for examples.

\begin{figure}
\[\begin{tikzpicture}
\node at (0,3.5) {}; 
\node at (0,-0.5) {}; 
\draw (1,0) grid (2,3);
\draw (2,1) grid (4,3);
\draw (4,2) grid (5,3);
\draw[dotted] (2,0) -- (5,0) -- (5,2);

\node at (5.15,2.5) {\footnotesize$1$};
\node at (4.5,1.85) {\footnotesize$2$};
\node at (4.15,1.5) {\footnotesize$3$};
\node at (3.5,0.85) {\footnotesize$4$};
\node at (2.5,0.85) {\footnotesize$5$};
\node at (2.15,0.5) {\footnotesize$6$};
\node at (1.5,-0.15) {\footnotesize$7$};

\begin{scope}[shift = {(7,0)}]
\draw (1,0) grid (2,3);
\draw (2,1) grid (4,3);
\draw (4,2) grid (5,3);
\draw[dotted] (2,0) -- (6,0) -- (6,3) -- (5,3);

\node at (5.5,2.85) {\footnotesize$1$};
\node at (5.15,2.5) {\footnotesize$2$};
\node at (4.5,1.85) {\footnotesize$3$};
\node at (4.15,1.5) {\footnotesize$4$};
\node at (3.5,0.85) {\footnotesize$5$};
\node at (2.5,0.85) {\footnotesize$6$};
\node at (2.15,0.5) {\footnotesize$7$};
\node at (1.5,-0.15) {\footnotesize$8$};
\end{scope}
\end{tikzpicture}
\]
\caption{Row and column numbering for a Young diagram with $k = 3$, $n = 7$ (left) and $k = 3$, $n = 8$ (right). The top left box has coordinates $(1,7)$ in the left diagram, and $(2,8)$ in the right diagram.}
\label{fig:row column numbering}
\end{figure}
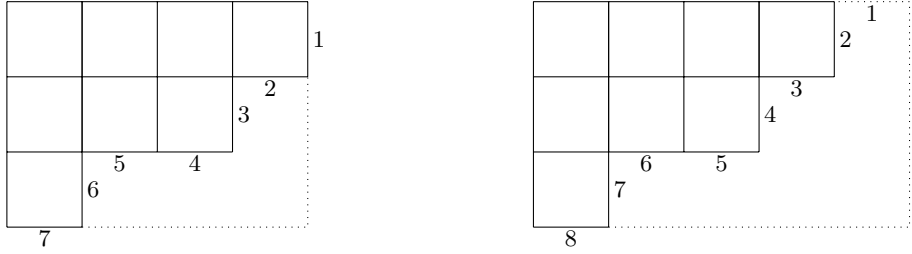

Specifying a $k$-subset $J \subseteq [n]$ therefore uniquely determines the shape of the Le diagram, by taking the elements of $J$ to be the row indices of the diagram.

An algorithm for constructing the Le diagram associated to a Grassmann necklace was given by Agarwala and Fryer in \cite{reversingOh}. Since we will make use of this algorithm in Section~\ref{sec dim}, we summarize the process here.
\begin{algorithm}\label{alg:GN to Le} \ \cite[Algorithm 2]{reversingOh}
Let $(I_1,\dots,I_n)$ be a Grassmann necklace of type $(k,n)$. Within a $k \times(n-k)$ square, draw the Young diagram whose rows are labelled by $I_1$ (as per the convention above).

For each $i$, $2 \leq i \leq n$ perform the following steps.
\begin{itemize}
\item Write \[I_1 \setminus I_i = \{a_1 > a_2 > \dots > a_r\}, \quad I_i \setminus I_1 = \{b_1 < b_2 < \dots < b_r\},\]
where the inequalities denote the $<_1$ order (subscripts suppressed for clarity).
\item For $1 \leq j \leq r$ , place a $+$ in square $(a_j,b_j)$ of the diagram. (We will refer to this $+$ as being {\em in the $a_j \rightarrow b_j$ position}.)
\end{itemize}
After performing the above for $2 \leq i \leq n$, place a $0$ in any remaining unfilled boxes.
\end{algorithm}

An algorithm for constructing the Grassmann necklace of a Le diagram also exists; this was given by Oh in \cite[section 4]{Oh}. A method for using the Le diagram to test whether a given $k$-subset is a basis for the corresponding positroid or not was given by Casteels in \cite{CasteelsPaths}.

\subsection{Wilson loop diagrams as positroids}\label{sec:WLD as positroids}

Let $W = (\cP,[n])$ be an admissible Wilson loop diagram with $k$ propagators, and $C(W)$ the associated matrix defined in equation~\eqref{C(W) dfn}. Let $M(W)$ be the matroid realized by $C(W)$, i.e. the matroid with ground set $[n]$ whose independent sets are exactly those sets $V \subseteq [n]$ such that the columns of $C(W)$ indexed by $V$ are linearly independent.

In \cite{wilsonloop}, Agarwala and Marin-Amat showed that $M(W)$ can also be read directly from the Wilson loop diagram itself.

\begin{thm} \label{thm WLD defines matroid} \cite[Theorem 3.6]{wilsonloop} The independent sets of the matroid $M(W)$ associated to an admissible Wilson loop diagram $W = (\cP,[n])$ and realized by $C(W)$ are exactly those subsets $V \subseteq [n]$ such that there is no subset $U \subseteq V$ satisfying $|\Prop(U)| < |U|$. \end{thm}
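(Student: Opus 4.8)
The plan is to prove this by identifying the matroid $M(W)$ with a transversal-type matroid and then invoking Hall's theorem. The matrix $C(W)$ has a very special block structure: row $p$ is supported exactly on the columns $V(p) = \{i, i+1, j, j+1\}$ for $p = (i,j)$. The key observation is that this makes $C(W)$ (after a suitable choice of generic entries) behave like the incidence matrix of the bipartite graph $G$ with parts $\cP$ and $[n]$, where $p \in \cP$ is joined to each $q \in V(p)$. Concretely, I would first argue that a set $V \subseteq [n]$ indexes linearly independent columns of $C(W)$ \emph{for generic $x_{p,q}$} if and only if the columns indexed by $V$ can be matched into the rows $\cP$, i.e. there is a matching in $G$ saturating $V$. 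The backward direction is the standard fact that a generic matrix whose support admits a column-saturating matching has those columns independent (expand a maximal minor along the matched entries; the matched term is a nonzero monomial that cannot cancel against others). The forward direction is the contrapositive: if no such matching exists then, by König/Hall, there is a set $U \subseteq V$ with $|\Prop(U)| < |U|$ — wait, one must be careful here, since the neighbourhood of $U$ in $G$ is exactly $\Prop(U)$ — and then the columns indexed by $U$ all live in the span of the $|\Prop(U)|$ rows indexed by $\Prop(U)$, hence are dependent, hence $V$ is dependent.

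So the core steps are: (1) set up the bipartite graph $G = G(W)$ with vertex set $\cP \sqcup [n]$ and edges $\{p, q\}$ for $q \in V(p)$, and note the neighbourhood identity $N_G(U) = \Prop(U)$ for $U \subseteq [n]$; (2) invoke Hall's marriage theorem in the form: there is a matching of $G$ saturating $V$ if and only if $|\Prop(U)| \ge |U|$ for all $U \subseteq V$; (3) prove the ``genericity'' lemma that the columns of $C(W)$ indexed by $V$ are linearly independent (over the field of rational functions in the $x_{p,q}$, equivalently for generic real values) if and only if $G$ has a matching saturating $V$. Combining (2) and (3) gives precisely the statement: $V$ is independent in $M(W)$ iff there is no $U \subseteq V$ with $|\Prop(U)| < |U|$. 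Since this theorem is quoted from \cite{wilsonloop} rather than proved here, I would also remark that admissibility is not actually needed for the combinatorial characterization — it only guarantees that the matroid has rank $|\cP|$ so that all of $C(W)$'s rows are used — but I would keep the hypothesis to match the cited statement.

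The main obstacle is step (3), and specifically the forward (``only if'') direction combined with the subtlety of what ``independent'' means over $\R$ versus over the rational function field. The cleanest route is to work over $\Q(x_{p,q})$: there, the minor of $C(W)$ on columns $V$ (assuming $|V| = |\cP|$; for smaller $V$ pad to a square submatrix by choosing rows) is nonzero iff its transportation-polytope-style expansion has a surviving term, and a surviving term corresponds exactly to a system of distinct representatives, i.e. a saturating matching. One has to check that distinct matchings give distinct (hence non-cancelling) monomials in the $x_{p,q}$ — this is immediate because the monomial records exactly which entry was chosen in each row, so it determines the matching. The only genuinely fiddly point is handling $|V| < |\cP|$: here ``$V$ indexes independent columns'' means the $|\cP| \times |V|$ submatrix has rank $|V|$, which happens iff some $|V| \times |V|$ submatrix (choosing $|V|$ of the rows) is nonsingular, and one argues this is equivalent to a matching saturating $V$ in $G$. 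Transferring from $\Q(x_{p,q})$ to generic real values is then routine: a nonzero polynomial in the $x_{p,q}$ does not vanish identically on $\R^{|\cP| \cdot n}$, and the $x_{p,q}$ in $C(W)$ are by hypothesis nonzero reals, so one only needs them to avoid a proper Zariski-closed set, which is exactly what ``realized by $C(W)$'' should be taken to mean (and is consistent with the ``up to measure zero'' caveat already in the paper). I would flag this genericity point explicitly rather than sweep it under the rug.
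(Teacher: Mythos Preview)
The paper does not actually prove this theorem; it is cited from \cite{wilsonloop} without proof. Your argument is correct and is the standard transversal-matroid proof: $M(W)$ is the transversal matroid of the set system $\{V(p):p\in\cP\}$, and Hall's theorem converts the matching characterization into the $|\Prop(U)|\geq|U|$ condition.

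The one comparison worth making is with the paper's treatment of the corollary immediately following the theorem (Corollary~\ref{lem basis as perm}). That corollary is precisely your step~(3), the matching characterization, and the paper's direct linear-algebra proof of it is identical to yours: distinct matchings give distinct monomials in the independent indeterminates $x_{p,q}$, so some maximal minor is nonzero iff a saturating matching exists. The paper derives the corollary from the cited theorem via ``induction in one direction and the pigeonhole principle in the other'', which is Hall's theorem in disguise; you run the logic the other way, proving the matching characterization first and then applying Hall to recover the theorem. Both routes are the same argument read in opposite directions.

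Your genericity concern is a fair flag, but in this paper the $x_{p,q}$ are treated as independent indeterminates throughout (see for instance the repeated use of ``the nonzero entries of $C(W)$ are independent indeterminates'' in Proposition~\ref{prop alg gives rad}), so $M(W)$ is the matroid over the rational function field and no measure-zero caveat is needed for the combinatorial statement itself.
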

In other words, the independent sets of $M(W)$ correspond to sets of vertices in $W$ in which no subset supports fewer propagators than the vertices it contains.

One useful corollary of this, which we will want to keep in mind later, is the following.
\begin{cor}\label{lem basis as perm}
Let $W = (\cP,[n])$ be an admissible Wilson loop diagram and let $M(W)$ be its associated matroid. A subset $J \subseteq [n]$ is an independent set of $M(W)$ if and only if there exists an injective set map $f : J \rightarrow \cP$ with the property that for each $j\in J$ we have $j \in V(f(j))$.
\end{cor}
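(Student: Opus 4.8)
The plan is to deduce this directly from Theorem~\ref{thm WLD defines matroid} together with Hall's marriage theorem. Recall that the theorem says $J \subseteq [n]$ is independent in $M(W)$ if and only if every subset $U \subseteq J$ satisfies $|\Prop(U)| \geq |U|$. So I must show this condition is equivalent to the existence of an injective map $f : J \to \cP$ with $j \in V(f(j))$ for all $j \in J$.

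First I would set up the relevant bipartite graph $G$: one side is $J$, the other side is $\cP$, and we join $j \in J$ to $p \in \cP$ by an edge exactly when $j \in V(p)$ (equivalently, when $p \in \Prop(\{j\})$). A function $f$ as in the statement is precisely a perfect matching of $J$ into $\cP$ in $G$, i.e. a matching saturating every vertex of $J$. By Hall's theorem, such a matching exists if and only if for every $U \subseteq J$, the neighbourhood $N_G(U)$ satisfies $|N_G(U)| \geq |U|$. So the only remaining point is to observe that $N_G(U) = \Prop(U)$: indeed $p \in N_G(U)$ iff there is some $j \in U$ with $j \in V(p)$, iff $V(p) \cap U \neq \emptyset$, which is exactly the definition of $\Prop(U)$. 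Chaining these equivalences gives: $J$ independent $\iff$ $|\Prop(U)| \geq |U|$ for all $U \subseteq J$ (by Theorem~\ref{thm WLD defines matroid}) $\iff$ $|N_G(U)| \geq |U|$ for all $U \subseteq J$ $\iff$ a saturating matching of $J$ exists (by Hall) $\iff$ the desired injective $f$ exists.

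I do not expect any genuine obstacle here; the statement is essentially a translation of Theorem~\ref{thm WLD defines matroid} into the language of systems of distinct representatives, and Hall's theorem does all the work. The only things to be careful about are trivialities: checking that the finite bipartite graph $G$ is the correct object (in particular that multi-edges are irrelevant since $\cP$ is a set), and noting that the edge condition ``$j \in V(p)$'' matches both the definition of $\Prop$ and the condition ``$j \in V(f(j))$'' in the statement verbatim. One could alternatively phrase the whole argument purely in terms of systems of distinct representatives without invoking $G$ explicitly, but the bipartite-graph formulation makes the appeal to Hall's theorem cleanest.
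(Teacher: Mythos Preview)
Your argument is correct. Setting up the bipartite graph on $J \sqcup \cP$ and invoking Hall's theorem is exactly the right way to turn the Hall-type condition of Theorem~\ref{thm WLD defines matroid} into the existence of the injective map $f$; your identification $N_G(U)=\Prop(U)$ is immediate from the definitions, and nothing further is needed.

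This matches one of the two approaches the paper sketches: the paper first remarks that the corollary follows from Theorem~\ref{thm WLD defines matroid} ``by induction in one direction and the pigeonhole principle in the other'', which is precisely Hall's theorem stated without the name. The paper then gives an alternative, more detailed argument that bypasses Theorem~\ref{thm WLD defines matroid} entirely and works directly with the matrix $C(W)$: since the nonzero entries are algebraically independent indeterminates, the columns indexed by $J$ are linearly independent if and only if some $|J|\times|J|$ minor has a nonzero term in its permutation expansion, i.e.\ if and only if there is an injection $f:J\to\cP$ picking out nonzero entries $x_{f(j),j}$, which is exactly the condition $j\in V(f(j))$. Your route has the advantage of cleanly isolating the purely combinatorial content (it would work for any transversal matroid presented by the family $\{\Prop(\{j\})\}_{j}$), while the paper's linear-algebra route has the advantage of being self-contained and not relying on the prior theorem.
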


\begin{proof}
  This follows as a corollary of the previous theorem by induction in one direction and the pigeonhole principle in the other direction.  

  Alternatively we can simply think about the condition of linear independence in the matrix $C(W)$, and so the corollary can also be proved directly by linear algebra and the definition of $C(W)$ as follows.

  Because the nonzero entries of $C(W)$ are independent indeterminates, $J$ is an independent set if and only if there is some choice of $|J|$ nonzero entries of $C(W)$ one in each row associated to an element of $J$ and each in different columns. Each entry in $C(W)$ identifies a propagator by the row of the entry and a vertex by the column of the entry.  The entry is nonzero if and only if the propagator is supported on that vertex.  Consequently, take a choice of $|J|$ nonzero entries of $C(W)$, one in each row associated to an element of $J$ and each in different columns.  Such a choice is equivalent to an assignment of the propagators of $J$ to supporting vertices so that no two are assigned to the same vertex.  Such an assignment of the propagators of $J$ to supporting vertices is exactly a map $f$ as described in the statement, hence proving the result.
\end{proof}

In particular, this corollary says that a subset $J$ of $[n]$ is a basis of $M(W)$ if and only if there is a set bijection between $J$ and $\cP$ with the property that for each $j\in J$ the propagator associated to $j$ under the bijection is supported on vertex $j$.

By relating the behaviour of the propagators in $W$ to the matroidal properties of $M(W)$, Agarwala and Marin-Amat also showed that $M(W)$ is a positroid whenever $W$ is admissible \cite[Corollary 3.39]{wilsonloop}. 

However, to go from the Wilson loop diagram $W$ to the positroid cell, one had to go through the matroid explicitly realized via $C(W)$, then make a list of all its nonzero $k\times k$ minors and extract the Le diagram or Grassmann necklace via the methods described in Section~\ref{sec:positroid background}. In \cite[section 4]{casestudy}, Agarwala and Fryer applied this process in the smallest non-trivial case: admissible Wilson loop diagrams with two propagators on six vertices. Even in this simple case, we see that the mapping from admissible Wilson loop diagrams to positroid cells is neither one-to-one nor onto, and the process described above makes it almost impossible to track the relationship between the original Wilson loop diagram and the resulting positroid.

Our first goal is therefore to find a better method of obtaining the positroid associated to a given Wilson loop diagram; this is the focus of the next section.

\begin{rmk}
While this paper primarily approaches proofs from a combinatorial point of view, readers with a background in matroid theory are encouraged to keep the matroidal viewpoint in mind throughout; we expect that the two approaches will be complementary and will build upon each other. See the first paper in this series \cite[section 3]{generalcombinatoricsI} for some related results concerning the positroids of Wilson loop diagrams obtained by viewing them primarily as matroids.
\end{rmk}

\section{Wilson loop diagrams and their Grassmann necklaces}\label{sec GN algorithm}

In this section we will give an algorithm (Algorithm~\ref{alg:put GN on WLD}) to go directly from a Wilson loop diagram to the Grassmann necklace of its positroid. 

This is a useful result in and of itself, as it greatly simplifies the process of identifying the positroid associated to a given Wilson loop diagram. We also use it to characterize loops and coloops in the positroid associated to a Wilson loop diagram (Corollary \ref{no coloops}).  The characterization of the coloops, in particular, was not at all clear from the definition of Wilson loop diagrams but is an easy corollary of the algorithm.

Furthermore, the algorithm allows us to see how the Le diagram associated to a Wilson loop diagram changes as one adds propagators in certain positions (Theorem \ref{thm dim}). Indeed, the techniques we develop in this section to verify that our algorithm produces the correct Grassmann necklace will also be used repeatedly in later sections to calculate the dimension of the associated positroid cell (Section \ref{sec dim}) and to study the structure of the poles of the associated integrand $\Omega(W)$ (Section \ref{sec poles}).

One of the key insights is that each element of the Grassmann necklace can be viewed as a function from the propagators of the diagram to the vertices in the Grassmann necklace element.  This is captured in Definition~\ref{def I_i as a function}, and used throughout the paper.

The flavour of the calculations is similar to \cite{CGdyck} as in both our paper and theirs\footnote{Thanks to a referee for pointing this paper out.} the bijections are very explicit with a similar iterative flavour using Le diagrams and Grassmann necklaces related to another class of objects, for us Wilson loop diagrams and for them unit interval orders.  For them, as us, the Le diagrams are particularly useful for understanding the dimensions of positroid cells since the dimension becomes the count of the number of plusses.  The ubiquitous Catalan numbers arise for them, as for us in \cite{generalcombinatoricsI}.  The specific subclasses of positroids we work with are, however, different.  This basic methodology of gaining insight from bijections is very common in combinatorics.



Throughout this section, $W = (\cP,[n])$ is an admissible Wilson loop diagram with $k$ propagators.

\subsection{Propagator configurations in admissible Wilson loop diagrams}\label{sec:propagator configs}

Before we can describe the algorithm for extracting the Grassmann necklace of $M(W)$ from the Wilson loop diagram $W$, we require some initial results about the behaviour of propagators in admissible Wilson loop diagrams.

\begin{dfn}\label{props inside p}
Let $W = (\cP, [n])$ be a weakly admissible Wilson loop diagram, with $p \in \cP$ supported on edges $i$ and $j$. Write $(p, <_i)$ to represent the same propagator directed from $i$ to $j$. Then the sets of propagators {\bf inside} and {\bf outside}  $(p,<_i)$ are defined to be
\begin{gather*}\cP_{in}(p,<_i) = \{ (k,l) \in \cP \ |\ i \leq_i k <_i l \leq_i j \}, \\
\cP_{out}(p,<_i) = \cP\setminus \cP_{in}(p,<_i),
\end{gather*}
respectively.
\end{dfn}
Note that we have $p\in \cP_{in}(p, <_i)$ and so $\cP_{in}(p, <_i) = \cP_{out}(p, <_j)\cup\{p\}$.

\begin{eg}
Let $W$ be the Wilson loop diagram from Figure \ref{fig:ex of WLD}, reproduced here for convenience.  \[W\ =\ \begin{tikzpicture}[rotate=67.5,baseline=(current bounding box.east)]
	\begin{scope}
	\drawWLD{8}{2}
	\drawnumbers
	\drawlabeledprop{1}{0}{4}{0}{\footnotesize \qquad \ $p$}
	\drawprop{2}{0}{4}{-1}
    \drawprop{5}{0}{8}{0}
    \drawprop{5}{1}{7}{0}
		\end{scope}
	\end{tikzpicture}\]

Let $p$ be the propagator $p = (1, 4)$. Then $(p,<_4)$ indicates that we are viewing $p$ as being directed from 4 to 1, and we have
\begin{gather*}\cP_{in}(p, <_4) = \{(1,4), (5, 8), (5, 7) \};\\
\cP_{out}(p, <_4) = \{(2,4) \}. \end{gather*}
\end{eg}

\begin{dfn}
Let $p = (i,j) \in \cP$ be a propagator in $W$.  Define the {\bf length} of $p$ to be 
\[\ell(p) \  =\ \min\big\{|[i+1,j]|,|[j+1,i]|\big\}.\]
\end{dfn}
In other words, $\ell(p)$ is the size of the smaller of the sets of vertices on either side of the propagator $p$.

\begin{rmk}\label{rem:props of length 2 and 3} 
  The following observations about configurations of propagators of short length in a weakly admissibly Wilson loop diagram $W$ are easily verified.
  \vspace{-0.5em}
 \begin{enumerate}
\item If $p = (i,i+3)$ is a propagator of length 3, then the middle vertex $i+2$ supports at most one propagator.\item If every vertex in $W$ supports at least one propagator, then $W$ admits at least one propagator of length 2.  By the same reasoning, if $p=(i,j)$ is a propagator of $W$ and every vertex in the interval $[i,j]$ supports at least one propagator, then there is at least one propagator of length $2$ in $\mathcal{P}_{in}(p, <_i)$.
\end{enumerate}
\end{rmk}

\begin{figure}
\[
\begin{tikzpicture}[scale = 0.8,baseline=(current bounding box.north)]
  \begin{scope}
  \drawWLDfragment[7]{3}{0.4} 
  \centerarc[linehighlight](0,0)(\zero + 1.48*\step:\zero+6.64*\step:\radius)
  \newnode{1}{}
  \newnode{3.55}{}
  \newnode{4.45}{}
  \newnode{7}{}
    \begin{scope}
    \clipcenterarc(0,0)(\startpoint:\endpoint:\radius)
  \newpropbend{1.3}{6.7}{20}
  \newpropbend{2.9}{5.1}{45}
  \end{scope}
  \node[align=center] at (\zero + 4*\step:\radius*1.43) {\footnotesize \em $\leq 6$ vertices in the highlighted \\ \footnotesize \em region and no other propagators\\ \footnotesize \em in the highlighted region};
\end{scope}

  \begin{scope}[shift = {(9,0)}]
  \drawWLDfragment[7]{3}{0.4} 
  \centerarc[linehighlight](0,0)(\zero + 3.67*\step:\zero+4.47*\step:\radius)
  \newnode{1}{i}
  \newnode[below]{2}{i+1\ \ \ }
  \newnode[below]{3}{i+2}
  \newnode[below]{5}{j+1}
  \newnode[below]{6}{\quad j+2}
  \newnode[right]{7}{j+3}
      \begin{scope}
    \clipcenterarc(0,0)(\startpoint:\endpoint:\radius)
  \newpropbend{1.35}{3.6}{55}
  \newpropbend{4.4}{6.6}{55}
  \end{scope}
  \node[align=center] at (\zero + 4*\step:\radius*1.5) {\footnotesize \em $\leq 2$ vertices in the highlighted \\ \footnotesize \em region, and  no other propagators \\ \em \footnotesize end in the highlighted region};
\end{scope}
\end{tikzpicture}
\]
\caption{The two configurations described in Lemma~\ref{lem sian}; every weakly admissible Wilson loop diagram with no non-supporting vertices admits at least one example of at least one of these configurations.}\label{fig: small configs}
\end{figure}
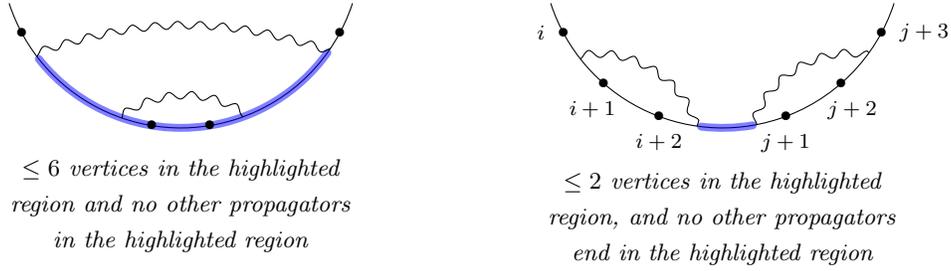

The following lemma establishes certain configurations of propagators that must exist in any admissible diagram with no non-supporting vertices. We make use of this result in several induction proofs below. 

\begin{lem}\label{lem sian}
  Let $W$ be a weakly admissible Wilson loop diagram with at least 5 vertices and in which each vertex supports at least one propagator.  Then at least one of the following two conditions holds.
  \vspace{-0.5em}
  \begin{enumerate}
    \item The diagram $W$ has a propagator of length at most 6 with a propagator of length 2 on one side of it and nothing else on that side.\label{item big and 2}
    \item There exists a pair of propagators of length $2$ with the property that the first propagator is $(i, i+2)$, the second is $(j, j+2)$, no other propagator has an end between vertices $i+2$ and $j+1$, and $j\in\{i+2, i+3, i+4\}$.\label{item pair of 2s}
  \end{enumerate}
See Figure~\ref{fig: small configs} for an illustration of the two configurations.
\end{lem}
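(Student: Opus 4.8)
The plan is to argue by induction on the number of propagators $k$ of $W$, peeling off a propagator of length $2$ at each step. First I would dispose of the base cases: if $k=1$ then the single propagator must itself have length $2$ (since every vertex is supporting and $|V|\geq 5$ forces the propagator to be short on one side, and length $1$ is forbidden by weak admissibility), and it trivially realizes configuration~\eqref{item big and 2} with itself playing the role of the length-$2$ propagator and the other side being empty — or more carefully we should check small $|V|$ by hand. For the inductive step, by Remark~\ref{rem:props of length 2 and 3}(2) there exists at least one propagator $q=(i,i+2)$ of length $2$ in $W$. Fix such a $q$ chosen, say, minimal in some cyclic sense, and look at the vertex $i+1$ between its two supporting edges: since $q$ has length $2$, no propagator other than $q$ can have an end strictly between edges $i+1$ and $i+1$, i.e. $i+1$ is "trapped" by $q$.

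The key dichotomy is whether the vertex $i+1$ supports only $q$ or also supports another propagator. Suppose first $i+1$ supports only $q$. Then consider the diagram $W' = (\cP\setminus\{q\}, [n]\setminus\{i+1\})$ obtained by deleting $q$ and the now-possibly-non-supporting vertex $i+1$; one checks $W'$ is still weakly admissible (conditions 2 and 3 of Definition~\ref{admisdfn} are inherited by subdiagrams) and still has every remaining vertex supporting a propagator — here we must be careful that edges $i$ and $i+2$ might lose support, and handle the case where deleting $q$ leaves $i$ or $i+2$ non-supporting by instead absorbing those vertices too, or by arguing $W'$ has $\geq 5$ vertices. Applying the inductive hypothesis to $W'$ gives one of the two configurations in $W'$; I would then argue that re-inserting $q$ and $i+1$ does not destroy it, since $q$ is a length-$2$ propagator sitting "alone" in a short arc and re-inserting an interior non-supporting vertex only lengthens arcs by a controlled amount — this is where the length bounds $\le 6$ and $j\in\{i+2,i+3,i+4\}$ in the statement are exactly calibrated to absorb the damage. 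The second case, where $i+1$ supports a second propagator $q'=(a,b)$ with $a$ or $b$ equal to $i+1$ (as an edge): non-crossing and the fact that $q$ has length $2$ force $q'$ to be nested with $q$ sharing edge $i+1$, and then a short direct analysis of how $q'$ and $q$ sit, together with looking just outside $q'$, produces either configuration~\eqref{item pair of 2s} (if $q'$ is also short, giving a second length-$2$ propagator near $q$) or configuration~\eqref{item big and 2} (with $q'$, or a slightly larger propagator enclosing the region, as the "big" propagator of length $\le 6$ and $q$ as the length-$2$ propagator beside it).

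I expect the main obstacle to be the bookkeeping in the inductive step: verifying that after deleting a length-$2$ propagator (and the trapped interior vertex) the resulting subdiagram still satisfies the hypotheses of the lemma — in particular that no vertex becomes non-supporting in a way that breaks the argument — and then verifying that re-inserting the propagator preserves whichever of the two configurations the inductive hypothesis produced, with the precise numerical bounds ($\le 6$ vertices, $\le 2$ vertices, $j\in\{i+2,i+3,i+4\}$) surviving. These bounds are not arbitrary: each is the worst case of "starting configuration of size $s$, plus at most one re-inserted interior vertex, plus the freshly restored propagator $q$ adjacent to it", so the proof amounts to a careful case check that $s$ plus the inserted material never exceeds the stated bound. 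The non-crossing condition (Definition~\ref{admisdfn}(3)) is doing a lot of work throughout, since it is what forces the second propagator through $i+1$ to nest rather than cross $q$, keeping all the relevant regions short and the analysis finite.
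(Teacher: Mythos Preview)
Your approach is genuinely different from the paper's and, as written, has a real gap.

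The paper does \emph{not} induct on the number of propagators.  Instead it first disposes of the case where $W$ has a propagator of length $3$ (Remark~\ref{rem:props of length 2 and 3} then forces a unique length-$2$ propagator under it, giving configuration~\ref{item big and 2} immediately), and then, assuming no length-$3$ propagators, runs a \emph{descent} argument: pick a length-$2$ propagator $p_1$; if it is not part of either configuration, the hypotheses force a propagator $q_1$ of length $\geq 4$ with $p_1$ just outside it; restrict attention to $\cP_{in}(q_1)$ and repeat.  The nested regions strictly shrink, so the process terminates in one of the two configurations.  This never leaves the class of diagrams satisfying the hypotheses, so there is no re-insertion step to control.

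Your induction-by-deletion scheme, by contrast, does leave the class.  Once you remove $q=(i,i+2)$ and the vertex $i+1$, the endpoints $i$ and $i+2$ (and even $i+3$) can become non-supporting, and you correctly note that you would then have to absorb them too.  But now re-insertion may add back two or three vertices, not one, and the ``exactly calibrated'' claim is no longer obvious: a configuration-\ref{item big and 2} propagator of length $6$ in $W'$ can become length $7$ or more in $W$, with $q$ appearing as a \emph{second} length-$2$ propagator inside it, so you do not recover configuration~\ref{item big and 2} at all.  One can check that in such cases the covering hypothesis (every vertex supports something in $W$) forces $q$ and the other length-$2$ propagator close enough to yield configuration~\ref{item pair of 2s} instead, but this is a genuine case analysis you have not carried out, and it is exactly the work the paper's descent argument sidesteps.

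Two smaller points.  First, your base case $k=1$ is vacuous: a single propagator supports only $4$ vertices, so with $|V|\geq 5$ some vertex is non-supporting.  Second, in your ``$i+1$ supports a second propagator'' branch you write $q'=(a,b)$ with $a$ or $b$ equal to $i+1$ as an edge, but no propagator can end on edge $i+1$ without crossing $q$ or violating weak admissibility; any second propagator through vertex $i+1$ must share edge $i$ with $q$.
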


\begin{proof}
Suppose first that $W$ has a propagator of length $3$, say $p=(i, i+3)$.  By Remark~\ref{rem:props of length 2 and 3} and the fact that every vertex of $W$ supports at least one propagator, we have that $i+2$ supports exactly one propagator and this propagator must have length 2 by noncrossingness.  This gives us an instance of configuration~\ref{item big and 2} from the statement.

Now suppose $W$ has no propagators of length $3$.
  
We will inductively construct sequences of propagators $(p_r)$ and $(q_r)$, with $\ell(p_r) = 2$ for each $r$ and such that either of the following are true:
\begin{itemize}
\item the propagator $p_r$ forms part of configuration \ref{item big and 2} or \ref{item pair of 2s} from the statement, at which point the induction terminates, 
{\em or} 
\item there is a propagator $q_r$ satisfying $\ell(q_r) \geq 4$, and $\{p_1, \ldots, p_r, q_1, \ldots, q_{r-1}\}$ are all on the same side of $q_r$.  
\end{itemize}

We can choose the orientation of $q_r$ so that the previous $p_i$ and $q_i$ are all on the outside of $q_r$, and restrict our attention to the subdiagram $(\cP_{in}(q_r),[n])$. By the finiteness of $W$ this must eventually terminate in one of the desired configurations. 

Start by choosing a propagator $p_1 = (j_1,j_1+2)$ of length 2 in $W$ (which exists by Remark~\ref{rem:props of length 2 and 3}).  If it is part of one of the configurations we are looking for we are done, 
so suppose otherwise. 

From our existing assumptions, we know the following facts: $p_1$ is not in configuration \ref{item pair of 2s}, there are no propagators of length $3$, and every vertex supports at least one propagator.  Therefore, there must be a propagator of length at least 4 with one end on edge $j_1$ or on edge $j_1-1$ or on edge $j_1-2$.  Let $q_1 = (i_1, k_1)$ be this propagator, oriented such that $p_1 \in \cP_{out}(q_1,<_{i_1})$; see Figure~\ref{fig base case lem sian}. This completes the base case.

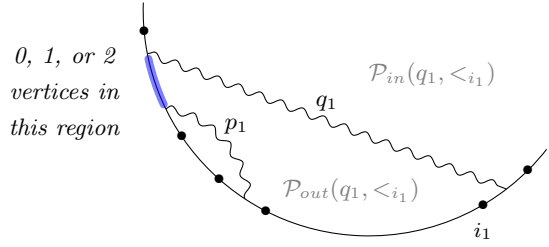
\begin{figure}[h]
\[
\begin{tikzpicture}[baseline=(current bounding box.east),rotate=-20]
  \begin{scope}
  \drawWLDfragment[10]{3}{0.4} 
  \centerarc[linehighlight](0,0)(\zero + 1.5*\step:\zero + 2.35*\step:\radius)
  \newnode{1}{}
  \newnode{3}{}
  \newnode{4}{}
  \newnode{5}{}
  \newnode[below]{9}{i_1}
  \newnode{10}{}
  \newprop[midway,above]{1.4}{9.5}{\footnotesize $q_1$}
  \begin{scope}
    \clipcenterarc(0,0)(\startpoint:\endpoint:\radius)
    \newpropbend{2.3}{4.7}{60}
    \node at (\zero + 3.5*\step:\radius*0.78) {\footnotesize $p_1$};
  \end{scope}
\node[align = center] at (\zero +1.7*\step:\radius*1.4) {\footnotesize \em 0, 1, or 2 \\ \footnotesize \em vertices in \\ \footnotesize \em this region};
\node[gray] at (\zero + 10*\step:\radius*0.4) {\footnotesize$\cP_{in}(q_1,<_{i_1})$};
\node[gray] at (\zero + 6.5*\step:\radius*0.8) {\footnotesize $\cP_{out}(q_1,<_{i_1})$};
\end{scope}
\end{tikzpicture}
\]
\caption{The base case of the induction in Lemma~\ref{lem sian}.}\label{fig base case lem sian} \end{figure}

Now suppose $q_{r} = (i_r, k_r)$ exists by the induction hypothesis and is oriented $(q_r, <_{i_r})$ so that the previous $p_i$ and $q_i$ are on the outside. For the rest of this proof, we assume this orientation and drop the $<_{i_r}$ from the notation.

Let $W_r := (\cP_{in}(q_r),[n])$ be the subdiagram of $W$ consisting of those propagators inside $q_r$ (including $q_r$ itself); in particular $W_r$ contains none of the previous $p_i$ or $q_i$. By the original hypotheses on $W$, every vertex in $[i_r,k_{r+1}]$ (of which there are at least 4, since $\ell(q_r) \geq 4$) must support at least one propagator. Therefore, $W_r$ admits at least one propagator of length 2 (by Remark~\ref{rem:props of length 2 and 3}) and no propagators of length $3$ (by our assumption on $W$).

Let $p_{r+1} = (j_{r+1}, j_{r+1}+2)$ be a propagator of length 2 in $W_r$.  If it forms part of configuration \ref{item big and 2} or \ref{item pair of 2s}
we are done, so assume otherwise. 

Note that we could replace $q_r$ by another propagator $q'_{r}$ in $W_r = (\cP_{in}(q_r), [n])$, as long as $p_{r+1}$ and $q_r$ are on opposite sides of $q_r'$.  Such a new $q'_{r}$ would still satisfy all of the induction hypotheses. (See Figure~\ref{fig smallest q in lem sian}.) Therefore, without loss of generality we may assume that $\cP_{in}(q_r)$ does not contain any propagators (other than $q_r$ itself) which satisfy the induction hypotheses, i.e. we assume that there are no other propagators with $p_{r+1}$ on one side and $q_r$ on the other.

\begin{figure}[h]
\vspace{1cm}
\[
\begin{tikzpicture}[baseline=(current bounding box.east)]
	\begin{scope}
	\drawWLDfragment[15]{4}{0.4} 
	\newnode[left]{1}{i_r}
	\newnode[left]{2}{i_r+1}
	\newnode{5}{}
	\newnode{6}{}
	\newnode{7}{}
	\newnode{8}{}
	\newnode[right]{14}{k_r}
	\newnode[right]{15}{k_r+1}
	\newnode[below]{5}{j_{r+1}}
	\newprop[midway,above]{1.5}{14.5}{\footnotesize $q_r$}
	\draw[smallpropagator,gray] (\zero+2.5*\step:\radius) -- (\zero + 12.5*\step:\radius) node[midway,above,gray] {\footnotesize $q_{r}'$};
	\begin{scope}
		\clipcenterarc(0,0)(\startpoint:\endpoint:\radius)
		\newpropbend{5.2}{7.7}{80}
		\node at (\zero + 6.5*\step:\radius*0.85) {\footnotesize $p_{r+1}$};
	\end{scope}
\end{scope}
\end{tikzpicture}
\]
\caption{We can replace $q_r$ with $q'_r$ as long as $q_r$ and $p_{r+1}$ are on opposite sides of $q'_r$.}
\label{fig smallest q in lem sian}
\end{figure}
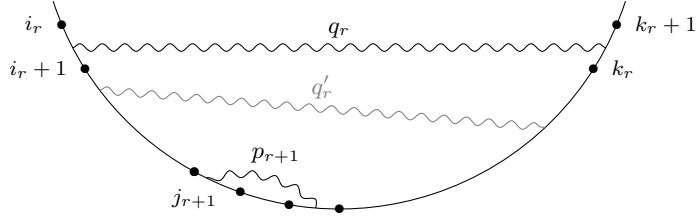

There are two cases to consider.

The first case is that $i_r \leq j_{r+1} \leq i_r+2$ and $k_r-2 \leq j_{r+1}+2 \leq k_{r}$ (so $p_{r+1}$ has one end before $i_{r} + 3$ and the other after $k_{r} - 2$, i.e. there are at most two vertices between an endpoint of $p_{r+1}$ and an endpoint of $q_r$).  Since $p_{r+1}$ has length 2, it must be that $(i_r+3)+1 \geq k_r-2$ and so $q_r$ has length at most 6.  By the minimality assumption on $q_r$, no propagator in $W_r$ has $q_r$ on one side and $p_{r+1}$ on the other side. Therefore, any propagator in $W_r$ other than $q_r$ and $p_{r+1}$ must be of the form $(i,j)$ with $i_r\leq i < j\leq i_r+2$ or $k_r-2 \leq i < j\leq k_r$. Both of these require that $(i,j)$ has length 2, and so we have configuration \ref{item pair of 2s} which we have already assumed does not occur.  Consequently, $W_r$ contains only the two propagators $q_r$ and $p_{r+1}$, yielding configuration \ref{item big and 2} from the lemma statement and again contradicting our assumption. Therefore, this first case cannot occur.

The second case is that there are at least $3$ vertices between one endpoint of $p_{r+1}$ and the nearest endpoint of $q_r$. In other words, at least one of the following four statements are true:
\begin{itemize} 
\item $i_r+3 \leq j_{r+1}\leq k_r-3$ (one endpoint of $p_{r+1}$ has at least 3 vertices between it and the nearest endpoint of $q_r$), or
\item $i_r+3 \leq j_{r+1}+2\leq k_r-3$ (the other endpoint of $p_{r+1}$ has at least 3 vertices between it and the nearest endpoint of $q_r$), or
\item $i_r = j_{r+1}$ ($p_{r+1}$ and $q_r$ both have an end on edge $i_r$), or 
\item $k_r = j_{r+1} +2$  ($p_{r+1}$ and $q_r$ both have an end on edge $k_r$).
\end{itemize}
In other words, at least one end of $p_{r+1}$ is supported entirely by vertices in the interval $[i_r+3,k_r+2]$ (the first two bullet points) or both ends lie in $[i_r, i_r+3]$ (third point) or both ends lie in $[k_r-2, k_r+1]$ (fourth point). These situations all behave similarly, and by symmetry it suffices to only consider the second and third possibilities. 

We have $j_{r+1}+4 \leq k_r-1$ in both of these cases.  In the situation of the second bullet point this follows directly from the inequality, while in the third case it follows from the fact that $\ell(p_{r+1}) = 2$ and $\ell(q_r) \geq 4$. Therefore, the vertex $j_{r+1}+4$ is in the interval $[i_r, k_r+1]$ but does not support either $q_r$ or $p_{r+1}$, and must therefore support some other propagator $t$.  Since $p_{r+1}$ is not part of configuration \ref{item pair of 2s} and $W_r$ has no propagators of length 3, it follows that $\ell(t) \geq 4$.  If $q_r$ and $p_{r+1}$ were on different sides of $t$, this would contradict the minimality of $q_r$.  Therefore, we can orient $t$ so that all previous $q_i$ and $p_i$ belong to $\cP_{out}(t)$, and set $q_{r+1} = t$ to continue the induction.

The overall result follows by induction.
\end{proof}

\begin{rmk}
In the case that all vertices of an admissible Wilson loop diagram $W$ support at least two propagators, Lemma~\ref{lem sian} substantially simplifies.  By Remark~\ref{rem:props of length 2 and 3}, $W$ has no propagators of length $3$.  Configuration \ref{item big and 2} necessarily entails vertices with support $1$ as does configuration \ref{item pair of 2s} unless $j=i+2$.  So in the case that $W$ has all vertices with support at least two, $W$ must contain a pair of propagators of length $2$  with the property that the first propagator is $(i, i+2)$, the second is $(i+2, i+4)$ and no other propagator ends on the edge $i+2$.
\end{rmk}

\subsection{From Wilson loop diagrams to Grassmann necklaces}\label{sec:GN alg}

In this section, we give an algorithm for passing directly from a Wilson loop diagram to its Grassmann necklace. This not only greatly simplifies the previously known process to obtain a positroid cell from a Wilson loop diagram, but also allows us to relate the behaviour of the positroid $M(W)$ directly to the configuration of propagators in $W$. In Section~\ref{sec poles} we use these results to show that the denominator $R(W)$ of the integrand associated to a Wilson loop diagram can be written in terms of the Grassmann necklace of said diagram. 

The fact that Algorithm~\ref{alg:put GN on WLD} does construct the required Grassmann necklace is proved in Theorem~\ref{res:alg gives GN} and Theorem~\ref{res alg gives correct GN}. A worked example of Algorithm~\ref{alg:put GN on WLD} is given in Example~\ref{eg:apply GN alg}.

\begin{algorithm}\label{alg:put GN on WLD}
Let $W = (\cP, [n])$ be an admissible Wilson loop diagram. In order to construct the set $I_i^W$ for $i \in [n]$, perform the following steps.

\begin{enumerate}
\item Fix a vertex $i \in [n]$, the {\bf starting vertex}. Set $j:=i$ and $I_i^W = \emptyset$.
\item While $\cP \neq \emptyset$, perform the following steps.
\begin{enumerate}
\item {\em Step $j$ for starting vertex $i$}: If $\Prop(j) = \emptyset$, do nothing. Else, let $p \in \Prop(j)$ be the clockwise-most propagator supported on $j$. Write $I_i^W = I_i^W\cup \{j\}$, and delete propagator $p$ from the diagram.
\item Increment $j$ by 1 and repeat from (a).
\end{enumerate}
\end{enumerate}
\end{algorithm}

Informally, one starts at $i$ and moves counterclockwise around the vertices of $W$, at each step removing the clockwise-most propagator supported on that vertex (if it exists). The set $I_i^W$ lists the vertices at which a propagator was deleted.

If the algorithm assigns vertex $j$ to propagator $p$ from starting vertex $i$, we say that $p$ \textbf{contributes} $j$ to $I_i^W$. Notationally, we represent this by allowing the $I_i^W$ symbol to represent a function as well as a set, as shown defined below.
\begin{dfn}\label{def I_i as a function}
Let $W = (\cP, [n])$ be an admissible Wilson loop diagram. For each $i \in [n]$, define a function $I_i^W : \cP \longrightarrow [n]$ by
\[I_i^W(p) := \text{the vertex label that $p$ contributes to $I_i$ in Algorithm~\ref{alg:put GN on WLD},}\]
for each $p \in \cP$. 
\end{dfn}

For both $I_i^W$ the set and the function, when the diagram $W$ is clear from context we drop the superscript and simply write $I_i$.

\begin{rmk}
Given $W=(\cP, [n])$ and $i\in [n]$, the order in which the propagators contribute to the algorithm during the construction of $I_i$ defines an order on $\cP$.

Since each propagator is removed by the algorithm after it contributes, each propagator contributes exactly once to $I_i$, making the above ordering well defined.
Note that different $I_*$ impose different orderings on the propagators. Furthermore, we will see below that this ordering respects the ordering of vertices in $I_i$, i.e. if $p_j$ occurs before $p_l$ in the ordering imposed by $I_i$, then we also have $I_i(p_j) <_i I_i(p_l)$. This will follow directly from Corollary \ref{GN alg well defined}.
\end{rmk}

\begin{eg}\label{eg:apply GN alg}Consider the admissible Wilson loop diagram
\[W\ =\ \begin{tikzpicture}[rotate=67.5,baseline=(current bounding box.east)]
	\begin{scope}
	\drawWLD{8}{1.5}
	\drawnumbers
	\drawlabeledprop{1}{0}{4}{0}{\footnotesize $q$ \quad}
	\drawlabeledprop{2}{0}{4}{-1}{\footnotesize  $p$}
    \drawlabeledprop{5}{0}{8}{0}{\footnotesize $r$ \ \ }
    \drawlabeledprop{5}{1}{7}{0}{\footnotesize \quad \ $s$}
		\end{scope}
	\end{tikzpicture}\]

To construct $I_1$ we start at vertex 1, so we set $i=1$  and $I_1 = \emptyset$.
\begin{itemize}
\item Begin at $j = 1$. Since $\Prop(1) = \{q,r\}$ and $r$ is the clockwise-most of these propagators, $r$ contributes at this step. We let $I_1 = \{1\}$ and remove propagator $r$ from the diagram. 
\item Set $j = 2$. $\Prop(2) = \{p,q\}$, with $q$ being the clockwise-most of these propagators, so $q$ contributes at this step. We now have $I_1 = \{1,2\}$ and propagator $q$ is removed.
\item Set $j = 3$. $\Prop(3) = \{p\}$, so we have $I_1 = \{1,2,3\}$ and we remove propagator $p$. The only remaining propagator in the diagram is $s$.
\item Set $j = 4$. Since $p$ and $q$ were removed in earlier steps, we now have $\Prop(4) = \emptyset$.
\item Set $j = 5$. $\Prop(5) = \{s\}$, and we have $I_1 = \{1,2,3,5\}$.
\end{itemize}
There are no propagators left in the diagram, so the algorithm terminates and we have ${I_1 = \{1,2,3,5\}}$, which we write more compactly as $1235$. 
$I_1$ viewed as a function is given by
\[I_1: \cP \longrightarrow [n] : \quad I_1(r) = 1,\quad I_1(q) = 2, \quad I_1(p) = 3, \quad I_1(s) = 5.\]
Applying the algorithm for all 8 starting vertices, we obtain the sets
\[1235, 2356, 3456, 4567, 5671, 6712, 7812, 8123.\]
The reader can easily verify that this sequence of $k$-sets satisfies the definition of a Grassmann necklace, and can (with significantly more work) also verify that this Grassmann necklace defines the positroid $M(W)$ associated to the Wilson loop diagram $W$ above. We will prove this in general shortly.
\end{eg}

%


\begin{rmk} \label{clockwise ordering rem}
Suppose that we have $I_i(p) = a$ for some propagator $p$ and vertex $a$. Then $p$ must be the clockwise-most propagator supported on $a$ that has not yet contributed to $I_i$ when the algorithm reaches vertex $a$. In particular, any propagator supported on $a$ that appears later in the ordering given by $I_i$ must be inside $(p, <_{a-1})$.
\end{rmk} 

\begin{rmk}\label{rmk algorithm locally same}
The following observation about the local behaviour of the algorithm will be very useful in many of the arguments below.

Suppose we are going through the algorithm to construct $I_i^W$ for some $i$, and we are currently at step $j$ of the loop. Consider the diagram $W'$ formed by removing from $W$ all propagators that have already contributed to $I_i^W$. (This replicates the behaviour of the algorithm, which removes each propagator once it contributes to $I_i^W$.)  Now suppose for some other diagram $V$ on the same vertex set we are constructing $I_m^V$ for some $m$, and we are at step $j$ of the loop, for the same $j$ as above.  Again obtain $V'$ from $V$ by removing all propagators assigned so far.

With this setup, if we can find a vertex $\ell$ occurring after both $i$ and $m$ such that the \emph{configuration} of propagators ending in the interval $[j-1, \ell+1]$ is identical in both $W'$ and $V'$, then the vertices contributed to $I_i^W$ and $I_m^V$ are the same from step $j$ to step $\ell$ (inclusive). See Figure~\ref{fig:locally identical} for an illustration of this.

Furthermore, if the propagators of $W'$ and $V'$ ending in the interval $[j-1, \ell+1]$ are also identically labelled, then $I_i^W$ and $I_m^V$ are the same as functions when restricted to those propagators.

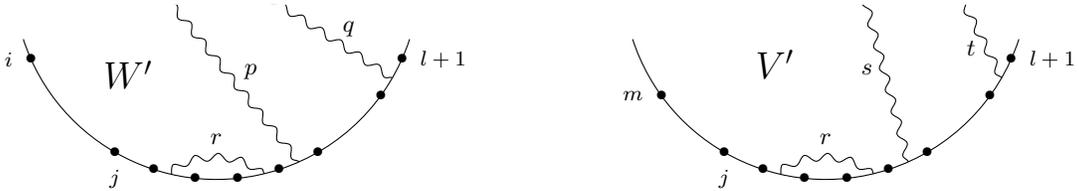
\begin{figure}[h]
\[
\begin{tikzpicture}[scale = 0.9,baseline=(current bounding box.east)]
  \begin{scope}
  \drawWLDfragment[12]{3}{0.4} 
  \node at (0:0) {}; 
  \newnode[left]{1}{i}
  \newnode[below]{4}{j}
  \newnode{5}{}
  \newnode{6}{}
  \newnode{7}{}
  \newnode{8}{}
  \newnode{9}{}
  \newnode{11}{}
  \newnode[right]{12}{l+1}
  \node at (\zero + 3*\step:\radius*0.65) {\Large $W'$};
  \begin{scope}
    \clipcenterarc(0,0)(\startpoint-10:\endpoint+10:\radius)
    \newpropbend{5.3}{7.6}{80}
    \node at (\zero + 6.5*\step:\radius*0.8) {\footnotesize $r$};
    \newprop[midway,left]{8.5}{-4}{}
    \node at (\zero + 8.15*\step:\radius*0.5) {\footnotesize $p$};
    \newprop[midway]{11.5}{-4}{}
    \node at (\zero + 12.2*\step:\radius*0.7) {\footnotesize $q$};
  \end{scope}
\end{scope}

  \begin{scope}[shift = {(9,0)}]
  \drawWLDfragment[12]{3}{0.4} 
  \newnode[left]{2}{m}
  \newnode[below]{4}{j}
  \newnode{5}{}
  \newnode{6}{}
  \newnode{7}{}
  \newnode{8}{}
  \newnode{9}{}
  \newnode{11}{}
  \newnode[right]{12}{l+1}
  \node at (\zero + 4*\step:\radius*0.5) {\Large $V'$};
  \begin{scope}
    \clipcenterarc(0,0)(\startpoint-10:\endpoint+10:\radius)
    \newpropbend{5.3}{7.6}{80}
    \node at (\zero + 6.5*\step:\radius*0.8) {\footnotesize $r$};
    \newprop[midway,left]{8.5}{22}{}
    \node at (\zero + 8.5*\step:\radius*0.5) {\footnotesize $s$};
    \newprop[midway]{11.5}{21}{}
    \node at (\zero + 11.8*\step:\radius*0.8) {\footnotesize $t$};
  \end{scope}
\end{scope}
\end{tikzpicture}
\]
\caption{$W'$ and $V'$ are locally the same between $j$ and $l+1$, so we will have $I_i^W\cap [j,l] = I_m^V\cap[j,l]$ even though $p \neq s$ and $q \neq t$. Note that propagators such as $r$ which have {\em both} ends inside the region of interest must be identical in both diagrams.}\label{fig:locally identical}
\end{figure}

\end{rmk}

With that in mind, we are ready to begin proving properties of the algorithm. We first show that the algorithm for $I_i$ can never reach the final vertex in a propagator's support (in the $<_i$ order) without having already assigned that propagator; in particular, this guarantees that the construction of $I_i$ always terminates in fewer than $n$ steps.

\begin{lem}\label{lem no fourth vertex}
Let $W$ be an admissible Wilson loop diagram containing at least one propagator. For any $i \in [n]$ and for any $p \in \cP$, $I_i(p)$ is not maximal (with respect to $<_i$) amongst the vertices that support $p$.
\end{lem}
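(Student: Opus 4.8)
The plan is to argue by contradiction via a counting argument on the propagator support, mirroring the admissibility inequality. Suppose for some starting vertex $i$ and some propagator $p = (a,b)$ (with $a <_i b$, so the four supporting vertices in $<_i$-order are $a <_i a+1$ and $b <_i b+1$, assuming without loss of generality that $b+1$ is the $<_i$-maximal among $V(p)$ — one must be slightly careful when $a+1 = b$, but then $V(p)$ has only three distinct elements and the argument is the same with $b+1$ maximal) that the algorithm reaches the $<_i$-maximal vertex of $V(p)$ without having yet assigned $p$. Call this maximal vertex $v$; so $v \in V(p)$ is maximal in $<_i$, and when the algorithm reaches step $v$, the propagator $p$ has not yet contributed.

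First I would set $Q$ to be the set of propagators that have already contributed to $I_i$ by the time the algorithm reaches (but has not yet processed) vertex $v$, together with $p$ itself; equivalently, $Q$ is the set of propagators assigned at steps $j$ with $i \le_i j <_i v$, plus $p$. Each such propagator was assigned at a distinct vertex strictly $<_i$-before $v$, and every one of those vertices lies in the support of the propagator assigned there. I claim all those assignment vertices, plus we'll see the relevant vertices of $V(p)$, lie in the $<_i$-interval $[\![i, v]\!]$ (in the $<_i$ order), and in fact the key point is that $V(p) \subseteq \interval{i}{v}$ in the $<_i$ order, because $v$ is the $<_i$-maximal element of $V(p)$ and $i \le_i$ everything. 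The heart of the argument is then to show $|V(Q)| \le |Q| + 2$, contradicting condition (2) of admissibility (Definition~\ref{admisdfn}), which requires $|V(Q)| \ge |Q| + 3$.

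To get the bound $|V(Q)| \le |Q|+2$: the vertices $\interval{i}{v}$ (in $<_i$ order) at which some propagator of $Q \setminus \{p\}$ was assigned number exactly $|Q| - 1$ (one per propagator, all distinct, and none equal to $v$ since $p$ has not been assigned at step $v$ yet). Every propagator of $Q$ — including $p$ — has all four of its supporting vertices inside $\interval{i}{v}$: for $q \in Q\setminus\{p\}$ this is because $q$ was assigned at some vertex $<_i v$ and, crucially, $I_i(q)$ is \emph{not} the $<_i$-maximal vertex of $V(q)$ is exactly what we are \emph{trying} to prove, so instead I would argue more carefully. The cleaner route: restrict attention to the subdiagram on vertex set $U := \interval{i}{v}$ (the $<_i$-interval from $i$ to $v$). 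Let $Q' = \{q \in \cP : V(q) \subseteq U\}$. Certainly $p \in Q'$ since $V(p) \subseteq U$. Moreover every propagator assigned at a step in $\interval{i}{v-1}$ has at least one supporting vertex in $U$; I need that it has \emph{all} its supporting vertices in $U$. Here noncrossingness and the clockwise-most choice should force this — but establishing exactly this is the main obstacle.

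\textbf{The main obstacle and how I expect to handle it.} The crux is proving: if $q$ is assigned by the algorithm (from start $i$) at a vertex $u$ with $i \le_i u <_i v$, and $V(p) \subseteq \interval{i}{v}$, then $V(q) \subseteq \interval{i}{v}$. The danger is a propagator $q$ whose support "straddles" $v$, i.e. one end of $q$ is in $\interval{i}{v-1}$ and the other end is $<_i$-beyond $v$. I would rule this out using noncrossingness together with the clockwise-most selection rule: because the algorithm always picks the clockwise-most propagator at each vertex, and because $p$ has not been picked by the time we reach $v$, any propagator $q$ picked earlier at a vertex between $a$ and $v$ must be "inside" $p$ relative to the appropriate orientation (this is essentially Remark~\ref{clockwise ordering rem}), hence its support is trapped inside $V(p)$'s span, hence inside $\interval{i}{v}$. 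So the real work is a careful case analysis on where the assignment vertex $u$ of $q$ sits relative to the two edges $a$ and $b$ of $p$, invoking Remark~\ref{clockwise ordering rem} (clockwise-most means later-assigned propagators at a shared vertex lie inside) and condition (3) of admissibility (noncrossing). Once that is established, $Q' \supseteq Q$ consists of $p$ together with at least $|Q|-1 \ge 0$ other propagators all supported inside $U$, with $|U| = $ (number of $<_i$-steps from $i$ to $v$ inclusive), and those other propagators were each assigned at a distinct vertex of $U \setminus \{v\}$, while $p$'s four support vertices also lie in $U$; bounding $|V(Q')|$ above by $|Q'| + 2$ then follows by observing that the vertices of $U$ not used as assignment vertices and not in a small "buffer" near $i$ and $v$ are few — more precisely one shows the assignment process forces $|U| \le |Q'| + 2$, so $|V(Q')| \le |U| \le |Q'|+2$, contradicting admissibility condition (2) applied to $Q'$. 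I expect the bookkeeping around the endpoints $i$ and $v$ (off-by-one issues, and the degenerate case $\ell(p)=2$) to be the fiddly part, but the structural content is the noncrossing-plus-clockwise-most trapping argument.
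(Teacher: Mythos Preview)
Your approach is genuinely different from the paper's, but it has a real gap. The paper argues by choosing a \emph{minimal} bad propagator (minimizing the size of the arc $[a+1,b]$) and then exhibiting a strictly smaller bad propagator inside it, yielding a contradiction; no appeal to the density condition (2) of admissibility is needed. You instead try to manufacture a set $Q'$ violating condition~(2). The difficulty is the final counting step.

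Concretely, your claimed inequality $|U| \le |Q'| + 2$ (with $U = \interval{i}{v}$ and $Q' = \{q : V(q)\subseteq U\}$) does not follow from what you have. You argue that $|Q|-1$ propagators were assigned at distinct vertices of $U\setminus\{v\}$, but $U$ can contain many other vertices at which nothing was assigned (non-supporting vertices, or vertices whose only supported propagators were already removed), so $|U|$ can be much larger than $|Q|$. Passing from $|U|$ to $|V(Q')|$ does not help either: if, say, $Q'$ ends up small (even $Q'=\{p\}$ is not ruled out by your argument), then $|V(Q')| = |Q'|+3$ and there is no contradiction. The ``trapping'' step is also incomplete in a way that matters: the propagator assigned at the \emph{first} support vertex $a$ of $p$ is clockwise of $p$ at $a$, hence ends on edge $a-1$, and so typically has support spilling \emph{outside} $U$; you cannot include it in $Q'$, yet you need it to account for the vertex $a$.

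If you want to rescue a counting argument, you would have to build $Q'$ more carefully --- essentially by iterating ``at each skipped support vertex of a propagator in $Q'$, add the blocker'' --- but chasing that iteration is exactly the paper's descent: at vertex $b$ the blocker $q$ is inside $(p,<_a)$, and either $q$ ends on edge $b-1$ (so $q$ itself was assigned at its $<_i$-maximal vertex, a smaller bad propagator) or $q=(d,b)$ and one looks at the blocker at $d+1$, which is again a smaller bad propagator. So the minimal-counterexample argument is really the clean way to close the loop you left open.
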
 
\begin{proof}
Suppose, by way of contradiction that we have $i \in [n]$ and $p \in \cP$ such that $I_i(p)$ is $<_i$-maximal in $V(p)$. There are two possibilities, which are illustrated in Figure~\ref{fig:no fourth vertex}: either $p = (i-1,b)$ for some $b$ (so $i$ and $I_i(p) = i-1$ are the two ends of a single edge), or $p = (a,b)$ with $i\leq_i a <_i b$ and $I_i(p) = b+1$. 

When the first possibility occurs, observe that we must have another propagator $q = (c,b)$ with $c<_i b$ and $I_i(q) = b+1$, because otherwise $I_i$ would assign $p$ to $b+1$. Now $q$ also contributes the $<_i$-maximal vertex in its support to $I_i$ and is an instance of the second possibility. Therefore, it suffices to show that no admissible Wilson loop diagram admits the configuration illustrated in Case 2 of Figure~\ref{fig:no fourth vertex}. 

So let $p = (a,b)$ with $a<_ib$ and $I_i(p) = b+1$, and choose $p$ such that $\big|[a+1,b]\big|$ is minimal amongst propagators of $W$ which contribute their last vertex and are in Case 2.

Since $I_i(p) \neq b$, there must exist a propagator $q$ inside $(p,<_i)$ with $I_i(q) = b$. The propagator $q$ cannot end on the edge $b-1$, as this would contradict the minimality of $p$, so we have $q = (d,b)$ with $a <_i d <_i b$, and $I_i(q) = b$. 

In order for $q$ to remain unassigned until the algorithm reaches vertex $b$, there must be another propagator $r$ with an end on edge $d$ and $I_i(r) = d+1$; the only way this can occur is if $r$ is outside $(q,<_i)$ but inside $(p,<_i)$. But now $r$ contributes its fourth vertex to $I_i$, again contradicting the minimality of $p$.
\end{proof}

\begin{figure}[h]
\[
\begin{tikzpicture}[baseline=(current bounding box.east),rotate = -18]
  \begin{scope}
  \drawWLDfragment[20]{2.2}{1}
  \node at (108:\radius*1.2) {}; 
  \newnode[left]{4}{I_i(p) = i-1}
  \newnode[left]{5}{i}
  \newnode[left]{7}{c}
  \newnode[left]{8}{}
  \newnode[right]{15}{b}
  \newnode[right]{16}{b+1}
  \newprop[midway,above]{4.7}{15.7}{\scriptsize $p$}
  \newprop[midway,below]{7.5}{15.3}{\scriptsize $q$}
  \node at (270+18:\radius*1.5) {\em Case 1};
    \end{scope}
  \end{tikzpicture}
\qquad \qquad 
  \begin{tikzpicture}[baseline=(current bounding box.east),rotate = 36]
  \begin{scope}
  \drawWLDfragment[20]{2.2}{1}
  \newnode[left]{4}{a}
  \newnode[left]{5}{}
  \newnode[left]{7}{}
  \newnode[left]{8}{}
  \newnode[right]{15}{b}
  \newnode[right]{16}{b+1 = I_i(p)}
  \newnode[right]{11}{d}
  \newnode[below]{12}{}
  \newnode[left]{2}{i}
  \newprop[midway,above]{4.5}{15.7}{\scriptsize $p$}
    \node at (270-36:\radius*1.5) {\em Case 2};
  \begin{scope}
  \clip (0,0) circle (\radius);
  \draw[smallpropagator] (\zero+11.5*\step:\radius*1.2) to [bend left = 45] (\zero + 15.5*\step:\radius*1.2); 
  \node at (\zero + 13.5*\step:\radius*0.55) {\scriptsize $q$};
  \draw[smallpropagator] (\zero+7.5*\step:\radius*1.2) to [bend left = 45] (\zero + 11.4*\step:\radius*1.2);
  \node at (\zero + 9.5*\step:\radius*0.55) {\scriptsize $r$};

  \end{scope}
    \end{scope}
  \end{tikzpicture}
\] 
\caption{The two types of configuration that would (in theory) allow the propagator $p$ to contribute its $<_i$-maximal support vertex to $I_i$.}
\label{fig:no fourth vertex}
\end{figure}
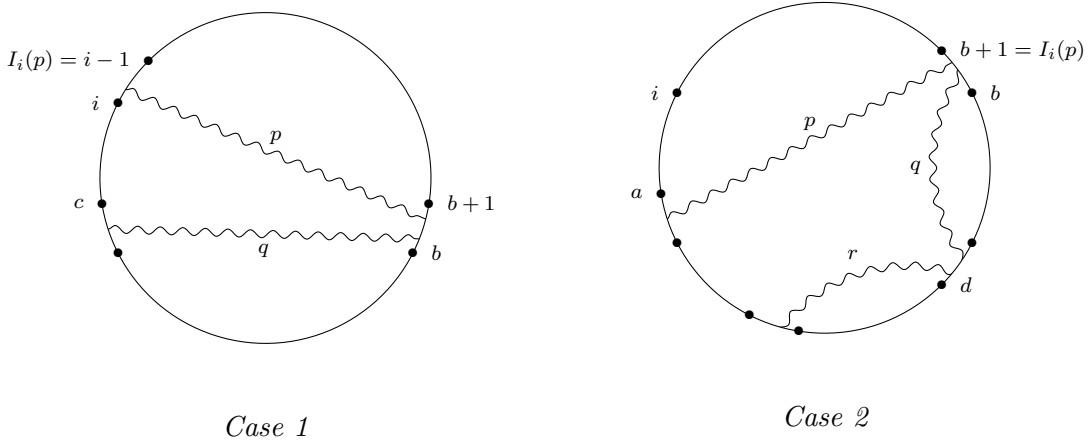

\begin{cor}\label{GN alg well defined}
If $W$ is an admissible Wilson loop diagram with $k$ propagators and $n$ vertices, then Algorithm~\ref{alg:put GN on WLD} assigns exactly $k$ distinct vertices to each $I_i$ in at most $n$ steps.
\end{cor}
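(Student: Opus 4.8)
The plan is to obtain the statement directly from Lemma~\ref{lem no fourth vertex}, once we have checked that every propagator really does contribute to $I_i$. The case $k=0$ is immediate, since then $\cP=\emptyset$ from the start, the loop of Algorithm~\ref{alg:put GN on WLD} never executes, and $I_i=\emptyset$; so assume $k\geq 1$.

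First I would argue that the algorithm terminates, and in fact that every propagator contributes exactly once. While $\cP\neq\emptyset$, choose any remaining propagator $p=(a,b)$; within at most $n$ further iterations the index $j$ reaches the vertex $a$, and at that iteration $\Prop(a)$ is nonempty (either because $p$ is still present, or because $p$ was already deleted in the meantime, which itself constitutes a deletion). Hence $|\cP|$ strictly decreases at least once in every window of $n$ consecutive iterations, so after finitely many iterations $\cP=\emptyset$ and the loop exits. A propagator is deleted the moment it contributes and never reappears, so each of the $k$ propagators contributes exactly once; in particular the function $I_i^W\colon\cP\to[n]$ of Definition~\ref{def I_i as a function} is well defined.

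Now apply Lemma~\ref{lem no fourth vertex}: for each $p\in\cP$ the vertex $I_i(p)$ is not the $<_i$-maximal element of $V(p)$, so $I_i(p)<_i m$, where $m$ is that maximal vertex. Since $m\leq_i i-1$ (as $i-1$ is the global $<_i$-maximum of $[n]$), transitivity gives $I_i(p)<_i i-1$; that is, $p$ contributes at one of the first $n-1$ vertices visited, namely one of $i,i+1,\dots,i-2$. Consequently all $k$ propagators have been deleted once the algorithm has processed these $n-1$ vertices, so it terminates in at most $n$ steps (indeed, within $n-1$ iterations plus the final emptiness check). Because the algorithm never completes a full cycle it never revisits a vertex, so the $k$ deletions occur at $k$ distinct vertices; and a step adds a vertex to $I_i$ exactly when it deletes a propagator, deleting at most one. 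Therefore $|I_i|=k$ with all $k$ elements distinct. There is no genuinely hard step here: the only point requiring care is to establish that every propagator contributes before invoking the function $I_i$, which the termination argument above supplies.
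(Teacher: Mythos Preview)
There is a genuine gap in your argument, at the sentence ``that is, $p$ contributes at one of the first $n-1$ vertices visited, namely one of $i,i+1,\dots,i-2$. Consequently all $k$ propagators have been deleted once the algorithm has processed these $n-1$ vertices.'' You have correctly shown (via your crude termination argument plus the \emph{statement} of Lemma~\ref{lem no fourth vertex}) that the vertex label $I_i(p)$ lies in $\{i,i+1,\dots,i-2\}$. But you then conflate this vertex label with the iteration at which $p$ is deleted. Your own termination argument allows the algorithm to wind around the diagram several times; nothing you have proved rules out the possibility that $p$ is deleted at vertex $v\in\{i,\dots,i-2\}$ on a second or later pass. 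In that scenario another propagator $q$ could already have been assigned to the same vertex $v$ on the first pass, so that $I_i(p)=I_i(q)=v$ and $|I_i|<k$. Thus the inference to ``terminates in at most $n$ steps'' and to distinctness of the $k$ vertices is not justified by the statement of the lemma alone.

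The paper closes exactly this gap by appealing to the \emph{proof} of Lemma~\ref{lem no fourth vertex} rather than to its statement. The minimal-counterexample argument there can be read (or trivially rerun) with the hypothesis ``$p$ is still unassigned when the algorithm first reaches its $<_i$-maximal support vertex'' in place of ``$I_i(p)$ equals that vertex''; the same infinite descent produces a contradiction. That gives the dynamical claim that each propagator is assigned before the first visit to its last support vertex, whence the first pass already empties $\cP$. With that in hand your crude termination paragraph becomes unnecessary, and distinctness follows because no vertex is revisited. To repair your proof, either replace the appeal to the lemma's statement by this dynamical reading of its proof, or add a separate argument showing that cycling forces some propagator to be assigned at its $<_i$-maximal support vertex.
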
 
\begin{proof}
Fix $i \in [n]$. It follows from the proof of Lemma~\ref{lem no fourth vertex} that in applying Algorithm~\ref{alg:put GN on WLD} to construct the set $I_i$, the algorithm can never reach the fourth vertex of a propagator's support (ordered with respect to $<_i$) without having already assigned that propagator. Therefore, if the algorithm starts at vertex $i$, all propagators must have been eliminated by the time it gets to vertex $i-1$, ensuring that the algorithm is completed in at most one full circumnavigation of the diagram and that $I_i$ contains exactly $k$ distinct elements.
\end{proof}

We now reverse our perspective and ask: given a propagator $p$ and a vertex $i$ in the support of $p$, for which set of starting vertices does the algorithm assign $p$ to $i$?

Specifically, for each $i$ in the support of $p$ we define the set
\[J_p^{W}(i) = \{m \in [n] \ | \ I^{W}_m(p) = i \},\]
i.e. the set of starting vertices $m$ such that $I_m^{W}$ assigns $p$ to $i$. 

The following lemma establishes that these sets behave in a simple and predictable manner, a fact which we will repeatedly use in subsequent proofs.

\begingroup
\allowdisplaybreaks

\begin{lem} \label{vertex cyclic int lem}
  
Let $W = (\cP,[n])$ be an admissible Wilson loop diagram.  For every $p = (i,j) \in \cP$, the sets $J_p^{W}(i)$, $J_p^{W}(i+1)$, $J_p^{W}(j)$, and $J_p^{W}(j+1)$ are each non-empty cyclic intervals which partition $[n]$ and occur in the given cyclic order.
\end{lem}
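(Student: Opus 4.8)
The plan is to analyze, for a fixed propagator $p = (i,j)$, how the answer to "which vertex does $p$ get assigned to, starting from $m$?" varies as $m$ runs once around the circle, and to show it moves monotonically through the four support vertices $i, i+1, j, j+1$ in cyclic order. First I would set up notation: since the four sets $J_p^W(i), J_p^W(i+1), J_p^W(j), J_p^W(j+1)$ are disjoint by definition (the value $I_m^W(p)$ is a single well-defined vertex for each $m$) and their union is all of $[n]$ by Lemma~\ref{lem no fourth vertex} (which rules out the fourth support vertex $j+1$ in the $<_i$ order — wait, more carefully: Lemma~\ref{lem no fourth vertex} says $I_m(p)$ is never $<_m$-maximal in $V(p)$, but $V(p) = \{i,i+1,j,j+1\}$ has four elements, so the value is always one of the four support vertices, and indeed one of them is excluded for each $m$ but which one varies), it suffices to prove that each $J_p^W(\cdot)$ is a nonempty cyclic interval and that they abut in the stated cyclic order. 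The core claim to establish is a monotonicity statement: as $m$ moves one step counterclockwise, $I_m^W(p)$ either stays the same or advances to the "next" support vertex in the cyclic order $i \to i+1 \to j \to j+1 \to i$, and moreover it does make each jump exactly once per revolution.

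The key technical tool will be Remark~\ref{rmk algorithm locally same} (the "locally same" principle): comparing the run of the algorithm from starting vertex $m$ with the run from starting vertex $m+1$. The two runs differ only in that the $m+1$-run omits whatever the $m$-run did at step $m$; from step $m+1$ onward, if the configurations of as-yet-unassigned propagators agree on the relevant arc, the two runs proceed identically. So I would split into cases according to whether the propagator assigned at step $m$ in the $m$-run interacts with $p$'s support arc. The delicate point is when the propagator removed at step $m$ (starting from $m$) is $p$ itself — i.e. $m \in V(p)$ and $I_m^W(p) = m$ — since then the $m+1$-run must reassign $p$ to a later support vertex, and I need to argue it lands on the very next one in the cyclic order and not further. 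Here Remark~\ref{clockwise ordering rem} and the noncrossing condition are what pin down that $p$ resurfaces at the next available support vertex. When $m \notin V(p)$ or $p$ is not the propagator removed at step $m$, I would argue that the removal at step $m$ cannot change which support vertex $p$ is eventually assigned to, using that any propagator crossing into $V(p)$'s arc is constrained by noncrossingness — this is where I'd lean on Lemma~\ref{lem sian} or the structural remarks about propagators of short length if needed, though I suspect a direct argument via Remark~\ref{rmk algorithm locally same} suffices.

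For nonemptiness and the cyclic order: starting at $m = i$, the algorithm reaches vertex $i$ first among $\{i,i+1,j,j+1\}$, and either assigns $p$ there (so $i \in J_p^W(i)$) or not; in any case I would directly verify $J_p^W(i) \neq \emptyset$ by exhibiting an explicit good starting vertex — e.g. taking $m$ to be the vertex immediately clockwise of $i$ that forces $p$ to be picked up at $i$, or using that when the algorithm reaches $i$ with $p$ still the clockwise-most unassigned propagator on $i$, it assigns $p$ to $i$. A clean way: show $m = i$ itself works unless some propagator clockwise of $p$ on edge $i$ steals vertex $i$, and in that case one peels back further. Then the cyclic-order claim follows by combining nonemptiness of all four sets with the monotonicity: a monotone surjection from the cycle $[n]$ onto the ordered $4$-element cyclic set $(i, i+1, j, j+1)$ that hits every value must realize exactly the four cyclic intervals in that order.

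The main obstacle I expect is the case analysis in the monotonicity step — specifically, ruling out that advancing the starting vertex by one causes $I_m^W(p)$ to "skip" a support vertex or to move backward. Handling the interaction of $p$ with other propagators sharing an edge with $p$ (via Remark~\ref{clockwise ordering rem}) and with propagators nested inside $p$ (via the noncrossing condition and Lemma~\ref{lem no fourth vertex}) is where the real work lies; everything else is bookkeeping with the disjointness/covering observation and Remark~\ref{rmk algorithm locally same}.
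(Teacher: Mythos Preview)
Your plan differs genuinely from the paper's proof. The paper does \emph{not} argue monotonicity of $m \mapsto I_m^W(p)$ directly. Instead it proceeds by induction on the number of propagators: reduce to the case of no non-supporting vertices, invoke Lemma~\ref{lem sian} to locate one of four explicit local configurations (Figure~\ref{fig 3 cases}), remove the length-$2$ propagator $p$ in that configuration to obtain a smaller diagram $V$, and then compute $I_w^W$ explicitly in terms of $I_w^V$ for every starting vertex $w$, case by case. From these explicit formulas the intervals $J_p^W(\ast)$ are read off and checked to be nonempty and in the right cyclic order, while the intervals for all other propagators are inherited (with small, trackable shifts) from the inductive hypothesis on $V$.

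Your monotonicity route is conceptually appealing, but the proposal has a real gap at the step you yourself flag. When you compare the run from $m$ with the run from $m+1$, Remark~\ref{rmk algorithm locally same} does \emph{not} apply cleanly: after step $m$ the two runs carry different sets of remaining propagators (the $m$-run has already deleted the clockwise-most propagator $q$ at $m$, the $m+1$-run still has it), so the ``locally identical configuration'' hypothesis fails from the outset. The presence of $q$ in the $m+1$-run can displace another propagator, which displaces another, and this cascade can reach $p$'s support in ways that noncrossingness and Remark~\ref{clockwise ordering rem} alone do not obviously control. You would need a separate argument bounding the cascade---essentially showing that the displacement propagates as a single ``shift by one'' along a chain of propagators and cannot leapfrog a support vertex of $p$. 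That is doable, but it is the entire content of the lemma, and your plan does not yet contain it. Notice also that the paper's proof of Theorem~\ref{res:alg gives GN} (the Grassmann necklace property), which \emph{is} a comparison of $I_i$ with $I_{i+1}$, already relies on Lemma~\ref{vertex cyclic int lem}; so if you try to import cascade-control ideas from there you risk circularity.

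In short: your approach could be made to work, but the crucial cascade analysis is missing, and the tools you cite are not sufficient on their own. The paper sidesteps the cascade entirely by inducting on propagator count and paying the price of a long but finite case analysis against the configurations supplied by Lemma~\ref{lem sian}.
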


\begin{rmk}\label{rmk cyclic}
  Before the proof, let us observe that one of the main ways that the above lemma will be used is in observations like the following.  Suppose $p=(i,j)$ is a propagator and $I_{j+1}(p) = j+1$.  Consider $I_{j+2}(p)$.  Since $j+1$ is the last vertex in the $j+2$ order, $p$ cannot contribute $j+1$ to $I_{j+2}$, so the fact that the intervals in the statement of the lemma are non-empty and occur in their cyclic order implies that $I_{j+2}(p)=i$.  More generally, we know that whenever a propagator stops contributing a particular vertex, its next contribution must be its cyclically next vertex.  This simple observation is surprisingly useful, to the point that the details which appear in the proof of Lemma~\ref{vertex cyclic int lem} are never needed again.  The mere fact of non-empty intervals in the correct order suffices.
\end{rmk}
  
Now we proceed to the proof of Lemma~\ref{vertex cyclic int lem}.

\begin{proof}
We will prove the result by induction on the number of propagators.  If $W$ has one propagator then the result is immediate.  Now suppose $W$ has more than one propagator.  Since non-supporting vertices have no effect on Algorithm~\ref{alg:put GN on WLD}, it suffices to prove the result for $W$  (possibly weakly admissible) with no non-supporting vertices.  Then by Lemma~\ref{lem sian}, $W$ has at least one of the four situations illustrated in Figure~\ref{fig 3 cases}.

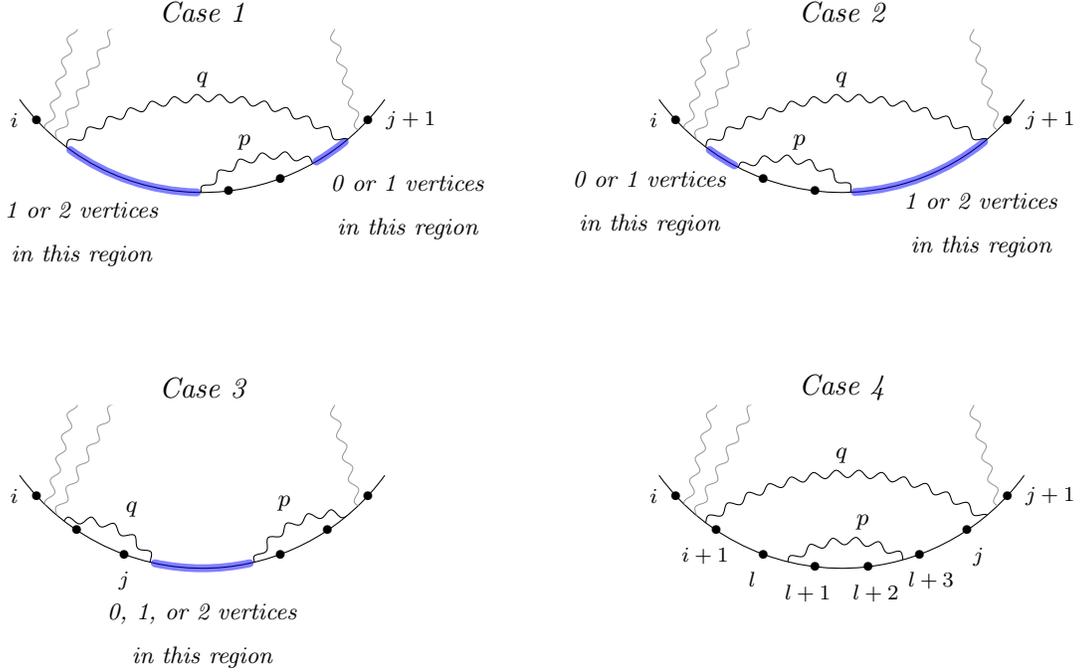
\begin{figure}
\[\begin{tikzpicture}[baseline=(current bounding box.east)]
	\begin{scope}
	\drawWLDfragment[8]{3}{0.3} 
	\centerarc[linehighlight](0,0)(\zero + 1.85*\step:\zero + 4.4*\step:\radius)
	\centerarc[linehighlight](0,0)(\zero + 6.75*\step:\zero + 7.4*\step:\radius)
	\node[align = center] at (\zero + 2.7*\step:\radius*1.3) {\em \footnotesize 1 or 2 vertices \\[3pt] \em \footnotesize in this region};
	\node[align = center] at (\zero + 7.5*\step:\radius*1.4) {\em \footnotesize 0 or 1 vertices \\[3pt]\em \footnotesize in this region};
	\begin{scope} 
		\clipcenterarc(0,0)(\startpoint-20:\endpoint+20:\radius)
		\draw[smallpropagator] (\zero+1.65*\step:\radius*1.1) to [bend left = 40] (\zero + 7.4*\step:\radius*1.1); 
		\draw[smallpropagator] (\zero+4.3*\step:\radius*1.1) to [bend left = 55] (\zero + 6.8*\step:\radius*1.1); 
		\draw[smallpropagator,light-gray] (\zero+1.5*\step:\radius*1.1) -- (180:\radius*0.5); 
		\draw[smallpropagator,light-gray] (\zero+1.6*\step:\radius*1.1) -- (180:\radius*0.25);
		\draw[smallpropagator,light-gray] (\zero+7.55*\step:\radius*1.1) -- (0:\radius*0.5); 
		\node at (\zero + 4.5*\step:\radius*0.5){\footnotesize $q$};
		\node at (\zero + 5.5*\step:\radius*0.8){\footnotesize $p$};
	\end{scope}
	\newnode[left]{1}{i}
	\newnode[right]{8}{j+1}
	\newnode[below]{5}{}
	\newnode[below]{6}{}
\node at (270:\radius*0.2) {\em Case 1};
\end{scope}

\begin{scope}[shift = {(8.5,0)}]
	\drawWLDfragment[8]{3}{0.3} 
	\centerarc[linehighlight](0,0)(\zero + 1.85*\step:\zero + 2.4*\step:\radius)
	\centerarc[linehighlight](0,0)(\zero + 4.75*\step:\zero + 7.4*\step:\radius)
	\node[align = center] at (\zero + 6.6*\step:\radius*1.3) {\em \footnotesize 1 or 2 vertices \\[3pt]\em \footnotesize in this region};
	\node[align = center] at (\zero + 1.6*\step:\radius*1.35) {\em \footnotesize 0 or 1 vertices \\[3pt]\em \footnotesize in this region};
	\begin{scope}
		\clipcenterarc(0,0)(\startpoint-20:\endpoint+20:\radius)
		\draw[smallpropagator] (\zero+1.65*\step:\radius*1.1) to [bend left = 40] (\zero + 7.4*\step:\radius*1.1); 
		\draw[smallpropagator] (\zero+2.3*\step:\radius*1.1) to [bend left = 55] (\zero + 4.8*\step:\radius*1.1); 
		\draw[smallpropagator,light-gray] (\zero+1.5*\step:\radius*1.1) -- (180:\radius*0.5); 
		\draw[smallpropagator,light-gray] (\zero+1.6*\step:\radius*1.1) -- (180:\radius*0.25);
		\draw[smallpropagator,light-gray] (\zero+7.55*\step:\radius*1.1) -- (0:\radius*0.5); 
		\node at (\zero + 4.5*\step:\radius*0.5){\footnotesize $q$};
		\node at (\zero + 3.5*\step:\radius*0.8){\footnotesize $p$};
	\end{scope}
	\newnode[left]{1}{i}
	\newnode[right]{8}{j+1}
	\newnode[below]{3}{}
	\newnode[below]{4}{}
	\node at (270:\radius*0.2) {\em Case 2};
\end{scope}

\begin{scope}[shift = {(0,-5)}]
	\drawWLDfragment[8]{3}{0.3} 
	\centerarc[linehighlight](0,0)(\zero + 3.6*\step:\zero + 5.4*\step:\radius)
	\node[align = center] at (\zero + 4.5*\step:\radius*1.3) {\em \footnotesize 0, 1, or 2 vertices \\[3pt]\em \footnotesize in this region};
	\begin{scope}
		\clipcenterarc(0,0)(\startpoint-20:\endpoint+20:\radius)
		\draw[smallpropagator] (\zero+1.5*\step:\radius*1.1) to [bend left = 55] (\zero + 3.7*\step:\radius*1.1); 
		\draw[smallpropagator] (\zero+5.3*\step:\radius*1.1) to [bend left = 55] (\zero + 7.6*\step:\radius*1.1); 
		\draw[smallpropagator,light-gray] (\zero+1.5*\step:\radius*1.1) -- (180:\radius*0.5); 
		\draw[smallpropagator,light-gray] (\zero+1.6*\step:\radius*1.1) -- (180:\radius*0.25);
		\draw[smallpropagator,light-gray] (\zero+7.55*\step:\radius*1.1) -- (0:\radius*0.5); 
		\node at (\zero + 2.8*\step:\radius*0.8){\footnotesize $q$};
		\node at (\zero + 6.5*\step:\radius*0.8){\footnotesize $p$};
	\end{scope}
	\newnode[left]{1}{i}
	\newnode[below]{2}{}
	\newnode[below]{3}{j}
	\newnode[below]{6}{}
	\newnode[below]{7}{}
	\newnode[right]{8}{}
	\node at (270:\radius*0.2) {\em Case 3};
\end{scope}

\begin{scope}[shift = {(8.5,-5)}]
	\drawWLDfragment[8]{3}{0.3} 
	\begin{scope}
		\clipcenterarc(0,0)(\startpoint-20:\endpoint+20:\radius)
		\draw[smallpropagator] (\zero+1.65*\step:\radius*1.1) to [bend left = 40] (\zero + 7.4*\step:\radius*1.1); 
		\draw[smallpropagator] (\zero+3.3*\step:\radius*1.1) to [bend left = 55] (\zero + 5.8*\step:\radius*1.1); 
		\draw[smallpropagator,light-gray] (\zero+1.5*\step:\radius*1.1) -- (180:\radius*0.5); 
		\draw[smallpropagator,light-gray] (\zero+1.6*\step:\radius*1.1) -- (180:\radius*0.25);
		\draw[smallpropagator,light-gray] (\zero+7.55*\step:\radius*1.1) -- (0:\radius*0.5); 
		\node at (\zero + 4.5*\step:\radius*0.5){\footnotesize $q$};
		\node at (\zero + 5*\step:\radius*0.8){\footnotesize $p$};
	\end{scope}
	\newnode[left]{1}{i}
	\newnode[below]{2}{i+1\quad }
	\newnode[below]{3}{l\quad}
	\newnode[below]{4}{l+1\ \ }
	\newnode[below]{5}{\ \ l+2}
	\newnode[below]{6}{\quad l+3}
	\newnode[below]{7}{\quad j}
	\newnode[right]{8}{j+1}
	\node at (270:\radius*0.2) {\em Case 4};
\end{scope}
\end{tikzpicture}\]
\caption{Four cases for admissible Wilson loop diagrams with no non-supporting vertices. The gray half-propagators illustrate where propagators may occur, but are not required to exist.}\label{fig 3 cases}
  \end{figure}

In each of the four cases, when we remove the propagator labelled $p$ we obtain a diagram which satisfies the statement of the theorem by the induction hypothesis, and contains a propagator ${q = (i,j)}$ with no other propagators inside it (with respect to the $<_i$ orientation). Note that this region may or may not contain non-supporting vertices, which we will call $l, l+1, \ldots$ as necessary. Let $V$ be the diagram obtained by removing the propagator $p$ from $W$ (Figure \ref{fig V diagram}). The diagram $V$ is guaranteed to be admissible, since it was formed by removing a propagator from a diagram that was at least weakly admissible. 

Consider Case 4 of Figure~\ref{fig 3 cases} first, as it is the easiest.  In this case, the vertices ${\{l,l+1,l+2,l+3\}}$ which support $p$ in $W$ are non-supporting in $V$, so for every propagator $r$ in $V$ (including $q$) we have $J_r^{V} = J_r^{W}$; these are non-empty cyclic intervals in the correct order by the induction hypothesis.  Additionally, it is clear from Figure~\ref{fig 3 cases} that $J_p^{W}(l+a) = \{l+a\}$ for $a\in\{1,2,3\}$ and $J_p^{W}(l) = [n] \backslash \{l+1, l+2, l+3\}$. The result therefore holds in this case.

Now we proceed to consider the first three cases of Figure~\ref{fig 3 cases}.
We first describe $J_q^{V}(*)$, which can be handled identically for all three cases.

In $V$ there are no propagators inside $q$ by construction, so we see from Figure \ref{fig V diagram} that
\[l, l+1, \ldots ,j \in J^{V}_q(j) \text{ (if $l$ exists) and }  j+1 \in J^{V}_q(j+1).\]
Note that $j+2 \not\in J^{V}_q(j+1)$ by Lemma~\ref{lem no fourth vertex}, so  $J^{V}_q(j+1) = \{j+1\}$ and by the induction hypothesis we must have $j+2 \in J^{V}_q(i)$. Thus there exist vertices $d,e \in [j+2,i+1]$ with $d <e$, such that
\ba J^{V}_q(i) = [j+2,d-1], \quad J^{V}_q(i+1) = [d,e-1], \quad J^{V}_q(j) = [e,j], \quad J^{V}_q(j+1) = \{j+1\}, \label{deeqs}\ea
and all intervals are non-empty.

We now consider what happens as we move from $V$ to each of the three remaining cases for $W$. We need to consider both $J^{W}_p$ and $J^{W}_r$ for $r \neq p$, since the addition of $p$ can have a knock-on effect on later steps in the algorithm.

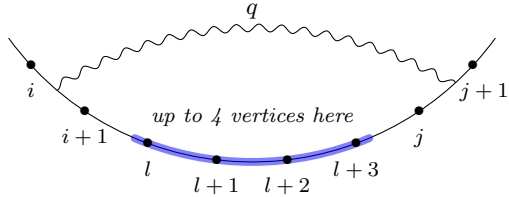
\begin{figure}
\[
\begin{tikzpicture}[baseline=(current bounding box.east)]
	\begin{scope}
	\drawWLDfragment[8]{4}{0.3}
	\centerarc[linehighlight](0,0)(\zero + 2.8*\step:\zero + 6.2*\step:\radius)
	\newnode[below]{1}{i}
	\newnode[below]{2}{i+1}
	\newnode[below]{3}{l}
	\newnode[below]{4}{l+1}
	\newnode[below]{5}{l+2}
	\newnode[below]{6}{l+3}
	\newnode[below]{7}{j}
	\newnode[below]{8}{\quad j+1}
	\node at (\zero + 4.5*\step:\radius*0.85) {\em \scriptsize up to 4 vertices here};
	\begin{scope}
	\clipcenterarc(0,0)(\startpoint-10:\endpoint+10:\radius)
		\draw[smallpropagator] (\zero+1.3*\step:\radius*1.2) to [bend left = 40] (\zero + 7.7*\step:\radius*1.2); 
	\node at (\zero + 4.5*\step:\radius*0.5){\footnotesize $q$};
\end{scope}
\end{scope}
\end{tikzpicture}
\]
\caption{Diagram $V$ is $W$ with $p$ removed; there are no propagators inside $q = (i,j)$, though there may be up to 4 non-supporting vertices labelled $l, l+1, \ldots l+3$.}
\label{fig V diagram}
\end{figure}


\textbf{Cases 1 and 2}: These cases behave very similarly except when $j$ or $j+1$ are not in the support of $p$, which can occur in Case 2 only; see below. Consequently we handle the majority of the proof for these two cases simultaneously. Write $p=(m,m+2)$ where $m\in \{i, i+1, l\}$. Let $1\leq a\leq 3$ be the number of non-supporting vertices inside $q$ in $V$ (note that there is at least 1); so these vertices are $l, \ldots, l+a-1$.  Note that these vertices are non-supporting in $V$ but are all in the support of $p$ in $W$. 

We first calculate $I^{W}_w$ for a starting vertex ${w \in [n]\backslash \{l, l+1, \ldots,j,j+1\}}$. Note that $p$ has no effect on other propagators for starting vertices in this range (since $p$ will be the final propagator encountered by the algorithm) so if $r \neq p$, then $I_w^W(r) = I_w^V(r)$. Meanwhile, the value of $I^{W}_w(p)$ depends on which step in the algorithm the propagator $q$ contributes a vertex in the diagram $V$, i.e. on the value of $I^{V}_w(q)$. Thus, if $w\in J_q^{V}(i)$ then 
    \[
    I_w^{W}(r) =  \begin{cases}
        \max\{i+1, m\} & \text{if } r=p \\
        I_{w}^{V}(r) & \text{if } r\neq p\;,
      \end{cases} 
    \]
    while if $w\in J_q^{V}(i+1)$ or $w\in J_q^{V}(j)$ then
    \[
    I_w^{W}(r) =  \begin{cases}
        l & \text{if } r=p \\
        I_{w}^{V}(r) & \text{if } r\neq p \;.
      \end{cases} 
      \]
We also need to understand $I^{W}_w$ for $w \in \{l,l+1,\ldots,j,j+1\}$. For the majority 
of these vertices, we use the following observation: if $p$ is the first propagator to be assigned a value by $I_w^{W}$, then the remainder of $I_w^{W}$ proceeds identically to the assignments of $I^{V}_{w+1}$ (recall Remark~\ref{rmk algorithm locally same}).  Thus we have for any $0\leq b <a$
    \[
    I_{l+b}^{W}(r) = \begin{cases}
      l+b & \text{if } r=p\\
      I_{l+b+1}^{V}(r) & \text{if } r\neq p\;.
    \end{cases}
    \]
Similarly, if $j$ is in the support of $p$, then we have
     \[
       I_j^{W}(r) = \begin{cases}
         j & \text{if $r=p$ and $j$ is in the support of $p$}\\
         I_{j+1}^{V}(r) & \text{if $r\neq p$ and $j$ is in the support of $p$} \;.
       \end{cases} 
       \]
If $j$ is not in the support of $p$, then we must be in Case 2 of Figure~\ref{fig 3 cases} with two vertices in the right hand region.  In this case, if we start the algorithm at $j$ we need to know whether there will be any unassigned propagators other than $p$ when we reach vertex $i$, so as to know what $p$ contributes.

Consider the Wilson loop diagram $X$ formed from $V$ by replacing the propagator $q = (i,j)$ with $q = (i, j-1)$. The diagram  $X$ is still admissible since we have not decreased the support of any set of propagators. 
It follows from the position of $q$ in $X$ that $I_j^X(q) = j$, and from Lemma~\ref{lem no fourth vertex} that $I_{j+1}^X(q) \neq j$. By the induction hypothesis, we must therefore have $I_{j+1}^X(q) = i$.  In particular, this means that if we start at $j+1$ and assign propagators to vertices according to the algorithm, when we reach vertex $i$ in $X$ all propagators other than $q$ must have been assigned.

Therefore, if we start at $j$ in $W$, we first assign $q$ to $j$ then proceed to assign as in $X$ starting at $j+1$, and hence when we get to $i$ the only remaining unassigned propagator is $p$. We obtain
       \[
       I_j^{W}(r) = \begin{cases}

       m & \text{if $r =p$ and $j$ is not in the support of $p$} \\
       I_{j}^{V}(r) & \text{if $r\neq p$ and $j$ is not in the support of $p$}\; .\end{cases} 
       \]  
This completes the analysis of $I_j^W$. Finally, we consider what happens when we start the algorithm at vertex $j+1$. If $j+1$ is in the support of $p$ then we can argue as above to get
\[
       I_{j+1}^{W}(r)  = \begin{cases}
         j+1 & \text{if $r=p$ and $j+1$ is in the support of $p$}\\
         I_{j+2}^{V}(r) & \text{if $r\neq p$ and $j+1$ is in the support of $p$}\;.
       \end{cases}
       \]
Now suppose $j+1$ is not in the support of $p$. If we start at $j+1$ we need to know whether there are any unassigned propagators supported on edge $i$ when we reach vertex $i$. We already know that $J_q^{V}(j+1) = \{j+1\}$; in particular this means that $q$ contributes $i$ in $I_{j+2}^{V}$, by the induction hypothesis applied to $V$. However the construction of $I^{V}_{j+1}$ first assigns $q$ to $j+1$ and proceeds identically to $I^{V}_{j+2}$.  In particular if $i$ was assigned in $I^{V}_{j+1}$, then it would not be available to assign to $q$ in $I^{V}_{j+2}$ as all other propagators supported at $i$ in $V$ come before $q$. 

Therefore, $p$ is the only potentially unassigned propagator on edge $i$ when we reach vertex $i$ in the algorithm for $I_{j+1}^W$, and
    \[
    I_{j+1}^{W}(r)  = \begin{cases}
      m & \text{if  $r=p$ and $j+1$ is not in the support of $p$}\\
      I_{j+1}^{V}(r) & \text{if  $r\neq p$ and $j+1$ is not in the support of $p$} \;.
    \end{cases}
    \]
We can now describe the intervals $J^{W}_r(*)$ for Cases 1 and 2 of Figure~\ref{fig 3 cases}. For $r \neq p$ the intervals are clearly still cyclic and appear in the correct order, and we can assemble the intervals for the $J_p^{W}(*)$ as follows. Recall that $p = (m,m+2)$, $1 \leq a \leq 3$ is the number of non-supporting vertices inside $q$ in $V$, and these non-supporting vertices are denoted $\{l,l+1, \dots\}$. 
    \begin{itemize}
      \item 
    If $m=l$ then either $a=2$ (so $l+1=m+1$, $j=m+2$, and $j+1=m+3$) or $a=3$ (so $l+1=m+1$, $l+2=m+2$, $j=m+3$, and $j+1$ is not in the support of $p$).  In both cases
    \begin{gather*}
    J^{W}_p(m) = [m+4, m], \qquad  J^{W}_p(m+1) = \{m+1\}, \\ J^{W}_p(m+2) = \{m+2\}, \qquad  J^{W}_p(m+3) = \{m+3\},
    \end{gather*}
    which are non-empty and otherwise as required.
  \item
        If $m=i+1$ then checking each of the three different possibilities for $a$ we likewise get
    \begin{gather*}
    J^{W}_p(m) = [m+4, d-1], \qquad  J^{W}_p(m+1) = [d, m+1], \\  J^{W}_p(m+2) = \{m+2\}, \qquad  J^{W}_p(m+3) = \{m+3\},
    \end{gather*}
    which are non-empty and otherwise as required. Recall that $d$ was defined in equation \eqref{deeqs}.
  \item If $m=i$ then $a=1$ or $a=2$, in the former case $l=m+2$, $j=m+3$ and $j+1$ is not in the support of $p$ so
    \begin{gather*}
    J^{W}_p(m) = \{m+4\}, \qquad  J^{W}_p(m+1) = [m+5, d-1], \\  J^{W}_p(m+2) = [d, m+2], \qquad  J^{W}_p(m+3) = \{m+3\},
    \end{gather*}
    while in the latter $l=m+2$, $l+1=m+3$, and $j$ and $j+1$ are not in the support of $p$ so
    \begin{gather*}
    J^{W}_p(m) = [m+4, j+1], \qquad  J^{W}_p(m+1) = [j+2, d-1], \\  J^{W}_p(m+2) = [d, m+2], \qquad  J^{W}_p(m+3) = \{m+3\},
    \end{gather*}
    which are again as required.
    \end{itemize}

\textbf{Case 3}: In this case there are no non-supporting vertices $l, l+1, \ldots$ inside $q$.  Again write ${p=(m, m+2)}$ where $m\in \{j, j+1, j+2\}$. We proceed as in the previous cases, by computing $I^{W}_w$ for vertices $w$ in roughly increasing order of difficulty.

For $w \in [n]\backslash\{j+1,m,m+1,m+2,m+3\}$, if $w\in J_q^{V}(i)$ or $w\in J_q^{V}(i+1)$ then
    \[
    I_w^{W}(r) =  \begin{cases}
        m & \text{if } r=p \\
        I_{w}^{V}(r) & \text{if } r\neq p \;,
      \end{cases} 
    \]
    while if $w\in J_q^{V}(j)$ then
    \[
    I_w^{W}(r) =  \begin{cases}
        \max\{m, j+1\} & \text{if } r=p \\
        I_{w}^{V}(r) & \text{if } r\neq p
      \end{cases} \;.
    \]

Finally, for $j+1$ and the vertices in the support of $p$, we have
\begin{align*}
  I_{j+1}^{W}(r) &= \begin{cases}
    j+1 & \text{if } r=q\\
    j+2 & \text{if } r=p\\
    I_{j+3}^{V}(r) & \text{if } r\neq p,q
  \end{cases}\\
  I_m^{W}(r) &= \begin{cases}
    m & \text{if $r=p$ and $q$ not supported on $m$}\\
    m+1 & \text{if $r=p$ and $q$ supported on $m$}\\
    I_{m}^{V}(r) & \text{if } r\neq p
  \end{cases}\\
  I_{m+1}^{W}(r) & = \begin{cases}
    m+1 & \text{if $r=p$ and $q$ not supported on $m+1$} \\
    m+2 & \text{if $r=p$ and $q$ supported on $m+1$} \\
    I_{m+1}^{V}(r) & \text{if } r\neq p,q
  \end{cases}\\
  I_{m+2}^{W}(r) & = \begin{cases}
    m+2 & \text{if } r=p \\
    I_{m+3}^{V}(r) & \text{if } r\neq p
  \end{cases}\\
  I_{m+3}^{W}(r) & = \begin{cases}
    m+3 & \text{if } r=p \\
    I_{m+4}^{V}(r) & \text{if } r\neq p \; .
  \end{cases} 
\end{align*}
Note that $I_{m+2}^{V}(r) = I_{m+3}^{V}(r)$ for all propagators $r$ in $V$, and that if $j+1\not\in\{m, m+1, m+2, m+3\}$ then $j+2$ and $j+3$ are non-supporting vertices in $V$, so in that case $I_{j+2}^{V}(r) = I_{j+3}^{V}(r) = I_{j+4}^{V}(r)$ for $r$ in $V$. 

Therefore, once again we can see that the $J_r^{W}(*)$ are cyclic for all $r\neq p$ in $W$.  Assembling the intervals for $p$ we have the following statements.
\begin{itemize}
\item If $m=j$ then
\begin{gather*}J_p^{W}(m) = [m+4,e-1], \qquad  J_p^{W}(m+1) = [e,m], \\  J_p^{W}(m+2) = [m+1,m+2], \qquad  J_p^{W}(m+3) = \{m+3\}\;.\end{gather*}
Recall that $e$ was defined in equation \eqref{deeqs}.
\item If $m=j+1$ then
\begin{gather*}J_p^{W}(m) = [m+4,m-1], \qquad J_p^{W}(m+1) = [m,m+1], \\  J_p^{W}(m+2) = \{m+2\}, \qquad  J_p^{W}(m+3) = \{m+3\}\;.\end{gather*}
\item If $m=j+2$ then
\begin{gather*}J_p^{W}(m) = [m+4,m], \qquad  J_p^{W}(m+1) = \{m+1\}, \\  J_p^{W}(m+2) = \{m+2\}, \qquad  J_p^{W}(m+3) = \{m+3\}\;.\end{gather*}
\end{itemize}
The result now follows by induction.
\end{proof}
\endgroup 

We are now ready to prove that Algorithm~\ref{alg:put GN on WLD} gives the Grassmann necklace of the positroid associated to $W$. We do this in two parts: Theorem~\ref{res:alg gives GN} verifies that the sets $(I_1,\dots,I_n)$ obtained from the algorithm do form a Grassmann necklace, and Theorem~\ref{res alg gives correct GN} proves that this Grassmann necklace defines the positroid $M(W)$.

\begin{thm}\label{res:alg gives GN}
The sequence of $k$-subsets $(I_1,\dots,I_n)$ obtained by applying Algorithm~\ref{alg:put GN on WLD} to all vertices of an admissible diagram $W$ is a Grassmann necklace.
\end{thm}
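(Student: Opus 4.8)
The plan is to check directly that the sequence $(I_1,\dots,I_n)$ satisfies the two conditions in Definition~\ref{def:grassmann necklace}. By Corollary~\ref{GN alg well defined} each $I_i$ is a $k$-element subset of $[n]$, and since there are exactly $k$ propagators the function $I_i^W\colon\cP\to[n]$ is a bijection onto the set $I_i$. Fix $i\in[n]$. The loop in Algorithm~\ref{alg:put GN on WLD} sets $j:=i$ and increments $j$ by one on each pass, terminating after at most $n$ passes (Corollary~\ref{GN alg well defined}); hence vertex $i$ is visited only on the first pass, so $i\in I_i$ if and only if $\Prop(i)\neq\emptyset$ in $W$.

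If $\Prop(i)=\emptyset$, then the first pass of the loop for $I_i$ changes nothing, so every subsequent pass proceeds around $W$ exactly as the corresponding pass of the loop for $I_{i+1}$; therefore $I_i=I_{i+1}$ (as sets, indeed as functions), which is the second condition of Definition~\ref{def:grassmann necklace}. Now suppose $\Prop(i)\neq\emptyset$ and let $p_0$ be the clockwise-most propagator on edge $i$, so that the first pass records $i$ and deletes $p_0$; thus $i\in I_i$, $I_i(p_0)=i$, and $p_0$ is the unique propagator with this value. After its first pass the loop for $I_i$ is precisely the loop for the (still admissible) diagram $W\setminus p_0$ started at $i+1$, so $I_i\setminus\{i\}=I_{i+1}^{W\setminus p_0}$ as sets, whereas $I_{i+1}=I_{i+1}^{W}$. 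I would then invoke the following containment, which I regard as the crux of the theorem: \emph{for every admissible $W$, every $p_0\in\cP$, and every starting vertex $m$, one has $I_m^{W\setminus p_0}\subseteq I_m^{W}$.} Granting this, $I_i\setminus\{i\}\subseteq I_{i+1}$; since $|I_i\setminus\{i\}|=k-1$ and $|I_{i+1}|=k$, this forces $I_{i+1}=(I_i\setminus\{i\})\cup\{j\}$ for the unique $j\in I_{i+1}\setminus(I_i\setminus\{i\})$, and $j\neq i$ because $i$ is $<_{i+1}$-maximal and hence $i\notin I_{i+1}$ by Lemma~\ref{lem no fourth vertex}. This is the first condition of Definition~\ref{def:grassmann necklace}.

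To prove the containment I would run the loops for $I_m^{W}$ and $I_m^{W\setminus p_0}$ in parallel, pass by pass, maintaining the invariant that the $W$-run stays ``one propagator ahead'': after each common pass the two runs have recorded exactly the same vertices, except that the $W$-run may additionally carry a single unassigned ``extra'' propagator $e$ (with $e=p_0$ initially), and its set of unassigned propagators is then that of the $(W\setminus p_0)$-run together with $e$. On a pass at a vertex not supporting $e$, both runs act identically. On a pass at a vertex supporting $e$: if $e$ is not the clockwise-most unassigned propagator there, then the clockwise-most one lies in both runs and is unchanged by deleting the less-clockwise $e$, so again both runs act identically; if $e$ is clockwise-most, the $W$-run assigns $e$ there and deletes it, while the $(W\setminus p_0)$-run either has no unassigned propagator there — in which case the extra propagator is absorbed, both runs have now recorded the same vertices, and coincide forever after — or assigns its own clockwise-most remaining propagator $e'$, which becomes the new extra propagator. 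In every case the set of vertices recorded by the $(W\setminus p_0)$-run stays contained in that recorded by the $W$-run; since these end with $k-1$ and $k$ recorded vertices respectively (Corollary~\ref{GN alg well defined}), the containment follows. The per-pass verifications are exactly of the ``locally identical'' type isolated in Remark~\ref{rmk algorithm locally same}.

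The main obstacle is this parallel-run containment. The informal picture — inserting a propagator triggers one chain of ``bumps'' around the circle and contributes exactly one new vertex to the recorded set — is clear, but turning the invariant into a clean induction and checking it against every local configuration at a vertex (in particular handling carefully how deleting a less-clockwise propagator affects which propagator is clockwise-most among those remaining, and the fact that the two runs may finish at different passes) is the part requiring genuine care. Everything else is bookkeeping on top of Corollary~\ref{GN alg well defined} and Lemma~\ref{lem no fourth vertex}. One could alternatively try to read the comparison of $I_i$ with $I_{i+1}$ straight off the cyclic-interval description of Lemma~\ref{vertex cyclic int lem}, which already shows each propagator's contribution can advance by at most one support vertex when the starting vertex is incremented; but the parallel-run argument seems the most transparent route to the single ``vacated'' vertex being exactly $i$.
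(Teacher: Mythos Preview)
Your argument is correct, and it takes a genuinely different route from the paper's proof. The paper proceeds by a minimal-counterexample induction on the number of propagators: assuming some $m\in I_i\setminus\{i\}$ with $m\notin I_{i+1}$, it splits into two cases according to whether the propagator $p$ with $I_i(p)=i$ has an end on edge $i-1$ or on edge $i$. In Case~I it invokes Lemma~\ref{vertex cyclic int lem} to locate the other end of $p$, removes $p$ together with everything inside it, and transfers the putative bad index $m$ to a smaller diagram; Case~II removes only $p$ and shifts the starting index. Your approach instead isolates a single monotonicity statement, $I_m^{W\setminus p_0}\subseteq I_m^{W}$ for all $m$ and all $p_0$, and proves it by coupling the two runs with the ``one extra propagator'' invariant; applying it at $m=i+1$ with $p_0$ the first-assigned propagator then yields $I_i\setminus\{i\}=I_{i+1}^{W\setminus p_0}\subseteq I_{i+1}^W$ directly, with no case split on the position of $p_0$ and no appeal to Lemma~\ref{vertex cyclic int lem}. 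This is more self-contained and produces a reusable lemma; the paper's route has the advantage of exercising the cyclic-interval machinery that is needed elsewhere anyway.

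Two small points. First, in the absorption case your sentence ``both runs have now recorded the same vertices'' is off by one: the $W$-run has just recorded the current vertex while the $(W\setminus p_0)$-run has not, so from that point on the $W$-run's recorded set is the $(W\setminus p_0)$-run's recorded set together with that one extra vertex. Your conclusion is unaffected. Second, the observation that $j\neq i$ (equivalently $i\notin I_{i+1}$) is correct but unnecessary for Definition~\ref{def:grassmann necklace}, which allows $j=i$.
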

\begin{proof}
For each $i \in [n]$, let $I_i$ be the set of vertices assigned to the propagators of $W$ by Algorithm~\ref{alg:put GN on WLD} with starting vertex $i$. By Lemma~\ref{GN alg well defined}, we know that $|I_i| = k$ for each $i \in [n]$. We prove that $I_{i+1} \supseteq I_i \backslash \{i\}$ for all $i \in [n]$; this is equivalent to the definition of Grassmann necklace given in Definition~\ref{def:grassmann necklace}.

Fix $n$, and suppose, by way of contradiction that there exists an admissible diagram for which there exists an $i$ with $m\in I_i\setminus \{i\}$ and $m \not\in I_{i+1}$.  Let $W$ be such a counterexample with the minimal number of propagators.

If $i \not\in I_i$, then there are no propagators supported on $i$ at all.  In this case it is clear that applying Algorithm~\ref{alg:put GN on WLD} at vertex $i$ and vertex $i+1$ produces exactly the same result, i.e. $I_{i+1} = I_i$, contradicting that $W$ is a counterexample.

Now suppose that $i \in I_i$.  Let $p$ be the propagator which contributes $i$ to $I_i$, that is $I^W_i(p) = i$; thus one end of $p$ must lie on either edge $i-1$ or edge $i$.  In both cases let $b$ denote the edge supporting the other end of $p$, i.e. $p = (i, b)$ or $p = (i-1, b)$.

\textbf{Case I}:  Suppose $p$ has one end on edge $i-1$. By Remark~\ref{rmk algorithm locally same} the construction of $I_i^W$ and $I_{i+1}^W$ is locally the same (indeed, locally {\em identical} in this case) from vertex $i+1$ to $b-1$, so we have $I_{i+1}^W \cap [i+1,b-1] = I_{i}^W \cap [i+1,b-1]$. Note that if $b = i+1$ then this interval is empty and the equality remains true.


By Lemma~\ref{vertex cyclic int lem} it must happen that we have $I_{i+1}^W(p) = b$, as otherwise $b$ would never be contributed by $p$.  Consequently, when the construction of $I_{i+1}^W$ reaches vertex $b$ there cannot be any remaining unassigned propagators that are supported on $b$ and clockwise of $p$ (Remark~\ref{clockwise ordering rem}). It follows that the same must be true when $I_i^W$ reaches vertex $b$, since $I_i^W$ and $I_{i+1}^W$ behave identically on the interval $[i+1,b-1]$ as noted in the previous paragraph. 

The previous two paragraphs also imply that $m \geq_i b+1$, since $I_{i+1}^W \cap [i+1,b-1] = I_{i}^W \cap [i+1,b-1]$ and $b \in I_{i+1}^W$.


Now let $W' = (\cP_{out}(p,<_i), [n])$, i.e. the diagram obtained from $W$ by removing both $p$ and all propagators inside $p$. The diagram $W'$ has strictly fewer propagators than $W$ did.

Observe that $I_i^W$ and $I_b^{W'}$ are locally the same from vertex $b$ onwards, since by the above observations the propagators removed from $W$ to create $W'$ are exactly those that come first in the ordering defined by $I_i^W$. Similarly, $I_{i+1}^W$ and $I_{b+1}^{W'}$ are locally the same from $b+1$ onwards. By Remark~\ref{rmk algorithm locally same}, this yields the equalities
\begin{align*}
  I_i^{W} \cap [b,i-1] & = I_b^{W'}, \\
  I_{i+1}^{W} \cap [b+1,i-1] & = I_{b+1}^{W'}.
\end{align*} 
Since we have already noted that $m \geq_i b+1$, this implies that $m\in I_b^{W'}\setminus\{b\}$ and $m\not\in I_{b+1}^{W'}$. This contradicts the minimality of $W$.

\textbf{Case II}: Suppose $p$ has one end on edge $i$.  Note that by assumption we have $I_i(p) = i$; this means that $p$ must be the clockwise-most propagator lying on edge $i$, and hence we must have $I_{i+1}(p) = i+1$ as well.  Observe also that $m\geq_i i+2$ since $i+1\in I_{i+1}$.

Let $W'$ be the diagram obtained from $W$ by removing the propagator $p$.  We have
\begin{align*}
  I_i^{W} \backslash \{i\} & = I_{i+1}^{W'} \\
  I_{i+1}^{W} \backslash \{i+1\} & = I_{i+2}^{W'}\;,
\end{align*}
since in both cases the algorithm in $W'$ proceeds identically to that in $W$ after assigning $p$. Since $m \neq i+1$, we have $m\in I_{i+1}^{W'}\setminus\{i+1\}$ but $m\not\in I_{i+2}^{W'}$, contradicting the minimality of $W$. This completes the proof.
\end{proof}

Thus we have shown that Algorithm \ref{alg:put GN on WLD} does produce a Grassmann necklace. Next we check that it produces the correct one, i.e. the Grassmann necklace that defines $M(W)$. 
\begin{thm}\label{res alg gives correct GN} 
The Grassmann necklace $(I_1 \ldots I_n)$ from Theorem~\ref{res:alg gives GN} is the Grassmann necklace of the positroid $M(W)$.
\end{thm}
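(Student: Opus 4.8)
The plan is to identify Algorithm~\ref{alg:put GN on WLD} (run with starting vertex $i$) with the matroid greedy algorithm that produces the $<_i$-lexicographically minimal basis of $M(W)$. Since the $i$th term of the Grassmann necklace of a positroid is precisely this minimal basis, and since Theorem~\ref{res:alg gives GN} already tells us that $(I_1,\dots,I_n)$ is a Grassmann necklace, hence the Grassmann necklace of \emph{some} positroid, this identification immediately forces that positroid to be $M(W)$.

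First I would record two bookkeeping facts. By Corollary~\ref{GN alg well defined} the function $I_i : \cP \to [n]$ is injective, and by construction $p \in \Prop(I_i(p))$, so $I_i(p) \in V(p)$ for every $p$; inverting $I_i$ on its image and invoking Corollary~\ref{lem basis as perm} shows that $I_i$ (the set) is a basis of $M(W)$. More generally, if at the start of step $j$ the algorithm has assigned vertices $\{v_1,\dots,v_s\}$ to distinct propagators $p_1,\dots,p_s$, then $v_\ell \mapsto p_\ell$ is an injective support-respecting map, so by Corollary~\ref{lem basis as perm} the set $\{v_1,\dots,v_s\}$ is independent, witnessed by a matching whose image is exactly the set of already-removed propagators.

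The heart of the argument is then the claim: at the start of step $j$ in the construction of $I_i$, writing $G$ for the set of vertices assigned so far, the set $G\cup\{j\}$ is independent in $M(W)$ if and only if some propagator supported on $j$ has not yet been removed from the diagram. Granting this, the algorithm adjoins $j$ to $I_i$ precisely when it would be adjoined by the matroid greedy algorithm (scanning the ground set in $<_i$ order and keeping each element that preserves independence), so the two outputs coincide; hence $I_i$ is the $<_i$-lex-minimal basis of $M(W)$, equivalently the $\gale{i}$-minimum among all bases, equivalently the $i$th term of the Grassmann necklace of $M(W)$ by the characterization~\eqref{basesofmatroids} together with \cite[Theorem 17.1]{Postnikov}. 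The ``if'' direction of the claim is immediate from the bookkeeping above: extend the witnessing matching by sending $j$ to any un-removed propagator supported on it.

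The ``only if'' direction is the real obstacle, and it is the point at which the non-crossing hypothesis and the ``clockwise-most'' rule in the algorithm are essential, since the statement is false for arbitrary matroids. I expect to argue in the contrapositive: assuming every propagator supported on $j$ has already been removed, produce a subset $U \subseteq G\cup\{j\}$ with $|\Prop(U)| < |U|$, certifying dependence via Theorem~\ref{thm WLD defines matroid}. One route mimics the minimal-configuration analysis in the proof of Lemma~\ref{lem no fourth vertex}: the propagators with an end on edge $j-1$ or edge $j$ were all removed at earlier steps, and the ``clockwise-most'' rule forces each such removal to have consumed a vertex strictly $<_i$-below $j$ lying in that propagator's support, so a short induction on the nesting of these propagators (using non-crossingness to control where their other ends can lie) isolates an over-subscribed $U$. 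Alternatively — and perhaps more in the spirit of this section — one can prove the theorem by induction on $|\cP|$, peeling off a propagator $p$ supplied by Lemma~\ref{lem sian} exactly as in the proofs of Lemma~\ref{vertex cyclic int lem} and Theorem~\ref{res:alg gives GN}, comparing $\cB(M(W))$ with $\cB(M(W\setminus\{p\}))$ through Corollary~\ref{lem basis as perm} and comparing the necklaces $I^W_\ast$ with $I^{W\setminus\{p\}}_\ast$ using the explicit case analysis already carried out for Lemma~\ref{vertex cyclic int lem}, where the non-emptiness and cyclic-order properties (in the form highlighted in Remark~\ref{rmk cyclic}) are what make the comparison close up. Either way the crux is the same: verifying that the combinatorics of propagator removal faithfully tracks independence in $M(W)$, which is exactly where the geometric admissibility conditions do their work.
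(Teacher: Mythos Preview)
Your framing is exactly the paper's: show each $I_i$ is a basis and that nothing $<_i$-lex-smaller is. The ``if'' direction of your greedy claim is fine and matches the paper's one-line argument via Corollary~\ref{lem basis as perm}. The gap is that you have not actually carried out the ``only if'' direction; you offer two routes and commit to neither. The paper takes your first route, but the construction of the dependent set is specific and not something one can wave at: given the first index $m$ where $j_m <_i a_m$, set $Q_0 = \Prop(j_m)$ and iterate $Q_{\ell+1} = \Prop(I_i(Q_\ell))$; the dependent subset is $I_i(Q)\cup\{j_m\}$, which has $|Q|+1$ vertices but supports only the propagators of $Q$. What makes this work is the containment $Q \subseteq \cP_{m-1}$ (the propagators already removed), and that is the non-trivial step.

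Your phrase ``a short induction on the nesting of these propagators'' undersells what is needed. The inductive step (for $q \in Q_\ell\setminus Q_{\ell-1}$, show $q \in \cP_{m-1}$) splits on whether $q$ is clockwise or counterclockwise of the $p_r \in Q_{\ell-1}$ it shares a vertex with. The clockwise case is immediate from Remark~\ref{clockwise ordering rem}, but the counterclockwise case is not: one must track where the \emph{other} end of $p_r$ lies, and if it is on the wrong side one has to descend to $p_s \in Q_{\ell-2}$, then possibly $Q_{\ell-3}$, and so on, until reaching $Q_0 = \Prop(j_m)$, at which point Lemma~\ref{lem no fourth vertex} closes the argument. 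This backtracking through the $Q$-chain is where non-crossingness and the clockwise-most rule actually bite, and it is not obvious from your sketch that you have this mechanism in mind. Your second route (induction on $|\cP|$ via Lemma~\ref{lem sian}) is not what the paper does here and would require separately controlling how $\cB(M(W))$ changes under propagator removal, which is extra work you have not outlined.
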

\begin{proof}
We have shown in the previous theorem that $(I_1,\dots,I_n)$ is a Grassmann necklace; it remains to check that this Grassmann necklace corresponds to the positroid $M(W)$. Recall from the discussion following Definition~\ref{def:grassmann necklace}, and equation \eqref{basesofmatroids} in particular, that it suffices to show the following points.
\begin{itemize}
\item For each $i \in [n]$, $I_i$ is a basis for $M(W)$.
\item If $J$ is lexicographically smaller than $I_i$ with respect to $<_i$ (for any $i \in [n]$) then $J$ is not a basis for $M(W)$. 
\end{itemize}
The first point follows immediately.  The algorithm assigns each propagator to a vertex in the support of that propagator, so $I_i$ is a basis for $M(W)$ by Corollary~\ref{lem basis as perm}.

Now fix $i \in [n]$, and write $I_i = \{a_1 <_i a_2 <_i \dots <_i a_k\}$. Label the propagators of $W$ according to the ordering imposed by $I_i$, i.e. so that $I_i(p_{r}) = a_{r}$ for $1 \leq r \leq k$. Define $\cP_r: = \{p_1,p_2,\dots,p_r\}$ for each $r\geq 0$, with $\cP_0 = \emptyset$.


Let $J = \{j_1 <_i j_2<_i \dots <_i j_k\}$ be a $k$-set, and suppose that $J<_i I_i$ in the $<_i$-lexicographic order. Let $m$ be the minimal index such that $j_m <_i a_m$, i.e. the first entry at which $J$ and $I_i$ differ. We will show that $J$ must be a dependent set.

If $\Prop(j_m) = \emptyset$ then $J$ is automatically dependent since the corresponding column of $C(W)$ contains only zeros, so suppose henceforth that $\Prop(j_m) \neq \emptyset$. 

In order to exhibit a dependent subset of vertices in $J$, we first define a sequence of sets of propagators by
\[Q_0 = \Prop(j_m), \qquad Q_{l+1} = \Prop(I_i(Q_l))\quad \forall l \geq 0.\]
It is easily verified that this is an increasing chain of sets.  Indeed, if $q \in Q_l$ then $I_i(q) \in I_i(Q_l)$, and since $I_i(q)$ is one of the vertices that supports $q$ we must have $q \in \Prop(I_i(Q_l)) = Q_{l+1}$. Since $\cP$ is finite, the sequence stabilizes in finitely many steps. Let $Q := \bigcup_{l \geq 0} Q_l$ be this set of propagators.

We will show that $Q \subseteq \mathcal{P}_{m-1}$, which will imply the theorem by the following argument.  If $Q \subseteq \cP_{m-1}$ then we have $I_i(Q)\subseteq J$, and hence $I_i(Q) \cup \{j_m\} \subseteq J$ as well. Since $I_i(Q) \cup \{j_m\}$ only supports the propagators in $Q$ but has $|Q|+1$ vertices, it is a dependent set in $M(W)$. Thus $J$ is also a dependent set, and the Grassmann necklace $(I_1,\dots,I_n)$ constructed from $W$ does indeed define the positroid $M(W)$.

All that remains is to prove $Q\subseteq \mathcal{P}_{m-1}$, which we do by inductively proving $Q_l \subseteq \mathcal{P}_{m-1}$ for all $l \geq 0$. The base case $Q_0 \subseteq \cP_{m-1}$ is immediate; since $j_m \not\in I_i$ by construction, there cannot be any unassigned propagators supported on $j_m$ when the algorithm reaches that vertex, i.e. ${\Prop(j_m) \subseteq \cP_{m-1}}$.

Now assume that $Q_0,\dots,Q_{l-1} \subseteq \cP_{m-1}$, and consider $Q_{l}$. If $Q_{l} \setminus Q_{l-1} = \emptyset$ then we are done, with $Q = Q_l$.  Otherwise, let $q \in Q_{l} \setminus Q_{l-1}$. By construction there is some propagator $p_r \in Q_{l-1}$ such that $p_r$ and $q$ are both supported on $I_i(p_r) = a_r$. If $q$ is clockwise of $p_r$ on vertex $a_r$ then we immediately have $q \in \cP_{m-1}$ by Remark~\ref{clockwise ordering rem}.  It remains to consider the case that $q$ is counterclockwise of $p_r$ at $a_r$.

Note that we cannot have $a_r+ 1 = j_m$ when $q$ is counterclockwise of $p_r$, since $q \not\in \Prop(j_m)$ by assumption. If the other end of $p_r$ (i.e. the end not supported on $a_r$) lies on an edge between $a_r$ and $j_m$, then since $q$ is inside $p_r$ with respect to $<_{a_r-1}$ we must have $V(q) \subseteq [a_r-1,j_m]$, and hence $q \in \cP_{m-1}$ by Lemma~\ref{lem no fourth vertex}; this is illustrated in Figure~\ref{fig gn proof induction 1}. In particular, if $l=1$ then $p_r \in Q_{0} = \Prop(j_m)$ and we must be in this case. 

If not, then there must be some propagator $p_s \in Q_{l-2}$ such that $p_r$ and $p_s$ are both supported on $I_i(p_s) := a_s$; see Figure~\ref{fig gn proof induction 2}. Since $p_r$ is clockwise of $q$, and $q$ cannot be supported on $I_i(p_s)$ since $q \not\in Q_{l-1}$, it follows that $I_i(p_s) <_i I_i(p_r)$. Therefore, $q$ is inside $p_s$ with respect to $<_{a_s - 1}$ as well.

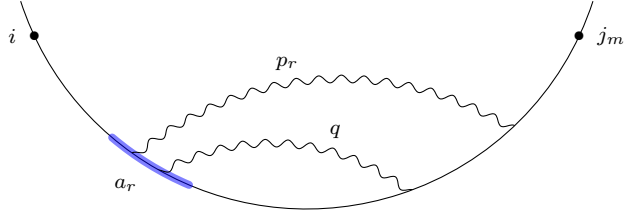
\begin{figure}
\[
\begin{tikzpicture}[baseline=(current bounding box.east)]
  \begin{scope}
  \drawWLDfragment[10]{4}{0.4} 
  \node at (270:\radius*0.2) {}; 
  \node at (270:\radius*1.2) {}; 
  \centerarc[linehighlight](0,0)(\zero + 2.7*\step:\zero + 3.9*\step:\radius)
  \newnode{1}{i}
  \newnode[right]{10}{j_m}
  \begin{scope}
\clipcenterarc(0,0)(\startpoint:\endpoint:\radius)
\newpropbend{3}{8.5}{40}
\newpropbend{3.3}{7}{40}
\end{scope}
\node at (\zero+5*\step:\radius*0.52) {\scriptsize $p_r$};
\node at (\zero + 6*\step:\radius*0.75) {\scriptsize $q$};
\node at (\zero + 3.2*\step:\radius*1.1) {\scriptsize $a_r$};
\end{scope}
\end{tikzpicture}
\]
\caption{If $q$ is counterclockwise of $p_r$ and the other end of $p_r$ lies on an edge between $a_r$ and $j_m$, then $V(q) \subseteq [a_{r-1},j_m]$. The vertex $a_r$ lies somewhere within the (blue) highlighted region.}\label{fig gn proof induction 1}
\end{figure}

\begin{figure}
\[
\begin{tikzpicture}[baseline=(current bounding box.east)]
  \begin{scope}
  \drawWLDfragment[10]{4}{0.35} 
  \node at (270:\radius*0.3) {}; 
  \node at (270:\radius*1.2) {}; 
  \centerarc[linehighlight](0,0)(\zero + 4.4*\step:\zero + 5.2*\step:\radius)
  \newnode{1}{i}
  \newnode[below]{6}{a_r}
  \newnode[right]{10}{j_m}
  \begin{scope}
\clipcenterarc(0,0)(\startpoint:\endpoint:\radius)
\clip (\zero + 5.5*\step:\radius*1.5) circle (\radius);
\newpropbend{5}{18}{30} 
\newpropbend{5.5}{15}{30} 
\newpropbend{6.3}{12}{30} 
\end{scope}
\node at (\zero+5.4*\step:\radius*0.65) {\scriptsize $p_r$};
\node at (\zero+3.7*\step:\radius*0.71) {\scriptsize $p_s$};
\node at (\zero + 7.3*\step:\radius*0.75) {\scriptsize $q$};
\node at (\zero + 4.8*\step:\radius*1.07) {\scriptsize $a_s$};
\end{scope}
  \begin{scope}[shift = {(8.5,0)}]
  \drawWLDfragment[10]{4}{0.35} 
  \centerarc[linehighlight](0,0)(\zero + 2.3*\step:\zero + 3.7*\step:\radius)
  \centerarc[linehighlight](0,0)(\zero + 6.1*\step:\zero + 7.5*\step:\radius)
  \newnode{1}{i}
  \newnode[right]{10}{j_m}
  \begin{scope}
\clipcenterarc(0,0)(\startpoint:\endpoint-\step:\radius)
\newpropbend{2.9}{15}{30} 
\newpropbend{3}{6.8}{45} 
\newpropbend{7}{15}{30} 
\end{scope}
\node at (\zero+5.3*\step:\radius*0.75) {\scriptsize $p_r$};
\node at (\zero+2.1*\step:\radius*0.8) {\scriptsize $p_s$};
\node at (\zero + 7.7*\step:\radius*0.8) {\scriptsize $q$};
\node at (\zero + 3*\step:\radius*1.07) {\scriptsize $a_s$};
\node at (\zero + 7*\step:\radius*1.07) {\scriptsize $a_r$};
\end{scope}
\end{tikzpicture}
\]
\caption{If the other end of $p_r$ does not lie on an edge between $a_r$ and $j_m$, then we must have a propagator $p_s \in Q_{l-2}$ in one of the depicted configurations. The (blue) highlighted regions indicate the approximate position of a vertex whose precise position is unknown. Cut-off propagators indicate that the location of the propagator's other end is unknown; in particular, no meaning should be ascribed to the angle at which these propagators are drawn.}\label{fig gn proof induction 2}
\end{figure}
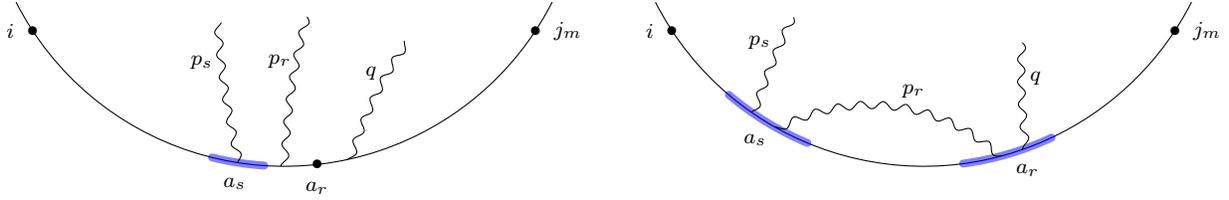
Now repeat the argument above with $p_s$ in place of $p_r$.  If the other end of $p_s$ lands on an edge between $q$ and $j_m$ then we have $q \in \cP_{m-1}$ as above; if not then there exists some $p_t \in Q_{l-3}$ with $p_s$ supported on $I_i(p_t)$ and $q$ on the inside of $p_t$ in the $a_t-1$ order. Therefore, $I_i(p_t) <_i I_i(p_r)$.
Since there are finitely many steps this process can take before reaching $Q_0$, and any $p_* \in Q_0  = \Prop(j_m)$ has $j_m$ in its support by definition, this argument eventually terminates with the construction of a propagator constraining the positioning of $q$ as required. This completes the induction step, and hence the proof.\end{proof}

We have now shown that Algorithm~\ref{alg:put GN on WLD} associates the correct Grassmann necklace to a Wilson loop diagram. Therefore, we refer to Algorithm~\ref{alg:put GN on WLD} as the Grassmann necklace algorithm. We close this section with a simple observation about the Grassmann necklaces of positroids associated to Wilson loop diagrams.

In an arbitrary Grassmann necklace, it is possible for an index $i$ to appear in no terms of the Grassmann necklace (a {\bf loop}) or in all terms of the necklace (a {\bf coloop}). Using Theorems~\ref{res:alg gives GN} and \ref{res alg gives correct GN}, a characterization of the loops and coloops of the Grassmann necklace associated to a Wilson loop diagram follows easily.

\begin{cor}\label{no coloops}
Grassmann necklaces coming from admissible Wilson loop diagrams have no coloops. A vertex $j$ is a loop if and only if $j$ supports no propagators. 
\end{cor}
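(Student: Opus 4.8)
The plan is to read both statements directly off Algorithm~\ref{alg:put GN on WLD}, with the coloop claim reducing immediately to Lemma~\ref{lem no fourth vertex}. If $\cP=\emptyset$ then every $I_i^W$ is empty and every vertex is non-supporting, so both assertions hold trivially; assume from now on that $W$ has at least one propagator.

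For the loop characterization, the backward direction is a direct consequence of how the algorithm runs: deleting propagators never creates new ones, so if $j$ supports no propagators in $W$ then $\Prop(j)=\emptyset$ at \emph{every} stage of the algorithm, regardless of the starting vertex. Hence step $j$ never contributes a vertex, so $j\notin I_i^W$ for all $i$ and $j$ is a loop. For the forward direction I argue the contrapositive. Suppose $\Prop(j)\neq\emptyset$ and run the algorithm with starting vertex $i=j$. The first step is step $j$, and since no propagators have yet been deleted we have $\Prop(j)\neq\emptyset$; the algorithm therefore assigns $j$ to the clockwise-most propagator supported on $j$, so $j\in I_j^W$ and $j$ is not a loop.

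For the coloop statement, it is enough to show that $j\notin I_{j+1}^W$ for every vertex $j$, since then no index can lie in all $n$ terms of the necklace. Suppose instead that $j\in I_{j+1}^W$. Then some propagator $p\in\cP$ contributes $j$ to $I_{j+1}^W$, and since the algorithm only ever assigns a propagator to one of its supporting vertices we have $j\in V(p)$. But $j$ is the $<_{j+1}$-maximal element of $[n]$, hence in particular the $<_{j+1}$-maximal element of $V(p)$; this contradicts Lemma~\ref{lem no fourth vertex}. Therefore $j\notin I_{j+1}^W$ and the Grassmann necklace has no coloops.

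There is no serious obstacle here: the only thing one needs to spot is which term of the necklace to inspect --- $I_j^W$ to exhibit $j$ as a non-loop, and $I_{j+1}^W$, in which $j$ sits at the very end of the $<_{j+1}$ order, to rule out coloops --- after which Lemma~\ref{lem no fourth vertex} and the definition of the algorithm do all the work.
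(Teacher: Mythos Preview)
Your proof is correct and follows essentially the same approach as the paper's own proof: both argue the coloop claim by observing that $j$ is $<_{j+1}$-maximal (equivalently, $i-1$ is $<_i$-maximal) and invoking Lemma~\ref{lem no fourth vertex}, and both handle the loop characterization by inspecting $I_j^W$ directly. Your treatment is slightly more careful in that you explicitly dispose of the $\cP=\emptyset$ case, which is needed since Lemma~\ref{lem no fourth vertex} assumes at least one propagator.
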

\begin{proof}
For any $i \in [n]$, $i-1$ is maximal with respect to the $<_{i}$ order and so by Lemma~\ref{lem no fourth vertex} there can be no propagator $p$ with $I_{i}(p) = i-1$. Thus $i-1 \not\in I_{i}$, so $i-1$ is not a coloop. Since this holds for any $i \in [n]$, the Grassmann necklace admits no coloops.

If $j \in [n]$ is a loop then in particular we have $j \not\in I_j$, which can only happen if there are no propagators supported on vertex $j$. Conversely, if $j$ supports no propagators, then Algorithm~\ref{alg:put GN on WLD} never assigns a propagator to $j$ and hence $j \not\in I_i$ for all $i \in [n]$.
\end{proof}

\section{Dimension of the Wilson loop cells}\label{sec dim}

In continuing to characterize the positroid cells associated to Wilson loop diagrams, our next goal is to show that the dimension of the positroid cell defined by a Wilson loop diagram $(\cP, [n])$ is $3|\cP|$.  This is achieved in Theorem \ref{thm dim}.  

Note that the positroid cells associated to the Amplituhedron are of dimension $4|\cP|$. In \cite[section 2]{non-orientable}, Agarwala and Marcott associated a Deodhar component\footnote{Since Deodhar components play no role in this paper, the interested reader is referred to \cite{non-orientable} for more details on this.} to each Wilson loop diagram, and showed that the dimension of this Deodhar component is the sum of $|\cP|$ and the dimension of the associated positroid cell.
Combining this with Theorem~\ref{thm dim}, we see that the Deodhar component associated to a Wilson loop diagram is $4|\cP|$-dimensional, showing that the geometry underlying the Wilson loop diagrams is consistent with the geometry underlying the Amplituhedron. It is worth noting that the Amplituhedon is defined by a projection of a set of $4|\cP|$ dimensional positroid cells (i.e. subspaces of $\Gr(k,n)$), while the corresponding $4|\cP|$-dimensional cells of the Wilson loop diagram lie in $\mathbb{G}_\R(k,n+1)$.
While it is conjectured that the geometry of the Amplituhedron is related to the geometry of the Wilson loop diagrams, in part due to their common physical origin, the $3 |\cP|$ dimensional positroid cells defined by the Wilson loop diagrams do not seem to bear any geometric relationship to the $4|\cP|$ dimensional positroid cells parametrized by the Amplituhedron.

Marcott in \cite[Theorem 8.4]{WLDdim} also gave a proof of the $3|\cP|$-dimensionality of $M(W)$ which is much more geometric. However, it is not easy to track the effect of a particular propagator in that proof. By contrast, our approach is much more constructive, and elucidates how each additional propagator contributes to the dimension. By combining Algorithms~\ref{alg:put GN on WLD} and \ref{alg:GN to Le} (converting from Wilson loop diagram to Grassmann necklace, and Grassmann necklace to Le diagram respectively) we explicitly describe the effect of adding another propagator to a Wilson loop diagram in terms of the plusses of the associated Le diagrams. Recall that the dimension of a positroid cell is equal to the number of plusses in its associated Le diagram \cite[Theorem 6.5]{Postnikov}.  Thus understanding the effect of adding another propagator is sufficient to give a recursive proof of the $3|\cP|$-dimensionality of the cells.


In order to prove our main theorem of this section, we first require several lemmas. 

We begin by relating non-supporting vertices to columns of zeros in the Le diagram (Lemma \ref{lem uncovered}) and verifying that performing dihedral transformations on a Wilson loop diagram does not change the dimension of the associated Le diagram (Lemma \ref{lem dihedral}). This allows us to restrict our attention to a particular subset of admissible Wilson loop diagrams (Lemma~\ref{lem good p}) whose Grassmann necklaces can be more easily described. 

\vspace{0.5em} 
\begin{lem}\label{lem uncovered}
Let $W$ be an admissible Wilson loop diagram with $k$ propagators, and with a vertex $i$ that supports no propagators.  Let $V$ be $W$ with vertex $i$ removed.  Then the Le diagram of $W$ is obtained from the Le diagram of $V$ by inserting an extra column containing all $0$s in position $i$ (i.e. such that the new column has the label $i$).
\end{lem}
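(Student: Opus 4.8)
The plan is to compute both Le diagrams by the route Wilson loop diagram $\to$ Grassmann necklace $\to$ Le diagram, i.e. using Algorithm~\ref{alg:put GN on WLD} followed by Algorithm~\ref{alg:GN to Le}, and then to compare the two outputs column by column. Throughout, let $\phi\colon [n]\setminus\{i\}\to[n-1]$ be the order-preserving bijection that deletes $i$ (so $\phi$ fixes $1,\dots,i-1$ and subtracts $1$ from $i+1,\dots,n$); this is exactly the relabelling that turns $W$ into $V$, and under it the propagators of $W$ and $V$ correspond since removing a non-supporting vertex does not change any propagator.

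First I would record the two ingredients. Since vertex $i$ supports no propagators, Corollary~\ref{no coloops} says $i$ is a loop, so $i\notin I_m^W$ for every $m\in[n]$. Moreover, Algorithm~\ref{alg:put GN on WLD} does nothing at a non-supporting vertex (as already used in the proof of Lemma~\ref{vertex cyclic int lem}), so running it on $W$ from any starting vertex $m\neq i$ is the run on $V$ from $\phi(m)$ with the no-op at $i$ deleted; hence $\phi(I_m^W)=I_{\phi(m)}^V$ for all $m\neq i$, while $I_i^W=I_{i+1}^W$. In other words the Grassmann necklace of $V$ is obtained from that of $W$ by deleting the $i$-th entry and relabelling by $\phi$ (with the mild caveat that when $i=1$ one uses $I_2^W=I_1^W$ as the first entry).

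Next, feed both necklaces into Algorithm~\ref{alg:GN to Le} and compare. The easy half of the statement is that column $i$ of the Le diagram of $W$ is all $0$s: every $+$ is placed in some column $b\in I_m^W\setminus I_1^W$, and since $i$ lies in no $I_m^W$, column $i$ never receives a $+$. For the rest, note that the row set of $W$'s Le diagram is $I_1^W$ and that of $V$'s is $I_1^V=\phi(I_1^W)$, and that $i$, being outside $I_1^W$, is a column label of $W$; one checks that deleting the labelled column $i$ from the $k\times(n-k)$ box precisely implements deleting $i$ from the column-label set $[n]\setminus I_1^W$ and applying $\phi$, so the shapes match. Finally, for each $m$ the pairs of sets $I_1^W\setminus I_m^W$, $I_m^W\setminus I_1^W$ correspond under the order-preserving $\phi$ to $I_1^V\setminus I_{\phi(m)}^V$, $I_{\phi(m)}^V\setminus I_1^V$, so the $j$-th ``$a_j\to b_j$'' plus in the $W$-computation maps to the $j$-th plus in the $V$-computation, and the remaining $0$s match as well. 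Thus the Le diagram of $W$, with column $i$ deleted and columns relabelled by $\phi$, is exactly the Le diagram of $V$; equivalently, re-inserting an all-$0$ column in position $i$ into the Le diagram of $V$ recovers that of $W$.

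The main obstacle I expect is bookkeeping rather than conceptual: keeping the relabelling $\phi$, the $<_1$-lex orders used inside Algorithm~\ref{alg:GN to Le}, and the two box sizes $k\times(n-k)$ and $k\times(n-1-k)$ mutually consistent, and in particular the corner case $i=1$, where the first term of the necklace and the right-most labels/columns of the diagram shift. An alternative, more structural route is to note that a loop of a positroid always corresponds to an all-$0$ column of its Le diagram and that $M(V)$ is simply $M(W)$ with the loop $i$ deleted, which is immediate from Theorem~\ref{thm WLD defines matroid} since $\Prop$ is unchanged by removing a non-supporting vertex; but since the tools available to us here are phrased via the explicit Algorithm~\ref{alg:GN to Le}, the algorithmic comparison above is the cleaner path.
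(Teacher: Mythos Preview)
Your proposal is correct and follows essentially the same route as the paper: relate the Grassmann necklaces of $W$ and $V$ via the relabelling $\phi$ (the paper phrases this as ``duplicating the $i$th element and incrementing indices $\geq i$''), invoke Corollary~\ref{no coloops} to see that $i$ is a column label, and then read off the Le diagram. The paper's proof is terser and only explicitly argues the shape, leaving the matching of plusses implicit in the Grassmann necklace correspondence; your version spells out the $a_j\to b_j$ matching via Algorithm~\ref{alg:GN to Le}, which is a welcome bit of extra care but not a different idea.
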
 

\begin{proof}
  By Algorithm~\ref{alg:put GN on WLD} the Grassmann necklace of $W$ is obtained from the Grassmann necklace of $V$ by duplicating the $i$th element of the Grassmann necklace of $V$, and incrementing all indices greater than $i$ by $1$ in each Grassmann necklace element.  Formally, 
  \[
  I_j^{W} =
  \begin{cases}
    \{\ell \in I_j^{V} : \ell < i\} \cup \{\ell+1 \in I_j^{V} : \ell \geq i\} & \text{if $j\leq i$} \\
    \{\ell \in I_{j-1}^{V} : \ell < i\} \cup \{\ell+1 \in I_{j-1}^{V} : \ell \geq i\} & \text{if $j > i$.}
  \end{cases}
  \]

By Corollary~\ref{no coloops} we know that $i \not\in I_1^W$, and so $i$ must label a horizontal edge on the boundary of the Le diagram of $W$, i.e. it must be a column label. The shapes of the Le diagram of $V$ and $W$ are the same except for the insertion of this column, since $I_1$ is the same for $V$ and $W$ except for the incrementation of the indices  greater than or equal to $i$ in the transition from the necklace for $V$ to the necklace for $W$. 
\end{proof}
\vspace{0.5em}

\begin{lem}\label{lem dihedral}
If two Wilson loop diagrams differ by a dihedral transformation then their Le diagrams have same number of plusses.
\end{lem}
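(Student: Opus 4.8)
The plan is to reduce the statement to the fact that the dimension of a positroid cell is unchanged by the dihedral action on the ground set $[n]$, and to make this self-contained using the matrix $C(W)$. The dihedral group is generated by the rotation $\rho\colon v\mapsto v+1$ and a reflection $\tau\colon v\mapsto -v$ (read modulo $n$), so it suffices to treat the two cases $W' = \rho W$ and $W' = \tau W$. First I would check that $\rho W$ and $\tau W$ are again admissible, so that their Le diagrams are defined: conditions (1) and (2) of Definition~\ref{admisdfn} involve only the cardinalities of $\cP$, $[n]$, $Q$ and $V(Q)$, all of which are unchanged by a bijective relabelling of the vertices, while condition (3), non-crossing, is a property of the cyclic arrangement of the chords and is manifestly invariant under both rotation and reflection.

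Next I would record how $C(W)$ transforms. Rotating the diagram relabels the edges consistently, so that the propagator $\rho p$ has vertex support $V(\rho p) = \{v+1 : v\in V(p)\}$; hence $C(\rho W)$ is obtained from $C(W)$ by the corresponding cyclic shift of the columns, together with a relabelling of the rows and a renaming of the indeterminates $x_{p,q}$, neither of which affects the parametrized subspace. Similarly $\tau$ reverses the cyclic order, so $V(\tau p) = \{-v : v\in V(p)\}$ and $C(\tau W)$ is $C(W)$ with its columns listed in reverse order, again up to an irrelevant row relabelling and renaming of variables. In both cases $C(W') = C(W)\,P$ for a permutation matrix $P$ (a cyclic shift, respectively a reversal), modulo those harmless changes.

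The main point is then immediate. Right multiplication by the invertible matrix $P$ induces an automorphism $\phi_P$ of $\Gr(k,n)$ sending a subspace with row-space representative $A$ to the one represented by $AP$; since the row span of $C(W)P$ equals the row span of $C(W)$ multiplied by $P$, this automorphism carries the image of the parametrization by $C(W)$ onto the image of the parametrization by $C(W')$. By \cite[Theorem 8.4]{WLDdim} (together with \cite[Theorem 3.41]{wilsonloop}) these images are Zariski dense in $\overline{\Sigma(W)}$ and $\overline{\Sigma(W')}$ respectively, so $\phi_P$ restricts to an isomorphism $\overline{\Sigma(W)}\xrightarrow{\ \sim\ }\overline{\Sigma(W')}$. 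Isomorphic varieties have equal dimension, hence $\dim\Sigma(W) = \dim\Sigma(W')$, and by \cite[Theorem 6.5]{Postnikov} this common dimension is exactly the number of plusses in each of the two Le diagrams.

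I expect the only delicate part to be the bookkeeping in the second paragraph: one must track carefully how the labels of the \emph{edges} (as opposed to the vertices) change under $\rho$ and $\tau$, and confirm that the resulting subspace really is $\Sigma(W')$, the positroid cell of the relabelled diagram, and not some further twist of it. All of this is routine. Alternatively one could bypass $C(W)$ entirely and invoke the standard fact that the positroid stratification of $\Gr(k,n)$, together with the dimensions of its strata, is preserved by the dihedral action on $[n]$ (column cyclic shift and column reversal, each with a sign); the argument above simply phrases that fact in the language of this paper.
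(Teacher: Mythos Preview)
Your proposal is correct, but it takes a different route from the paper's. The paper argues purely combinatorially: it invokes Postnikov's formula $\dim \Sigma = k(n-k) - A(\pi_M)$, where $A(\pi_M)$ counts alignments in the chord-diagram representation of the decorated permutation, and then observes (tracing through Algorithm~\ref{alg:put GN on WLD}) that a dihedral move on $W$ induces the corresponding dihedral move on that chord diagram, which manifestly preserves the number of alignments. Your argument is geometric: you realise the dihedral move as right-multiplication of $C(W)$ by a permutation matrix, cite Marcott's density result to identify the closure of the image of the $C(W)$-parametrization with $\overline{\Sigma(W)}$, and conclude that the induced automorphism of the full Grassmannian carries one cell closure isomorphically onto the other. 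This is valid, but Section~\ref{sec dim} is expressly written to give a proof of $3|\cP|$-dimensionality independent of \cite{WLDdim}; leaning on Marcott's Theorem~8.4 for a lemma inside that very argument is against the spirit of the section, even if not logically circular (you use only the density statement, not the dimension count). The paper's route buys self-containedness and stays entirely within the combinatorics of Grassmann necklaces and decorated permutations; yours buys a one-line conceptual explanation at the cost of that external dependence. The alternative you sketch in your final sentence---that the positroid stratification and the dimensions of its strata are invariant under the dihedral action on $[n]$---is essentially what the paper does, once one unpacks that ``standard fact'' via the alignment count.
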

\begin{proof}
Recall that the number of plusses in a Le diagram is given by the dimension of the corresponding positroid. By \cite[Proposition 17.10]{Postnikov}, the dimension of a positroid $M$ is $k(n-k) - A(\pi_M)$, where $A(\pi_M)$ denotes the number of alignments of the decorated permutation $\pi_M$ of $M$. The alignments of a decorated permutation are defined in terms of its chord diagram representation, where an alignment is any pair of non-crossing chords that share the same orientation. See \cite[Figure 17.1]{Postnikov} and preceding discussion for definitions and more details.

With the help of Algorithm \ref{alg:put GN on WLD}, it is now easy to trace the effect of a dihedral transformation on a Wilson loop diagram through its Grassmann necklace to its decorated permutation to its chord diagram, and see that dihedral transformations of a Wilson loop diagram $W$ correspond to dihedral transformations of the chord diagram representation of $\pi_{M(W)}$. 

Since the number of alignments in a chord diagram is clearly preserved under dihedral transformations, the result follows.
\end{proof}

\vspace{0.5em}

\begin{lem}\label{lem good p}
  Let $W$ be an admissible Wilson loop diagram with $k \geq 1$ propagators.  Then there is some dihedral transformation of $W$ such that the resulting diagram $W'$ has a propagator $p$ with the following properties.
  \begin{itemize}
  \item The propagator $p = (i, n-1)$ for some $i$, and $p$ has no propagators inside it with respect to the $<_i$ ordering.
  \item Either the edge $i$ in $W'$ only supports $p$, or the edge $i$ in $W'$ supports exactly one other propagator $q = (j,i)$, with no other propagators inside $q$ in the $<_j$ ordering. 
  \end{itemize}
\end{lem}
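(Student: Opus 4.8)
The plan is to use Lemma~\ref{lem sian} to locate a suitable ``small'' configuration in $W$, then apply a dihedral transformation to move that configuration into the desired position at vertices around $n-1$. First I would dispose of non-supporting vertices: by Lemma~\ref{lem uncovered} (and the remark preceding Lemma~\ref{lem sian}) non-supporting vertices play no role here, so after a harmless rotation I may assume every vertex of $W$ supports at least one propagator, or else handle the remaining case directly by placing $p$ next to a block of non-supporting vertices. With all vertices supporting, Lemma~\ref{lem sian} guarantees one of two configurations: either (a) a propagator $p_0$ of length at most $6$ with a length-$2$ propagator on one side and nothing else on that side, or (b) a pair of length-$2$ propagators $(i,i+2)$ and $(j,j+2)$ with $j\in\{i+2,i+3,i+4\}$ and no other propagator ending strictly between $i+2$ and $j+1$.

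In configuration (a), take $p$ to be the length-$2$ propagator that sits alone on one side of $p_0$. Then $p=(a,a+2)$ for some $a$, it has no propagators inside it with respect to $<_a$ (its inside region contains only the single vertex $a+1$, which by weak admissibility supports no propagator contained in that region — indeed a propagator inside $p$ would connect adjacent edges, violating condition 2 of admissibility), and the only propagator that can share the edge $a$ or the edge $a+2$ with it must lie entirely on the ``$p_0$'' side, i.e.\ be of the form $(j,a)$ or $(a+2,k)$. A dihedral transformation sending $a\mapsto n-1$ and $a+2\mapsto i$ (with the correct orientation so that $p$'s empty side becomes the interval $[i+1,n-2]$, which is forced to be empty on that side) then puts $p$ into the form $(i,n-1)$ with nothing inside it and at most one other propagator $q=(j,i)$ on edge $i$; that $q$ has nothing inside it follows because $q$ is forced to be the length-$2$ propagator structure guaranteed by the ``nothing else on that side'' clause, or more carefully, because any propagator inside $q$ would have to fit into the same tight region that configuration (a) asserts is otherwise empty. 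In configuration (b), I would argue similarly using the propagator $(j,j+2)$ (or $(i,i+2)$) as $p$: the ``no other propagator ends between $i+2$ and $j+1$'' clause is exactly what ensures that after the appropriate dihedral transformation, $p=(i,n-1)$ has empty inside and edge $i$ carries at most the one extra propagator $(j,i)$, itself with empty inside.

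The main obstacle I anticipate is the careful bookkeeping in the dihedral transformation: a dihedral transformation is a rotation possibly composed with a reflection, and one must check that the two clauses in the lemma — ``$p$ has nothing inside it with respect to $<_i$'' and ``$q$ has nothing inside it with respect to $<_j$'' — are the orientations that survive, rather than their complements. Concretely, I will need to verify that the side of $p$ that the Lemma~\ref{lem sian} configuration certifies to be empty is the side that becomes the arc $[i+1, n-2]$ (the ``inside'' of $p=(i,n-1)$ in the $<_i$ order, which runs $i, i+1, \dots, n-1$), and likewise for $q$; this may require composing with a reflection in some of the cases. A secondary, smaller obstacle is the edge case where $W$ has non-supporting vertices: there I should note that if $p$ is chosen adjacent to (or across from) a maximal run of non-supporting vertices, the emptiness condition on one side of $p$ is automatic, so after rotating those non-supporting vertices into positions $i+1,\dots,n-2$ the conclusion holds. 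Once the configuration and its orientation are pinned down, the verification that the two ``nothing inside'' conditions hold reduces to the observation already recorded in Remark~\ref{rem:props of length 2 and 3} and condition 2 of Definition~\ref{admisdfn}, namely that no propagator can connect two adjacent edges, so a length-$2$ propagator never has anything inside it.
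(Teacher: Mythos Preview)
Your approach is essentially the paper's: pass to the diagram $V$ with non-supporting vertices removed, apply Lemma~\ref{lem sian} to $V$, and dihedral-transform so that the resulting length-$2$ propagator $p$ lands at $(n-3,n-1)$; the paper then restores the non-supporting vertices (possibly lengthening $p$ and $q$). Your handling of non-supporting vertices is muddled---Lemma~\ref{lem uncovered} concerns Le diagrams and is not what you need here---but the paper's remove-then-restore device is exactly the clean version of what you intend.

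There is, however, a real gap in your treatment of configuration~(a). You commit to one specific dihedral transformation (in effect making edge $a+2$ into edge $i$), assert that edge $i$ then supports ``at most one other propagator $q$'', and claim this $q$ has empty inside by the ``nothing else on that side'' clause. Both assertions can fail when $p$ shares the edge $a+2$ with the outer propagator $p_0$: that shared edge may also support arbitrarily many propagators lying \emph{outside} $p_0$, and if $q=p_0$ then the $<_j$-inside of $q$ in the lemma's orientation is the \emph{long} arc of $p_0$, about which configuration~(a) says nothing. The fix is simple: since $p\neq p_0$, at least one of $p$'s two edges lies strictly interior to $p_0$'s short arc and therefore supports only $p$; choose \emph{that} edge to become edge $i$ (via rotation or reflection as appropriate), obtaining the ``no $q$'' case directly. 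The ``with $q$'' case then arises only from configuration~(b) with $j=i+2$, where your sketch is correct.
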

Note that since the first condition tells us that $p$ has nothing inside it, if $q$ exists in the second condition then $j<_ji<_jn-1$. 

\begin{proof}
  Remove all vertices of $W$ which do not support any propagators to get a weakly admissible Wilson loop diagram $V$.  Lemma~\ref{lem sian} applied to $V$ gives a length 2 propagator $p$ in $V$ for which either no other propagator is supported on one of the supporting edges of $p$ or there is a second length 2 propagator which is the only other propagator supported on one of the supporting edges of $p$.  (Figure~\ref{fig 3 cases} shows the possible configurations arising from Lemma~\ref{lem sian}, and the reader can easily check that in each case $p$ must be in one of the two situations described above.)

  We can now make a dihedral transformation of $V$ to obtain a diagram satisfying the statement of the lemma with $p$ and $q$ both length 2. Restoring the vertices which do not support any propagators, we obtain a dihedral transformation $W'$ of $W$ as desired (with potentially longer lengths for $p$ and $q$).
\end{proof}

Combining Lemmas \ref{lem uncovered}, \ref{lem dihedral}, and \ref{lem good p}, it therefore suffices to study the Le diagrams of weakly admissible Wilson loop diagrams with no non-supporting vertices and admitting one of the configurations described in Lemma \ref{lem good p}, with propagators $p = (n-3, n-1)$ and $q= (n-5, n-3)$ (if $q$ exists) both of length 2. See Figure~\ref{fig special p} for an illustration of the two possibilities. 

With the above restrictions in place, the next few lemmas describe how the Grassmann necklaces (Lemma \ref{lem I}) and the Le diagrams (Lemmas \ref{lem shape} to Lemma \ref{lem other k}) change upon the removal of $p$ from the Wilson loop diagram.  This gives us the inductive lever we need to prove the main theorem (Theorem \ref{thm dim}).

\begin{figure} \label{fig:twocases}
\[
\begin{tikzpicture}[rotate = 150,baseline=(current bounding box.east),label distance = -2mm]
\begin{scope}
  \drawWLDfragment[6]{3}{0.3}
  \begin{scope}
    \clipcenterarc(0,0)(\startpoint:\endpoint:\radius)
    \draw[smallpropagator] (\zero+2.4*\step:\radius*1.1) to [bend left = 55] (\zero + 4.6*\step:\radius*1.1); 
    \node at (\zero + 3.5*\step:\radius*0.75){\footnotesize $p$};
  \end{scope}
  \centerarc[red,line width = 2pt,line cap = round](0,0)(\zero + 2*\step:\zero+4.45*\step:\radius)
  \newbetternode[{[label distance=-3mm]above right}]{1}{n-4}
  \newbetternode[{[label distance=-3mm]above right}]{2}{n-3}
  \newbetternode[{[label distance=-3mm]above right}]{3}{n-2}
  \newbetternode[{[label distance=-3mm]above right}]{4}{n-1}
  \newnode[above]{5}{n}
  \newnode[above]{6}{1}
  \node at (180-60:\radius*0.1){\em Case 1};
\end{scope}
\end{tikzpicture}
\qquad 
\qquad 
\begin{tikzpicture}[rotate=150, baseline=(current bounding box.east)]
\begin{scope}
  \drawWLDfragment[7]{3}{0.3}
  \begin{scope}
    \clipcenterarc(0,0)(\startpoint:\endpoint:\radius)
    \draw[smallpropagator] (\zero+1.4*\step:\radius*1.1) to [bend left = 60] (\zero + 3.5*\step:\radius*1.1); 
    \draw[smallpropagator] (\zero+3.5*\step:\radius*1.1) to [bend left = 60] (\zero + 5.6*\step:\radius*1.1); 
    \node at (\zero + 2.4*\step:\radius*0.8){\footnotesize $q$};
    \node at (\zero + 4.5*\step:\radius*0.8){\footnotesize $p$};
  \end{scope}
  \centerarc[red,line width = 2pt,line cap = round](0,0)(\zero + 1.6*\step:\zero+5.5*\step:\radius)
  \newbetternode[{[label distance=-3mm]above right}]{1}{n-5}
  \newbetternode[{[label distance=-3mm]above right}]{2}{n-4}
  \newbetternode[{[label distance=-3mm]above right}]{3}{n-3}
  \newbetternode[{[label distance=-3mm]above right}]{4}{n-2}
  \newbetternode[{[label distance=-3mm]above right}]{5}{n-1}
  \newnode[above]{6}{n}
  \newnode[above]{7}{1}
  \node at (180-60:\radius*0.1){\em Case 2};
\end{scope}
\end{tikzpicture}
\] 
  \caption{The two cases for $W$ and $p$ following Lemma~\ref{lem good p}.  No other propagators can end in the bolded (red) sections.  Other segments may have additional propagators ending in them.}\label{fig special p}

\end{figure}
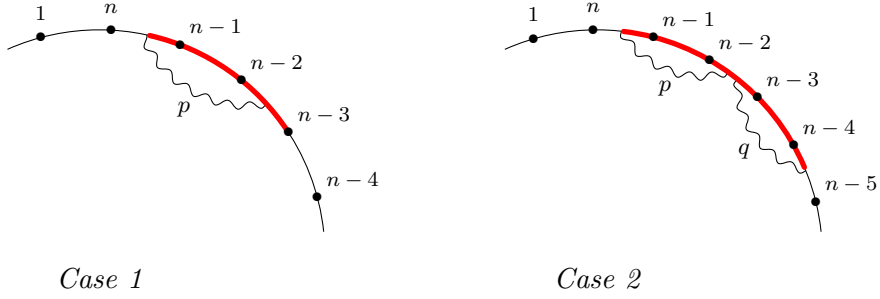

\begin{lem}\label{lem I}
Let $W$ be an admissible Wilson loop diagram with $k\geq 1$ propagators, and suppose $W$ admits one of the configurations described in Lemma~\ref{lem good p}, with $p$ and $q$ (if $q$ exists) both of length 2. Let $V$ be $W$ with $p$ removed.  We can express $I_*^W$ in terms of $I_*^V$ as indicated below. If $k \in \{1, n-1, n-2, n\}$, 

  \begin{align*}
    I_1^{W} & = I_1^{V} \cup \{n-3\} \\
    I_n^{W} & = I_1^{V} \cup \{n\} \\
    I_{n-1}^{W} & = I_n^{V} \cup \{n-1\} \\
    I_{n-2}^{W} & =
    \begin{cases}
      I_{n-2}^{V}\cup \{n-2\} & \text{if $n-2\not\in I_{n-2}^{V}$} \\
      I_{n-2}^{V}\cup \{n-1\} & \text{if $n-2\in I_{n-2}^{V}$, $n-1\not\in I_{n-2}^{V}$} \\
      (I_{n}^{V} \setminus \{n-5\})\cup \{n-1,n-2\} & \text{if $n-1, n-2\in I_{n-2}^{V}$}
    \end{cases} \;. \end{align*}
For all other $k$, i.e. $1<k<n-2$,
   \begin{align*}
    I_{k}^{W} & =
    \begin{cases}
      I_k^{V}\cup \{n-3\} & \qquad \qquad \qquad \qquad \text{if $n-3 \not\in I_k^{V}$}\\
      I_k^{V}\cup\{n-2\} & \qquad \qquad \qquad \qquad \text{if $n-3\in I_k^{V}$}
    \end{cases}
  \end{align*}
\end{lem}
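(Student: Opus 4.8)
The plan is to check each displayed identity by running Algorithm~\ref{alg:put GN on WLD} on $W$ from the indicated starting vertex and comparing the run, step by step, with the run on $V$ (which is $W$ with the single propagator $p$ removed). Four facts about the algorithm do essentially all of the work: the local-agreement principle of Remark~\ref{rmk algorithm locally same}; Lemma~\ref{lem no fourth vertex}, which forces each propagator to be assigned before the algorithm reaches its $<_i$-maximal support vertex; Remark~\ref{rmk cyclic}, which says that when a propagator is ``bumped'' it next contributes the cyclically next vertex of its support; and Remark~\ref{clockwise ordering rem} together with noncrossingness, which breaks ties when several propagators are supported on one vertex. The preliminary step is to record the local structure forced by Lemma~\ref{lem good p} and the reduction preceding the lemma: $V(p)=\{n-3,n-2,n-1,n\}$ and $p$ has nothing inside it; in Case~1 of Figure~\ref{fig special p} the edges $n-3,n-2,n-1$ carry only $p$; and in Case~2 the edges $n-4$ and $n-2$ are empty, edge $n-5$ carries only $q$, edge $n-3$ carries only $p$ and $q$, and edge $n-1$ carries $p$ together with possibly some propagators lying outside $p$.

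The cleanest case is a general index $i$ with $1<i<n-2$. Here the support of $p$ sits in the $<_i$-terminal block $\{n-3,n-2,n-1,n\}$, so the runs on $W$ and on $V$ from vertex $i$ agree up to and including step $n-4$; in particular $q$, when present, has already been assigned (once the algorithm reaches vertex $n-4$ it is the only candidate, edge $n-4$ being empty). At step $n-3$, either some propagator supported on edge $n-4$ (in Case~1) or $q$ (in Case~2) is still unassigned --- equivalently $n-3\in I_i^V$, since that is precisely the propagator contributing $n-3$ to the run on $V$ --- in which case it is assigned exactly as in $V$ and $p$ is bumped forward to contribute $n-2$; or no such propagator remains, i.e.\ $n-3\notin I_i^V$, and $\Prop(n-3)=\{p\}$, so $p$ contributes $n-3$. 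Once $p$ is assigned, the remaining propagators and unprocessed vertices coincide with those of $V$, so the two runs finish identically, giving the stated dichotomy. The case $i=1$ is the same, together with the observation that $n-3\notin I_1^V$ always holds here: in Case~1 the only propagator that could contribute $n-3$ to $I_1^V$ is one of the form $(n-4,n)$, and noncrossingness with Remark~\ref{clockwise ordering rem} forces such a propagator to be assigned already at step~$1$. For Case~2, this whole discussion is essentially the reorganisation into set form of the propagator-by-propagator computation carried out, in Case~3 with $m=j$, in the proof of Lemma~\ref{vertex cyclic int lem}: one reads $I_i^W(p)$ off the intervals $J_q^V(*)$ appearing there and checks that $I_i^W(r)$ equals $I_i^V(r)$ (or a small shift of it) for $r\ne p$.

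For the four distinguished indices the algorithm starts next to $p$. For $i=n$ the first step assigns $p$: any propagator on edge $n-1$ other than $p$ cannot be the clockwise-most propagator on vertex $n$, since then $p$ would have to lie inside it, contradicting noncrossingness; so $p$ contributes $n$, and the rest of the run reproduces the run on $V$ from vertex~$1$, giving $I_n^W=\{n\}\cup I_1^V$. For $i=n-1$ the first step assigns whichever propagator on edge $n-1$ is clockwise-most on vertex $n-1$ (this is $p$ exactly when edge $n-1$ carries nothing else), $p$ is then assigned to $n$ by the same argument as above, and the remainder matches the run on $V$ from vertex $n$, giving $I_{n-1}^W=\{n-1\}\cup I_n^V$. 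The index $i=n-2$ is treated identically but splits according to whether $n-2$ and $n-1$ already appear in $I_{n-2}^V$ --- equivalently whether $q$ is present and whether there is a propagator outside $p$ on edge $n-1$ that gets bumped from vertex $n-1$ to vertex $n$; the third branch, with the term $I_n^V\setminus\{n-5\}$, is what one obtains by tracking that single bumped propagator together with the reassignment of $q$ from contributing $n-5$ (in $I_n^V$) to contributing $n-2$.

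The step I expect to be the main obstacle is precisely the tie-breaking and bump-chasing near vertex $n-1$ and on vertex $n-3$ (where $p$ and $q$ both attach to edge $n-3$): for each relevant starting vertex one must determine exactly which of the several propagators supported on a vertex the algorithm selects, and then track where each loser is reassigned. This is where Remark~\ref{clockwise ordering rem} and noncrossingness are essential, and it is exactly the source of the case split in the formula for $I_{n-2}^W$; everything outside these local determinations is a mechanical application of the local-agreement principle to stitch together the parts of the run that are insensitive to the presence of $p$.
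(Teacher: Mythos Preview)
Your approach is essentially the same as the paper's: run Algorithm~\ref{alg:put GN on WLD} on $W$ from each starting vertex and compare step-by-step with the corresponding run on $V$, using Remark~\ref{rmk algorithm locally same}, Lemma~\ref{lem no fourth vertex}, Remark~\ref{rmk cyclic}, and Remark~\ref{clockwise ordering rem} at exactly the same junctures.  The paper organises the cases in the same order (first $I_1$, then $I_n$, $I_{n-1}$, $I_{n-2}$, then the generic $1<k<n-2$) and invokes the same local-agreement / bump-chasing logic you describe.

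One slip to fix: in your treatment of $I_{n-1}^W$ the parenthetical ``this is $p$ exactly when edge $n-1$ carries nothing else'' is wrong.  In fact $p$ is \emph{always} the clockwise-most propagator at vertex $n-1$, by precisely the argument you gave for vertex $n$: edge $n-2$ is empty, and any other propagator $r=(n-1,j)$ on edge $n-1$ has $j$ outside the arc $[n-3,n-1]$, so if $r$ were clockwise-most then (Remark~\ref{clockwise ordering rem}) $p$ would have to lie inside $(r,<_{n-2})$, which it does not.  Hence $I_{n-1}^W(p)=n-1$ directly, and the remainder of the run from vertex $n$ matches $I_n^V$ verbatim, giving $I_{n-1}^W=\{n-1\}\cup I_n^V$.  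Your stated conclusion is correct, but the intermediate ``$p$ is then assigned to $n$'' does not happen; if you followed that branch literally the bookkeeping of which propagator is missing at step $n$ would not line up with $I_n^V$.  (A smaller point: in Case~2 edge $n-5$ need not carry only $q$; Lemma~\ref{lem good p} only forbids propagators \emph{inside} $q$, so propagators outside $q$ may still end there.  This does not affect your argument for the generic $k$, since those are all assigned before the algorithm reaches $n-4$.)
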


\begin{proof}
The two possible cases for $W$ are illustrated in Figure~\ref{fig special p}. 

We begin with $I_1$, and first observe that by Remark~\ref{rmk cyclic} we must have $I_1^W(p) = n-3$. Since $p$ is the final propagator to be assigned by $I_1^W$, it follows that $I_1^W$ and $I_1^V$ behaved identically on all earlier propagators (recall Remark~\ref{rmk algorithm locally same} about locally identical propagator configurations), which implies in particular that $n-3 \not\in I_1^V$. Thus $I_1^W = I_1^V\cup\{n-3\}$.

We note for future reference that the arguments of the previous two paragraphs also imply that $n-2$, $n-1$, and $n$ are not in $I_1^V$.

Next consider $I_n^{W}$. From Figure \ref{fig special p} we see that in both cases we have $I_n^{W}(p) = n$. Thus from vertex 1 onwards the unassigned propagators are exactly those that appear in $V$. Therefore, the algorithm continues as in $I_1^{V}$, i.e. we have $I_n^{W} = I_1^{V} \cup \{n\}$. By a similar argument, we must have $I_{n-1}^{W} = I_n^{V} \cup \{n-1\}$.

Now consider $I_{n-2}^{W}$. If $n-2\not\in I_{n-2}^{V}$ then the vertex $n-2$ is non-supporting (see also Corollary~\ref{no coloops}).  Thus we are in Case 1, with ${I_{n-2}^W(p) = n-2}$ and this assignment of $p$ does not affect the rest of the construction of $I_{n-2}^{V}$, so we obtain ${I_{n-2}^{W} = I_{n-2}^{V}\cup \{n-2\}}$ as above.

On the other hand, if $n-2\in I_{n-2}^{V}$ then we must be in Case 2 of Figure~\ref{fig special p} and we must have ${I_{n-2}^V(q) = I_{n-2}^W(q) = n-2}$ since $q$ is always the clockwise-most propagator supported on vertex $n-2$. If $n-1 \not\in I_{n-2}^{V}$, then $I_{n-2}^W(p) = n-1$ and the algorithm proceeds identically to $I_{n-2}^{V}$ for the remainder of its steps; thus $I_{n-2}^{W} = I_{n-2}^{V}\cup \{n-1\}$ in this case.

Finally, if $n-2$ and $ n-1$ are in $I_{n-2}^{V}$ then $I_{n-2}^W(q) = n-2$ and $I_{n-2}^W(p) = n-1$ as above, but some other propagator must have contributed $n-1$ to $I_{n-2}^{V}$ so we cannot use the same argument as above.  When, in the algorithm, the construction of $I_{n-2}^W$ reaches vertex $n$, only the propagators $p$ and $q$ have contributed; thus we proceed as in the construction of $I_n^{V}$ but without propagator $q$ by Remark \ref{rmk algorithm locally same}. By Lemma~\ref{vertex cyclic int lem} we know that $I_n^V(q) = n-5$, and from Remark \ref{clockwise ordering rem} we see that the only way this could occur is if $q$ was the last to contribute to $I_{n-2}^V$. 
Therefore, $I_{n-2}^{W} = (I_{n}^{V} \setminus \{n-5\}) \cup \{n-1, n-2\}$ in this case. This completes all cases for $I_{n-2}^{W}$.

The arguments for $I_k^{W}$ ($1< k < n-2$) proceed analogously to those of $I_{n-2}^{W}$, with one simplification: we cannot have both $n-3$ and $n-2$ in $I_k^{V}$ since $q$ is the only propagator that could be assigned to either of them, and it cannot be assigned to both.

This covers all cases and hence completes the proof.
\end{proof}

The next several lemmas assume $V$ and $W$ to be as above. These lemmas discuss how the shape of the Le diagram changes from $V$ to $W$ and how the plusses in $V$ shift to become plusses in $W$. In particular, the Le diagram for $W$ has an extra row, labeled by $n-3$, with three columns (see Figure~\ref{fig Le} and Lemma~\ref{lem shape}). This last row contains three (new) plusses. All other plusses in the Le diagram of $V$ shift in a predictable manner to become plusses in the Le diagram of $W$; this is described in Lemma~\ref{lem n and n-1} to Theorem~\ref{thm dim}.

\begin{rmk}To avoid wordiness, we will refer to the row indexed by $i$ in the Le diagram as row $i$ or the $i$ row, and similarly for columns.  Note that the $i$ row in this sense is not the $i^{th}$ row since the indexing for Le diagrams does not label the rows sequentially starting at $1$, and so to avoid confusion we will not refer to the $i^{th}$ row. \end{rmk}

We proceed to address the cases laid out in Lemma~\ref{lem I}, getting the shape of the Le diagram of $W$ from the relation $I_1^W= I_1^V \cup \{n-3\}$. 

\begin{lem}\label{lem shape}
  Let $V$ and $W$ be as in Lemma~\ref{lem I}.
  The shape of the Le diagram of $V$ can be built from left to right of the following blocks: a rectangle which is 3 columns wide, one more column of the same height, and a partition shape with at most as many rows as the rectangle.
  The shape of the Le diagram of $W$ can be built from left to right of the following blocks: a rectangle with 3 columns and one more row than the first rectangle of $V$, and the same partition shape as in $V$.
\end{lem}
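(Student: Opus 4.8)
The plan is to work entirely at the level of the first Grassmann necklace element, since by Algorithm~\ref{alg:GN to Le} the \emph{shape} of the Le diagram of a positroid in $\Gr(m,n)$ is the Young diagram whose rows are labelled by $I_1$. Explicitly, if $I_1=\{c_1<c_2<\dots<c_m\}$ in the natural order on $[n]$, then the diagram has exactly $m$ rows, labelled $c_1,\dots,c_m$ from top to bottom, and the southeast-border labelling convention makes the row labelled $c_a$ meet exactly those columns whose label exceeds $c_a$; hence that row has length $(n-c_a)-(m-a)$. So the lemma reduces to an arithmetic statement about $I_1^V$ and $I_1^W$, and the first thing I would do is recall from Lemma~\ref{lem I} (and its proof) that $I_1^W=I_1^V\cup\{n-3\}$, with $\{n-3,n-2,n-1,n\}\cap I_1^V=\emptyset$.

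Since $V$ has $k-1$ propagators, I would write $I_1^V=\{c_1<\dots<c_{k-1}\}$ with every $c_a\le n-4$; because $n-3$ is larger than all of these, $I_1^W=\{c_1<\dots<c_{k-1}<n-3\}$ with $n-3$ in the last (i.e.\ $k$-th) position. Now $c_{a+1},\dots,c_{k-1}$ are $k-1-a$ distinct integers lying in $[c_a+1,\,n-4]$, so $c_a\le n-4-(k-1-a)$, and therefore the row labelled $c_a$ in the Le diagram of $V$ has length $(n-c_a)-(k-1-a)\ge 4$. Thus \emph{every} row of $V$'s diagram has length at least $4$: its first three columns together with its fourth column form a rectangle of full height $k-1$, and the columns beyond the fourth form an arbitrary Young shape (automatically with at most $k-1$ rows). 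This is precisely the block decomposition claimed for $V$, and in it ``the first rectangle of $V$'' is the $(k-1)\times 3$ block.

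For $W$ I would run the same computation. For $1\le a\le k-1$ the row labelled $c_a$ of $W$ has length $(n-c_a)-(k-a)$, exactly one less than in $V$, and the new bottom row labelled $n-3$ has length $(n-(n-3))-(k-k)=3$. Hence every row of $W$ has length at least $3$, so the first three columns of $W$'s diagram form a rectangle of full height $k$, which has one more row than the first rectangle of $V$. It then remains to identify the part of $W$'s diagram lying to the right of this $k\times 3$ rectangle with the partition shape of $V$: for $\ell\ge 1$, the $\ell$-th column after the rectangle in $W$ has height equal to the number of rows of $W$ of length $\ge\ell+3$; the new length-$3$ row never contributes (since $\ell+3\ge 4$), and an old row contributes exactly when its length in $V$ is $\ge\ell+4$, which is precisely the condition for it to contribute to the $\ell$-th column beyond the fourth column of $V$. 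So the two partition shapes coincide, and this finishes the proof.

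The only delicate point, and the step I would be most careful with, is the bookkeeping in the last paragraph: one must check that shrinking every row of $V$ by its last cell and gluing on a single length-$3$ row at the bottom transports the ``tail'' of $V$'s diagram (the columns past the fourth) faithfully onto the ``tail'' of $W$'s diagram (the columns past the third), with the new short row disturbing nothing there. Everything else is a direct translation through the dictionary ``$I_1\leftrightarrow$ Le-diagram shape'' of the relation between $I_1^V$ and $I_1^W$ supplied by Lemma~\ref{lem I}. It is worth explicitly noting the degenerate case $k=1$, where $I_1^V=\emptyset$ so $V$ has the empty diagram (a $0\times 3$ rectangle plus empty partition) while $W$ has a single row of length $3$; the formulas above specialize correctly.
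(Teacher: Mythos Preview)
Your proof is correct and follows essentially the same approach as the paper: both arguments reduce the shape statement to the relation $I_1^W=I_1^V\cup\{n-3\}$ with $\{n-3,n-2,n-1,n\}\cap I_1^V=\emptyset$, and read off the consequences for the Young diagram from the southeast-border labelling convention. The paper's version is terser---it simply observes that $n,n-1,n-2,n-3$ are the four leftmost columns of $V$ and that passing to $W$ converts the $n-3$ column into a new bottom row of length $3$---whereas you carry out the row-length arithmetic explicitly and verify the matching of the partition tails cell by cell; this extra detail is harmless and the underlying idea is the same.
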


\begin{proof}
See Figure~\ref{fig Le} for an illustration of the shapes described in the statement of the lemma.

Recall that $I_1$ determines the shape of the Le diagram.  Specifically, the elements of $I_1$ label the rows of the diagram, working from top right to bottom left. By the arguments in Lemma {\ref{lem I}} we know that none of $n$, $n-1$, $n-2$, and $n-3$ are in $I_1^{V}$.  These are the leftmost four columns of the diagram for $V$.  From Lemma \ref{lem I} we know that that $I_1^{W} = I_1^{V}\cup \{n-3\}$. Therefore, the right hand boundary of the shape of $V$ is the same as the right hand boundary of the shape of $W$ except that $W$ has one additional row of 3 boxes while $V$ has an additional column in the $n-3$ position; that is, an extra column fourth from the left. 
\end{proof}
\begin{figure}
  \[\begin{tikzpicture}[scale = 1.5]
\node at (1.5,2.5) {\Large $V$};
\draw (0,0) rectangle (1,2);
\draw (1,0) rectangle (1.3,2);
\begin{scope}[shift = {(-2.7,-2)}]
\draw (4,2) .. controls (5.3,2.2) and (4.7,3) .. (5.5,3.2) .. controls (6.3,3.5) .. (6.5,4) -- (4,4);
\end{scope}
\node at (0.5,1) {\large$\mathcal{A}$};
\node at (1.8,1) {\large$\mathcal{B}$};
\begin{scope}[shift = {(5,0)}]
\node at (1.5,2.5) {\Large $W$};
\draw (0,-0.3) rectangle (1,0);
\draw (0,0) rectangle (1,2);
\begin{scope}[shift = {(-3,-2)}]
\draw (4,2) .. controls (5.3,2.2) and (4.7,3) .. (5.5,3.2) .. controls (6.3,3.5) .. (6.5,4) -- (4,4);
\end{scope}
\node at (0.5,1) {\large$\mathcal{A}$};
\node at (1.5,1) {\large$\mathcal{B}$};
\end{scope}
\end{tikzpicture}\]
  \caption{Le diagrams for $V$ (left) and $W$ (right).}\label{fig Le}
\end{figure}
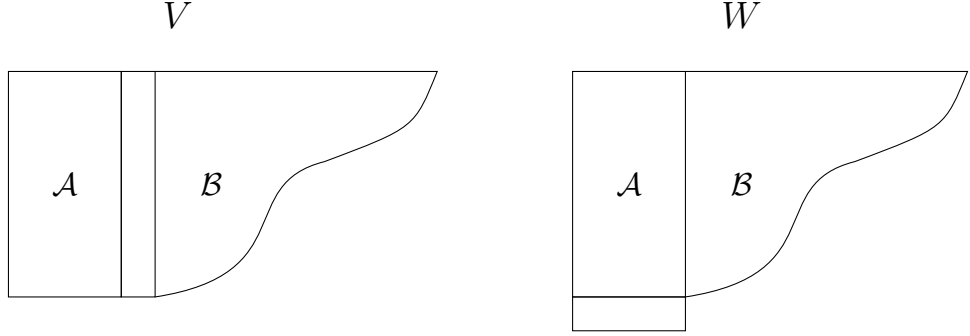

As illustrated in Figure~\ref{fig Le}, the pieces of the Le diagrams of $V$ and $W$ will be called $\mathcal{A}$ and $\mathcal{B}$ respectively, in what follows. Explicitly, $\mathcal{B}$ is the set of columns with labels $v < n-3$ and $\mathcal{A}$ is the set of columns labeled by $n$, $n-1$ and $n-2$ in $V$. In $W$ we see that $\mathcal{A}$ consists of these same three columns, but does not include the last row. Over the course of the next few lemmas we will prove that the plusses in the $\mathcal{B}$ parts of each diagram are identical, and that the relationship between the plusses in the two $\mathcal{A}$ regions can be described explicitly.

We do this by applying Algorithm~\ref{alg:GN to Le}, which constructs the Le diagram associated to the Grassmann necklaces of $V$ and $W$. As described in Section~\ref{sec:positroid background}, if Algorithm~\ref{alg:GN to Le} places a $+$ in the box with row index $i$ and column index $j$, we say that this plus is {\bf in the $i \rightarrow j$ position}, and refer to it as {\bf the plus defined by ``the (hook) path from $i$ to $j$''}. Note that the collection of paths contributed by a single Grassmann necklace term must be non-crossing.  This follows immediately from the Le condition.


\begin{lem}\label{lem n and n-1}
Let $V$ and $W$ be as in Lemma~\ref{lem I}.
Then
%
  $I_n^W$ yields an $(n-3)\rightarrow n$ plus, and $I_{n-1}^W$ yields an $(n-3)\rightarrow (n-1)$ plus along with any plusses yielded by $I_n^V$.
\end{lem}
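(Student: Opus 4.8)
The plan is to feed the Grassmann necklace $(I_1^W,\dots,I_n^W)$ of $M(W)$ into Algorithm~\ref{alg:GN to Le} and read off the relevant plusses, using the formulas for $I_1^W$, $I_n^W$, and $I_{n-1}^W$ supplied by Lemma~\ref{lem I}. The first thing I would record is that, by the proof of Lemma~\ref{lem I}, none of $n-3,n-2,n-1,n$ lies in $I_1^V$; hence $I_1^W=I_1^V\sqcup\{n-3\}$ is a disjoint union, so $n-3$ labels a row of the Le diagram of $W$ while $n-2,n-1,n$ label columns of the Le diagrams of both $V$ and $W$ (and $n-3$ labels a column of the diagram of $V$, cf.\ Lemma~\ref{lem shape}). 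In particular all of the plus positions named in the statement make sense in both diagrams.

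For $I_n^W$ the argument is immediate: since $I_n^W=I_1^V\sqcup\{n\}$, we have $I_1^W\setminus I_n^W=\{n-3\}$ and $I_n^W\setminus I_1^W=\{n\}$, so Algorithm~\ref{alg:GN to Le} places exactly one plus from $I_n^W$, in the $(n-3)\to n$ position.

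For $I_{n-1}^W=I_n^V\cup\{n-1\}$, I would first compare $I_n^V$ with $I_1^V$ using the fact that $(I_1^V,\dots,I_n^V)$ is a Grassmann necklace (Theorems~\ref{res:alg gives GN} and \ref{res alg gives correct GN}), with the convention $I_{n+1}^V=I_1^V$. Either $n\notin I_n^V$, in which case $I_1^V=I_n^V$ and $I_n^V$ contributes no plusses; or $n\in I_n^V$, in which case $I_1^V=(I_n^V\setminus\{n\})\cup\{c\}$ for some $c\notin I_n^V$, and since $n\notin I_1^V$ we get $c\neq n$, hence $c\leq n-4$, and $I_n^V$ contributes exactly the plus $c\to n$. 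In both cases $n-1\notin I_n^V$, so $I_{n-1}^W=I_n^V\sqcup\{n-1\}$; one then computes $I_1^W\setminus I_{n-1}^W=(I_1^V\setminus I_n^V)\cup\{n-3\}$ and $I_{n-1}^W\setminus I_1^W=(I_n^V\setminus I_1^V)\cup\{n-1\}$. In the first case these are $\{n-3\}$ and $\{n-1\}$, so the only plus from $I_{n-1}^W$ is $(n-3)\to(n-1)$; in the second they are $\{c,n-3\}$ and $\{n-1,n\}$, and since $c\leq n-4$ we have $c<n-3<n-1<n$ in the $<_1$ order, so Algorithm~\ref{alg:GN to Le} pairs $n-3$ with $n-1$ and $c$ with $n$, giving plusses $(n-3)\to(n-1)$ and $c\to n$. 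Either way, $I_{n-1}^W$ yields the $(n-3)\to(n-1)$ plus together with exactly the plusses yielded by $I_n^V$.

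This is essentially bookkeeping once Lemma~\ref{lem I} is in hand; the one place where I expect to need care is the ordering step in Algorithm~\ref{alg:GN to Le}, i.e.\ ensuring that $c$ gets paired with $n$ rather than with $n-1$. That is exactly where it matters that $c\in I_1^V$ and that $I_1^V$ avoids $\{n-3,n-2,n-1,n\}$, forcing $c\leq n-4<n-3<n-1<n$; without this the final conclusion would be wrong.
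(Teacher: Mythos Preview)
Your proof is correct and follows essentially the same approach as the paper: both compute the symmetric differences $I_1^W\setminus I_n^W$, $I_n^W\setminus I_1^W$, $I_1^W\setminus I_{n-1}^W$, $I_{n-1}^W\setminus I_1^W$ using the formulas from Lemma~\ref{lem I} and the fact that $n-3,n-2,n-1,n\notin I_1^V$, then feed these into Algorithm~\ref{alg:GN to Le}. The only difference is cosmetic: you split explicitly into the cases $n\in I_n^V$ and $n\notin I_n^V$, whereas the paper handles both at once by observing that $I_n^V\setminus I_1^V\subseteq\{n\}$ from the Grassmann necklace definition, so $n-3$ is automatically maximal in $I_1^W\setminus I_{n-1}^W$ and $n-1$ minimal in $I_{n-1}^W\setminus I_1^W$.
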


\begin{proof}
By Lemma~\ref{lem I} we have $I_n^{W}= I_1^{V} \cup \{n\}$ and $I_1^{W} = I_1^{V} \cup \{n-3\}$. Thus 
\[I_1^{W} \setminus I_n^{W} = \{n-3\}, \qquad \qquad I_n^{W} \setminus I_1^{W} = \{n\},\]
and so by Algorithm~\ref{alg:GN to Le} we have a plus in the $(n-3) \rightarrow n$ position, i.e. in the leftmost box of the bottom row. 

Also by Lemma~\ref{lem I} we have $I_{n-1}^{W} = I_n^{V} \cup \{n-1\}$. Recall from the proof of Lemma~\ref{lem I} that none of $n-3,\;n-2,\;n-1$ or $n$ are in $I_1^V$; therefore,
\[
I_1^W\setminus I_{n-1}^W = (I_1^V \setminus I_n^V) \cup\{n-3\},
\]
and $n-3$ is maximal in this set. Similarly, 
\[I_{n-1}^W \setminus I_1^W = (I_n^V \setminus I_1^V) \cup \{n-1\}.
\]
Also $(I_n^V \setminus I_1^V) \cup \{n-1\}\subseteq \{n-1,n\}$ from the definition of Grassmann necklace, and so in particular $n-1$ is minimal in this set. Thus Algorithm~\ref{alg:GN to Le} yields a plus in the $(n-3) \rightarrow (n-1)$ position (i.e. the second box in the bottom row of the Le diagram of $W$) along with any plusses yielded by $I_n^V$.
\end{proof} 

We next consider the plusses contributed to the Le diagram of $W$ by $I_{n-2}^W$. Lemma~\ref{lem I} lists three possibilities for $I_{n-2}^W$, but the following observation shows that we need only consider two of these cases.

\begin{rmk}\label{rem n-2 case} Consider the case when $n-2 \in I_{n-2}^V$ but $n-1 \not\in I_{n-2}^V$. This case can only occur if our $W$ is in Case 2 of Figure~\ref{fig special p} and $n-1$ supports no propagators other than $p$. In this situation, apply the dihedral transformation that reflects $W$ along the line perpendicular to the $(n-2)^{th}$ edge, so that $p$ is unchanged but $q = (n-1,1)$ in the new diagram. The new diagram is an example of Case 1, i.e. it satisfies $n-2 \not \in I_{n-2}^V$. Therefore, we do not need to separately consider the case where $n-2\in I_{n-2}^V$ but $n-1\not\in I_{n-2}^V$, as it will be subsumed into Case 1 in the main theorem.
\end{rmk}

We can now describe the plusses contributed by $I_{n-2}^W$ in the remaining two cases: when $n-2 \not\in I_{n-2}^V$, and when both $n-2$ and $n-1$ are not in $I_{n-2}^V$.

\begin{lem}\label{lem n-2 good}
  Let $V$ and $W$ be as in Lemma~\ref{lem I}, and suppose that $n-2 \not\in I_{n-2}^{V}$. Then $I_{n-2}^{W}$ contributes all of the same plusses as $I_{n-1}^{V}$ (which is the same as $I_{n-2}^{V}$) along with a new ${(n-3)\rightarrow (n-2)}$ plus.
\end{lem}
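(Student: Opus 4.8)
The plan is to compute the pluses produced by Algorithm~\ref{alg:GN to Le} directly, exactly in the style of the proof of Lemma~\ref{lem n and n-1}: everything is controlled by the two difference sets $I_1^W\setminus I_{n-2}^W$ and $I_{n-2}^W\setminus I_1^W$, and in particular by their $<_1$-extremal elements. First note that $I_{n-1}^V=I_{n-2}^V$, so the parenthetical part of the statement is automatic and it suffices to compare $I_{n-2}^W$ with $I_{n-2}^V$. By Lemma~\ref{lem I} we have $I_1^W=I_1^V\cup\{n-3\}$ and $I_{n-2}^W=I_{n-2}^V\cup\{n-2\}$, and from the proof of Lemma~\ref{lem I} none of $n-3,n-2,n-1,n$ lies in $I_1^V$. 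Combining this with the fact that $n-3$, being $<_{n-2}$-maximal in $[n]$, cannot be contributed to $I_{n-2}^V$ (Lemma~\ref{lem no fourth vertex}; or $I_{n-2}^V=\emptyset$ if $V$ has no propagators), I would obtain $I_1^W\setminus I_{n-2}^W=(I_1^V\setminus I_{n-2}^V)\cup\{n-3\}$ and $I_{n-2}^W\setminus I_1^W=(I_{n-2}^V\setminus I_1^V)\cup\{n-2\}$.

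The crux is to show that $n-3$ is the $<_1$-maximum of the first set and $n-2$ is the $<_1$-minimum of the second. The former is immediate because every element of $I_1^V$ is at most $n-4$. For the latter I would prove that $I_{n-2}^V\setminus I_1^V\subseteq\{n-1,n\}$, whence all of its elements exceed $n-2$. This is where the hypothesis $n-2\notin I_{n-2}^V$ enters: by the proof of Lemma~\ref{lem I} (see also Corollary~\ref{no coloops}) it forces vertex $n-2$ to be non-supporting in $V$, hence a loop, so in particular the Grassmann necklace of $V$ satisfies $I_{n-1}^V=I_{n-2}^V$; and traversing that necklace from $I_{n-1}^V$ to $I_1^V$ through the steps at indices $n-1$ and $n$, the only elements that can be deleted are $n-1$ (at step $n-1$) and $n$ (at step $n$). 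Thus $I_1^V\supseteq I_{n-1}^V\setminus\{n-1,n\}=I_{n-2}^V\setminus\{n-1,n\}$, which is the claimed inclusion.

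Granting these two extremal facts, the conclusion follows by feeding the difference sets into Algorithm~\ref{alg:GN to Le}: in $<_1$-decreasing (respectively $<_1$-increasing) order they read $(n-3,a_2,a_3,\dots)$ and $(n-2,b_2,b_3,\dots)$, where $(a_2,a_3,\dots)$ and $(b_2,b_3,\dots)$ are precisely $I_1^V\setminus I_{n-2}^V$ and $I_{n-2}^V\setminus I_1^V$ in decreasing and increasing order. The first pair therefore contributes the $(n-3)\rightarrow(n-2)$ plus, and the remaining pairs contribute exactly the pluses that Algorithm~\ref{alg:GN to Le} assigns to $I_{n-2}^V$, which equal those assigned to $I_{n-1}^V$. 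The main obstacle is precisely the inclusion $I_{n-2}^V\setminus I_1^V\subseteq\{n-1,n\}$: a priori the minimal bases for the $<_1$- and $<_{n-2}$-orders can disagree in many coordinates, and it is only the loop at vertex $n-2$ that collapses the necklace segment $I_{n-2}^V=I_{n-1}^V$ and confines the remaining necklace moves to the indices $n-1$ and $n$. A brief check of the degenerate cases (no propagators in $V$, or $I_{n-2}^V=I_1^V$) shows they are covered by the same formulas.
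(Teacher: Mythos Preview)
Your proof is correct and follows essentially the same route as the paper's own argument: both use Lemma~\ref{lem I} to write $I_{n-2}^{W}=I_{n-2}^{V}\cup\{n-2\}$ and $I_1^{W}=I_1^{V}\cup\{n-3\}$, note $n-3\notin I_{n-2}^{V}$, and then read off the plusses from Algorithm~\ref{alg:GN to Le}. You are simply more explicit than the paper about one point it leaves implicit, namely that $n-2$ is $<_1$-minimal in $I_{n-2}^{W}\setminus I_1^{W}$ because $I_{n-2}^{V}\setminus I_1^{V}\subseteq\{n-1,n\}$ (which you obtain from $I_{n-2}^{V}=I_{n-1}^{V}$ and two Grassmann necklace steps); this is a legitimate detail to supply and your verification of it is sound.
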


\begin{proof}
If $n-2\not\in I_{n-2}^{V}$ then $I_{n-1}^{V}=I_{n-2}^{V}$ by definition, and by Lemma~\ref{lem I} we have the equality ${I_{n-2}^{W} = I_{n-1}^{V} \cup \{n-2\}}$.  Note that $n-3\not\in I_{n-2}^{V}$ by Lemma~\ref{lem no fourth vertex}.  Therefore, the plusses contributed by $I_{n-2}^{W}$ are exactly the  plusses from $I_{n-1}^{V}$ along with the $(n-3)\rightarrow (n-2)$ plus.  This gives the statement of the lemma. 
\end{proof}

Next we need an observation about how the Grassmann necklace for $V$ changes over the indices near $n$ under the hypotheses that $n-2, n-1\in I_{n-2}^V$.

Specifically with $V$ and $W$ as in Lemma~\ref{lem I} and with $n-2, n-1\in I_{n-2}^V$ (so we are necessarily in Case 2 and $n-1$ supports at least one additional propagator) we have
  \begin{equation}\label{eq necklace}
  I_{n-3}^{V} \xrightarrow[\substack{n-3\text{ out}\\n-2\text{ in}}]{} I_{n-2}^{V} \xrightarrow[\substack{n-2\text{ out}\\n-5\text{ in}}]{} I_{n-1}^{V} \xrightarrow[\substack{n-1\text{ out}\\ u \text{ in}}]{} I_{n}^{V}  \xrightarrow[\substack{n \text{ out}\\v \text{ in}}]{} I_1^{V}
  \end{equation}
where $u$ and $v$ are two vertices. We will derive some properties of $u$ and $v$ below.

The first two transitions follow by the cyclic intervals property highlighted in Remark~\ref{rmk cyclic} along with the fact that $n-3$ is a supporting vertex in $V$.  The third transition follows from the definition of Grassmann necklaces and the fact that $n-1\in I_{n-2}^V$ by hypothesis, while the final transition follows from the fact that $n-1$ supports an additional propagator so we are guaranteed $n \in I_n^V$.

Note that neither $u$ nor $v$ can be $n-1$, $n-2$, or $n-3$, because as noted in Lemma~\ref{lem I}, these elements do not belong to $I_1^V$ and $I_n^V = (I_1^V\setminus \{v\}) \cup \{n\}$. However, $u$ could be $n$.  

Also $u$ cannot be $n-4$, for the following reason. By Remark~\ref{rmk cyclic} we know that ${I_{n-1}^W(q) = n-5}$, leaving no propagators supported on $n-4$ for $I_{n-1}^W$ to assign. Thus $n-4 \not\in I_{n-1}^W$. Since ${I_{n-1}^W = I_n^V\cup\{n-1\}}$ by Lemma~\ref{lem I}, we cannot have $n-4$ in $I_n^V$ either.

However, it is important to observe that $v$ could be $n-4$, as we see in the next result.

  \begin{lem}\label{lem n-2 bad}
    Let $V$ and $W$ be as in Lemma~\ref{lem I}, and suppose that $n-2,n-1 \in I_{n-2}^{V}$. Let $c$ be the largest element of $I_1^V$.  Then the following statements are true.
    \begin{enumerate}
  \item If $n-4\in I_1^V\setminus I_n^V$ then $c=n-4$. Otherwise, $c=n-5$ otherwise. If $c = n-4$ then we have $v = n-4$ in equation \eqref{eq necklace}.
  \item The function $I_{n-3}^W$ contributes the same plusses as $I_{n-3}^V$, except that the plus in column $n-3$ (which necessarily exists) is shifted one square left into column $n-2$.
  \item In the Le diagram of $V$, $I_{n-2}^{V}$ contributes a $(c)\rightarrow (n-2)$ plus and no other term in the Grassmann necklace of $V$ gives a plus in this column. All other plusses from $I_{n-2}^{V}$ were already contributed by $I_{n-3}^V$.
  \item The function $I_{n-2}^W$ contributes an $(n-3) \rightarrow (n-2)$ plus, a $(c) \rightarrow (n-1)$ plus, and one other plus. If $c = n-5$ then this third plus is the one contributed by $I_n^V$, while if $c = n-4$ then the third plus is $(n-5) \rightarrow n$.
  \item If $c = n-4$ then $I_{n-1}^V$ contributes a $(n-4) \rightarrow (n-1)$ plus; all other plusses contributed by $I_{n-1}^V$ were already contributed by $I_{n-3}^V$, regardless of the value of $c$.
  \item The Grassmann necklace of $V$ contributes a $(c)\rightarrow (n-1)$ plus if and only if $c = n-4$. No term in the Grassmann necklace of $V$ contributes a $(n-5) \rightarrow n$ plus.
    \end{enumerate} 
\end{lem}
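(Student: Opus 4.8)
The plan is to prove all six parts together by running Algorithm~\ref{alg:GN to Le} on the relevant Grassmann necklace terms and comparing outputs, using Lemma~\ref{lem I} to translate between $V$ and $W$. The first step is to extract more of the necklace chain \eqref{eq necklace}. Combining the three transitions there with the fact that $n-3$ supports $q$ (so $n-3$ is not a loop) and $n\in I_n^V$, one gets the explicit forms $I_{n-1}^V = (I_{n-3}^V\setminus\{n-3\})\cup\{n-5\}$, $I_n^V = (I_{n-3}^V\setminus\{n-3,n-1\})\cup\{n-5,u\}$ and $I_1^V = (I_{n-3}^V\setminus\{n-3,n-1,n\})\cup\{n-5,u,v\}$, with the obvious collapse when $u=n$. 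From these I read off the two sets that drive everything: $I_{n-3}^V\setminus I_1^V\subseteq\{n-3,n-1,n\}$, always containing $n-3$ and $n-1$; and $I_1^V\setminus I_{n-3}^V=\{n-5\}\cup(\{u,v\}\setminus\{n\})$, whose elements other than $v$ are all $<n-5$. Since $n,n-1,n-2,n-3\notin I_1^V$ (noted in Lemma~\ref{lem I}), $c=\max I_1^V\le n-4$; together with $n-4\notin I_n^V$ and $n-5\in I_n^V$ (the latter because $I_n^V(q)=n-5$ by Lemma~\ref{vertex cyclic int lem}), this gives part 1: if $n-4\in I_1^V$ then $c=n-4$ and, since $n-4\notin I_n^V=(I_1^V\setminus\{v\})\cup\{n\}$, we get $v=n-4$; otherwise $c=n-5$, because $n-5\in I_1^V$. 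In either case $c\notin I_{n-3}^V$, so $c=\max(I_1^V\setminus I_{n-3}^V)$.

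For parts 2--5 I would apply Algorithm~\ref{alg:GN to Le} to each term. By Lemma~\ref{lem I}, passing from $V$ to $W$ replaces $I_1$ by $I_1\cup\{n-3\}$ and $I_{n-3}$ by $I_{n-3}\cup\{n-2\}$; since $n-3$ and $n-2$ are cyclically adjacent and $n-3=\min(I_{n-3}^V\setminus I_1^V)$, the symmetric-difference data of $I_{n-3}^W$ in $W$ is that of $I_{n-3}^V$ in $V$ with the single column index $n-3$ replaced by $n-2$, which is part 2; the plus that moves is the one pairing with $\max(I_1^V\setminus I_{n-3}^V)=c$, so it lies in row $c$. The same bookkeeping shows $I_{n-2}^V=(I_{n-3}^V\setminus\{n-3\})\cup\{n-2\}$ contributes $I_{n-3}^V$'s plusses with the $(c,n-3)$ plus shifted to $(c,n-2)$, giving the $(c)\to(n-2)$ plus of part 3; that no other term of $V$ hits column $n-2$ follows because $\{i:n-2\in I_i^V\}$ is a cyclic interval ending at $n-2$ (Grassmann necklace property together with Corollary~\ref{no coloops}) and $n-2\notin I_{n-3}^V$, forcing that interval to be $\{n-2\}$. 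For part 5, $I_{n-1}^V=(I_{n-3}^V\setminus\{n-3\})\cup\{n-5\}$, so its plus-set is $I_{n-3}^V$'s with column $n-3$ and row $n-5$ deleted; splitting on $c=n-4$ (where $v=n-4$, the largest surviving row, pairs with the smallest surviving column $n-1$) versus $c=n-5$ shows the surviving plusses already came from $I_{n-3}^V$ and the extra $(n-4)\to(n-1)$ plus appears exactly when $c=n-4$. Part 4 is a direct computation: $I_{n-2}^W=(I_n^V\setminus\{n-5\})\cup\{n-1,n-2\}$ and $I_1^W=I_1^V\cup\{n-3\}$ yield $I_1^W\setminus I_{n-2}^W=\{n-3,v,n-5\}$ and $I_{n-2}^W\setminus I_1^W=\{n-2,n-1,n\}$, so pairing largest row with smallest column gives $(n-3)\to(n-2)$, $(c)\to(n-1)$, and a third plus which is $(n-5)\to n$ when $c=n-4$ and $(v)\to n$ when $c=n-5$; in the latter case $(v)\to n$ is exactly the unique plus of $I_n^V$, since $I_1^V\setminus I_n^V=\{v\}$ and $I_n^V\setminus I_1^V=\{n\}$.

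Part 6 is assembled from the above plus one more cyclic-interval remark: $n-5$ re-enters the necklace of $V$ at the $I_{n-2}\to I_{n-1}$ transition and leaves only at index $n-5$, so $n-5\in I_i^V$ for all $i$ in $[n-1,n-5]$ and hence the only terms that can place a plus in row $n-5$ are $I_{n-4}^V$, $I_{n-3}^V$, $I_{n-2}^V$; similarly, columns $n$ and $n-1$ only receive plusses from terms containing $n$, resp.\ $n-1$. For $I_{n-3}^V$ and $I_{n-2}^V$ the computations above put their column-$n$ and column-$n-1$ plusses in rows strictly below $n-5$ (since $I_1^V\setminus I_i^V$ contains $u<n-5$; in the degenerate case $u=n$ those terms have no column $n$ at all, and column $n-1$ is handled directly), and for $I_{n-4}^V$ one uses only the single-swap relation $I_{n-3}^V=(I_{n-4}^V\setminus\{n-4\})\cup\{w\}$, valid because vertex $n-4\in V(q)$ is not a loop, together with the estimate $\min(I_1^V\setminus I_{n-4}^V)<n-5$. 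This shows no term gives an $(n-5)\to n$ plus, and leaves $I_{n-1}^V$ (from part 5) as the only possible source of a $(c)\to(n-1)$ plus, which occurs iff $c=n-4$.

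The main obstacle is precisely this last step: the term $I_{n-4}^V$ has no closed form coming from \eqref{eq necklace}, so one must argue indirectly from a single-element Grassmann-necklace transition and from the cyclic-interval structure of $\{i:x\in I_i^V\}$, and then verify that none of the degenerate configurations ($u=n$, $v=n-4$, $w=n-4$, etc.) produces a stray plus in row $n-5$ or row $c$. Parts 2 and 4 (the $W$-side statements) and the column-$n-2$ uniqueness in part 3 are comparatively routine once Lemma~\ref{lem I} and the extended chain are available.
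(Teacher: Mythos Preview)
Your plan is essentially the paper's own proof: compute the symmetric differences $I_1\setminus I_i$ and $I_i\setminus I_1$ from Lemma~\ref{lem I} and the chain~\eqref{eq necklace}, then run Algorithm~\ref{alg:GN to Le} term by term, with part~6 handled by restricting to the few indices $i$ where $n-5\notin I_i^V$.

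One simplification you overlook removes most of the ``obstacle'' you flag: in $V$ the vertex $n-4$ supports only $q$, so $I_{n-4}^V(q)=n-4$ and $I_{n-3}^V(q)=n-3$, which forces $w=n-3$ always in the swap $I_{n-3}^V=(I_{n-4}^V\setminus\{n-4\})\cup\{w\}$; no genuine case split on $w$ is needed. Your estimate $\min(I_1^V\setminus I_{n-4}^V)<n-5$ does fail in the degenerate sub-case $c=n-4$, $u=n$ (where $I_1^V\setminus I_{n-4}^V=\{n-5\}$), but the conclusion survives because then $n\notin I_{n-4}^V$ and there is no column~$n$; the paper's analogous size claim $|I_i^V\setminus I_1^V|\ge 3$ for $i=n-4$ has the same minor imprecision at $c=n-4$, so this is a shared wrinkle rather than a defect of your approach.
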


\begin{proof}
\begin{enumerate}  

\item As in Lemma~\ref{lem I} we have that none of $n,\; n-1, \; n-2$ and $n-3$ are in $I_1^V$, which implies the first part of the claim. The second part is a consequence of the discussion following equation \eqref{eq necklace}, which shows that $n-4 \not\in I_n^V$.

\item By Lemma~\ref{lem I} we have that $I_{n-3}^W = I_{n-3}^V \cup \{n-2\}$, so $I_{1}^W \setminus I_{n-3}^W = I_1^V \setminus I_{n-3}^V$ and $I_{n-3}^W \setminus I_1^W$ is obtained from $I_{n-3}^V \setminus I_{1}^V$ by replacing the $n-3$ with $n-2$.    

\item From equation \eqref{eq necklace} the pair of sets $(I_{n-2}^V\setminus I_1^V, I_1^V\setminus I_{n-2}^V)$ is either $(\{n-2, n-1, n\}, \{n-5, u, v\})$ (if $n \in I_{n-2}^V$) or $(\{n-2, n-1\}, \{n-5, v\})$ (if $n \not\in I_{n-2}^V$, i.e. $u = n$).

Recall that we cannot have $u = n-4$ (since $n-4 \notin I_n^V$) so either $u = n$ or $u < n-5$, and either way $n-2$ is minimal in $I_{n-2}^V\setminus I_1^V$ and $c$ is maximal in $I_1^V \setminus I_{n-2}^V$, yielding a $(c) \rightarrow (n-2)$ plus. 

Since $I_{n-2}^V$ is the only term in the Grassmann necklace for $V$ that contains $n-2$, no other term can contribute a plus in column $n-2$. Additionally, since $I_{n-2}^V$ only differs from $I_{n-3}^V$ in that the $n-3$ was replaced by an $n-2$, all other plusses must agree with those contributed by $I_{n-3}^V$. 

\item By Lemma~\ref{lem I} we have $I_1^W = I_1^V\cup \{n-3\}$ and $I_{n-2}^W = (I_n^V \setminus \{n-5\}) \cup \{n-1,n-2\}$. By equation \eqref{eq necklace} we have $n-5 \in I_1^V$, and by the following discussion we have that none of $n-1,\;n-2$ and $n-3$ are not in either $I_n^V$ or $I_1^V$. It follows that the symmetric difference of $I_1^W$ and $I_{n-2}^W$ is
    \begin{gather*}
    I_1^W \setminus I_{n-2}^W  = (I_1^V\setminus I_n^V) \cup \{n-3, n-5\}. \\
    I_{n-2}^W \setminus I_1^W  = (I_n^V\setminus I_1^V) \cup \{n-1, n-2\}.
\end{gather*}
As per the discussion following equation \eqref{eq necklace} above, we must also have $I_n^V\setminus I_1^V = \{n\}$ and ${I_1^V\setminus I_n^V = \{v\}}$. If $c = n-5$ then the path $v \rightarrow n$ from $I_n^V$ does not intersect with the paths from $\{n-3, n-5\}$ to $\{n-2, n-1\}$ and we obtain the plusses as stated.
    
Now suppose $c = v= n-4$; in this case $I_{n-2}^W$ contributes plusses defined by the noncrossing paths starting from $\{n-3, n-4, n-5\}$ and ending at $\{n, n-1, n-2\}$, namely the plusses in positions $(n-3)\rightarrow (n-2)$, $(n-4)\rightarrow (n-1)$ and $(n-5)\rightarrow (n)$.

\item Combining equation \eqref{eq necklace} with point 3 of this proof, the pair $(I_{n-1}^V \setminus I_1^V, I_1^V \setminus I_{n-1}^V)$ is either $(\{n-1, n\}, \{u, v\})$ or $(\{n-1\}, \{v\})$. Per point 1, we have $c = n-4$ if and only if $v = n -4$, in which case we get a $(n-4) \rightarrow (n-1)$ plus and the $u \rightarrow n$ plus (if it exists) has already been contributed by $I_{n-2}^V$ and $I_{n-3}^V$. If $c = n-5$ then $v < n-5$ and we are in the same situation as $I_{n-2}^V$. 

\item A Grassmann necklace term $I_i^V$ can only contribute a path beginning at $c$ when $c \not\in I_i^V$. Further, since $c$ will always be maximal in $I_1^V\setminus I_i^V$ by point 1, a path from $c$ must end at the minimal element of $I_i^V \setminus I_1^V$. From equation \eqref{eq necklace}, the only way we can have $c \not\in I_i^V$ and $n-1$ minimal in $I_i^V \setminus I_1^V$ is if $c = n-4$ and $i = n-1$.

Similarly, the only terms that can contribute an $(n-5) \rightarrow n$ plus are those with $n-5 \in I_i^V$, i.e. when $i \in \{n-2,n-3,n-4\}$. However, $n-5$ is always the largest or second-largest element in $I_1\setminus I_i^V$, while if $n \in I_i^V \setminus I_1^V$ for $i \in \{n-2,n-3,n-4\}$ then $|I_i^V \setminus I_1^V| \geq 3$ and so $n$ cannot be the smallest or second-smallest in this set. Thus there can be no path $(n-5) \rightarrow n$.
\end{enumerate}
\end{proof}

\begin{lem}\label{lem other k}
  Let $V$ and $W$ be as in Lemma~\ref{lem I}, and suppose that if $n-2\in I_{n-2}^{V}$ then $n-1\in I_{n-2}^{V}$ also. Then for each $k$ in the range $1<k<n-2$, $I_k^{W}$ contributes the same plusses as $I_{k}^{V}$, except that if $I_{k}^{V}$ contributed a plus in the $n-3$ column of the Le diagram of $V$ then this plus is shifted one square left in the Le diagram of $W$.
\end{lem}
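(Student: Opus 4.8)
The plan is to read off the plusses contributed by $I_k^W$ and by $I_k^V$ using Algorithm~\ref{alg:GN to Le} and compare them via the explicit formulas of Lemma~\ref{lem I}. I will use throughout three facts recalled from the proof of Lemma~\ref{lem I}: $I_1^W = I_1^V\cup\{n-3\}$; none of $n,\,n-1,\,n-2,\,n-3$ lies in $I_1^V$; and for $1<k<n-2$ at most one of $n-3$ and $n-2$ lies in $I_k^V$. Since Algorithm~\ref{alg:GN to Le} determines the plusses of a Grassmann necklace term $I_k$ purely from the pair $(I_1\setminus I_k,\ I_k\setminus I_1)$ — matching the $j$th largest element of $I_1\setminus I_k$ in the $<_1$ order with the $j$th smallest element of $I_k\setminus I_1$ — it suffices to track how these two sets change upon passing from $V$ to $W$. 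In the first case, $n-3\notin I_k^V$, we have $I_k^W = I_k^V\cup\{n-3\}$ by Lemma~\ref{lem I}; since $n-3\in I_1^W\setminus I_1^V$, adjoining $n-3$ to both $I_1$ and $I_k$ leaves the symmetric difference untouched, so $I_1^W\setminus I_k^W = I_1^V\setminus I_k^V$ and $I_k^W\setminus I_1^W = I_k^V\setminus I_1^V$, hence $I_k^W$ contributes exactly the same plusses as $I_k^V$; and $n-3\notin I_k^V\setminus I_1^V$, so there is no plus in the $n-3$ column and the exceptional clause is vacuous here.

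In the remaining case $n-3\in I_k^V$, Lemma~\ref{lem I} gives $I_k^W = I_k^V\cup\{n-2\}$, and $n-2\notin I_k^V$ by the third recalled fact. Because $n-3\in I_k^W$ and $n-2\notin I_1^V\cup I_1^W$, a short computation shows that $I_1^W\setminus I_k^W = I_1^V\setminus I_k^V$ is unchanged, while $I_k^W\setminus I_1^W$ is obtained from $I_k^V\setminus I_1^V$ by deleting $n-3$ and inserting $n-2$. As no element of $I_k^V\setminus I_1^V$ lies strictly $<_1$-between $n-3$ and $n-2$ (the only candidate, $n-2$ itself, being excluded), this substitution preserves the sorted order: if $n-3$ is the $s$th smallest element of $I_k^V\setminus I_1^V$, then $n-2$ is the $s$th smallest element of $I_k^W\setminus I_1^W$ and all other positions are unchanged. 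Writing $I_1^V\setminus I_k^V = \{a_1>_1\cdots>_1 a_r\}$, Algorithm~\ref{alg:GN to Le} therefore places the plusses of $I_k^W$ in the same rows and columns as those of $I_k^V$, except that the plus in position $(a_s,n-3)$ for $V$ becomes the plus in position $(a_s,n-2)$ for $W$; note $a_s\in I_1^V\subseteq I_1^W$, so this row label is legitimate in both diagrams.

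It then remains to see that moving a plus from column $n-3$ to column $n-2$ is a shift by one square to the left. By the labelling convention for Le diagrams the column labels decrease from left to right, and by Lemma~\ref{lem shape} the four leftmost columns of the Le diagram of $V$ are precisely those labelled $n,\,n-1,\,n-2,\,n-3$, so columns $n-2$ and $n-3$ are adjacent in $V$ with $n-2$ on the left. In the Le diagram of $W$ the column $n-3$ has disappeared (it has become the bottom row, labelled $n-3$), column $n-2$ is the third column from the left, and the $\mathcal{B}$ block is attached directly to its right; thus the plus formerly in the column immediately right of $n-2$ now sits in column $n-2$ itself, one square to its left, while every other plus of $I_k^V$ — whether in the $\mathcal{A}$ block or in $\mathcal{B}$ — is contributed by $I_k^W$ in its corresponding position. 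This is exactly the content of the lemma.

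The main obstacle I anticipate is purely organisational rather than mathematical: the Le diagrams of $V$ and $W$ have different shapes (one extra column for $V$, one extra row for $W$), so one must invoke Lemma~\ref{lem shape} and the column-labelling convention to make fully precise what ``the same plusses'' and ``shifted one square left'' mean. The symmetric-difference computations behind each case are immediate from Lemma~\ref{lem I}.
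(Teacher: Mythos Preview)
Your proof is correct and follows essentially the same approach as the paper: compute the symmetric differences $I_1\setminus I_k$ and $I_k\setminus I_1$ from the formulas of Lemma~\ref{lem I}, then read off the plusses via Algorithm~\ref{alg:GN to Le}. Your second case is in fact slightly more direct than the paper's — you use the observation from Lemma~\ref{lem I} that $n-2$ and $n-3$ cannot both lie in $I_k^V$ to conclude $n-2\notin I_k^V$ immediately, whereas the paper reaches the same conclusion by invoking Lemma~\ref{lem n-2 bad} (which is where the extra hypothesis on $I_{n-2}^V$ enters); your route shows the hypothesis is not strictly needed for this particular lemma.
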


\begin{proof}
Recall that $I_1^W = I_1^V\cup \{n-3\}$ and that $n-2$, $n-1$, and $n$ are not in $I_1^W$.

  If $n-3\not\in I_{k}^{V}$ then by Lemma~\ref{lem I} we have $I_{k}^{W} = I_k^{V}\cup \{n-3\}$.  Since $n-3$ is the largest element of $I_1^{W}$ this transformation leaves the disjoint paths unchanged and so the plusses carry over from $V$ to $W$ directly.

  If $n-3\in I_{k}^{V}$ then $I_k^V$ must contribute a plus in column $n-3$ of the Le diagram of $V$, and by Lemma~\ref{lem I} we have $I_{k}^{W} = I_k^{V}\cup \{n-2\}$. If $n-2$ supports no propagators in $V$ then certainly no plusses appear in column $n-2$ of the Le diagram of $V$. If $n-2$ supports at least one propagator in $V$ then $n-2 \in I_{n-2}^V$ and so by hypothesis $n-1 \in I_{n-2}^V$ as well. By Lemma~\ref{lem n-2 bad} item 3, the only necklace element of $V$ containing $n-2$ is $I_{n-2}^{V}$ and further with item 4, the plusses in column $n-2$ from $I_{n-2}^V$ and $I_{n-2}^W$ are different.

  Since $n-3 \in I_k^V \setminus I_1^V$, we must have a path in the Le diagram of $V$ from some vertical edge $i$ to the bottom edge $n-3$.  In the Le diagram of $W$, the index $n-3$ labels a vertical edge with no path from $I_k^W$ starting at it (since $n-3$ belongs to both $I_1^W$ and $I_k^W$), and there must be a path leading to $n-2$ since $n-2 \in I_k^W \setminus I_1^W$.  By the previous paragraph no other path from $I_k^V$ could end at $n-2$, and since the paths cannot cross we conclude that the $i\rightarrow (n-3)$ path in the Le diagram of $V$ must become a $i\rightarrow (n-2)$ path in the Le diagram of $W$. All other paths are unchanged.
\end{proof}

We are now ready to prove the main result of this section, showing that the dimension of the positroid cell associated to each Wilson loop diagram is $3|\cP|$. 

\begin{thm}\label{thm dim}
Let $W = (\cP, [n])$ be an admissible Wilson loop diagram.  The number of plusses in the Le diagram of $W$ is three times the number of propagators.
\end{thm}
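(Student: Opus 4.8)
The plan is to prove the statement by induction on $k = |\cP|$, with the inductive step showing that adjoining the distinguished propagator $p$ supplied by Lemma~\ref{lem good p} to a Wilson loop diagram increases the number of plusses in its Le diagram by exactly three. The base case $k = 0$ is immediate: the Le diagram then fits in a $0\times n$ rectangle, so it is empty and has $0 = 3\cdot 0$ plusses. (One could equally well anchor the induction at $k = 1$, where after the reductions below the diagram is $W = (\{(1,3)\},[4])$, whose positroid is the uniform matroid $U_{1,4}$ with a $1\times 3$ Le diagram that is entirely plusses.)

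For the inductive step I would first run through the reductions already set up. By Lemma~\ref{lem uncovered} every non-supporting vertex of $W$ only inserts an all-zero column into the Le diagram, so deleting all of them changes neither $k$ nor the number of plusses; by Lemma~\ref{lem dihedral} dihedral transformations preserve the number of plusses. Hence, invoking Lemma~\ref{lem good p} together with Remark~\ref{rem n-2 case} (the latter folding the stray sub-case into Case~1), we may assume $W$ is one of the two diagrams of Figure~\ref{fig special p}: it has no non-supporting vertices, and there is a length-$2$ propagator $p = (n-3,n-1)$ lying outside every other propagator, together with (in Case~2) a length-$2$ propagator $q = (n-5,n-3)$. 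Put $V = W\setminus\{p\}$; this is an admissible Wilson loop diagram with $k-1$ propagators (the bound $|V(\cP)| \ge |\cP| + 3$ for $W$, with $V(\cP) = [n]$, forces $n \ge (k-1)+4$), so by the inductive hypothesis its Le diagram has $3(k-1)$ plusses.

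What remains is to prove that the Le diagram of $W$ has exactly three more plusses than that of $V$, and here the detailed lemmas do the work. By Lemma~\ref{lem shape} the Le diagram of $W$ is obtained from that of $V$ by deleting the $n-3$ column and adjoining a new bottom row, labelled $n-3$, consisting of three boxes (see Figure~\ref{fig Le}). Lemma~\ref{lem n and n-1} puts plusses into the $(n-3)\to n$ and $(n-3)\to (n-1)$ boxes of that new row via the necklace terms $I_n^W$ and $I_{n-1}^W$, and Lemma~\ref{lem n-2 good} (in Case~1) or Lemma~\ref{lem n-2 bad}(4) (in Case~2) puts a plus into the $(n-3)\to (n-2)$ box via $I_{n-2}^W$, so the new row carries exactly three plusses. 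For the rest, Lemmas~\ref{lem n and n-1}, \ref{lem n-2 good}, \ref{lem n-2 bad} and \ref{lem other k} together describe, term by term, a correspondence between the remaining plusses of $W$ and the plusses of $V$: outside column $n-2$ each plus of $W$ sits at the coordinates of a plus of $V$ (possibly contributed by a different necklace term, e.g. $I_{n-1}^W$ reproduces the plusses of $I_n^V$), while a plus of $V$ in column $n-3$ is displaced into column $n-2$ of $W$. One then checks this correspondence is a bijection, so that adding the three new plusses gives $3(k-1)+3 = 3k$, which closes the induction.

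I expect the main obstacle to be precisely that last verification, namely confirming the correspondence between the old and the new plusses is a bijection in every case. In Case~1 this is painless, since $n-3$ and $n-2$ become non-supporting in $V$, so columns $n-3$ and $n-2$ of the Le diagram of $V$ are entirely zero and nothing has to move. Case~2 is the delicate one: one must split further according to whether the largest element $c$ of $I_1^V$ is $n-5$ or $n-4$ (Lemma~\ref{lem n-2 bad}(1)), and track how the plus that $I_{n-2}^W$ places in column $n-1$, and the remaining plus it places among columns $n-2, n-1, n$, are accounted for against the unique column-$(n-2)$ plus of $V$ (Lemma~\ref{lem n-2 bad}(3)), the column-$(n-3)$ plus of $V$, and the plusses of $I_{n-1}^V$ and $I_n^V$. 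The non-collision at column $n-2$ rests on Lemma~\ref{lem n-2 bad}(3) (only $I_{n-2}^V$ meets that column) and on the hypothesis of Lemma~\ref{lem other k} (for $1<\ell<n-2$ the term $I_\ell^V$ avoids column $n-2$ entirely); so those hypotheses, together with items (5) and (6) of Lemma~\ref{lem n-2 bad}, are exactly what make the count come out right. Essentially all of the real content is packaged into Lemmas~\ref{lem I}--\ref{lem other k}; the theorem's proof is the (still somewhat fiddly) task of assembling them into the plus-count.
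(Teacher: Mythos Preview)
Your proposal is correct and follows essentially the same approach as the paper: induction on the number of propagators, reduction via Lemmas~\ref{lem uncovered}--\ref{lem good p} and Remark~\ref{rem n-2 case} to the configurations of Figure~\ref{fig special p}, three new plusses in the bottom row from Lemmas~\ref{lem n and n-1}, \ref{lem n-2 good}, \ref{lem n-2 bad}, and then an explicit bijection between the remaining plusses of $W$ and the plusses of $V$ assembled from Lemmas~\ref{lem n and n-1}--\ref{lem other k}. The paper spells out the bijection a bit more explicitly (pairing the unique column-$(n-2)$ plus of $V$ with the $(n-5)\to(n-1)$ or $(n-5)\to n$ plus of $W$ according to whether $n-4\in I_1^V$), but you have correctly located this as the delicate point and identified the lemma items that control it.
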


\begin{proof}
  The proof is by induction on the number of propagators.

  First note that a Wilson loop diagram $W$ with one propagator supported on vertices \[{i<i+1<j<j+1}\] has a Le diagram consisting of a single row with $n-i$ boxes. The columns are labeled from left to right by $n, \ldots, i+1$. By Algorithm~\ref{alg:GN to Le} there are plusses in the $i+1$, $j$, and $j+1$ positions. 

  Now consider Wilson loop diagrams with $k>1$ propagators.  By Lemma~\ref{lem uncovered} it suffices to prove the result for weakly admissible Wilson loop diagrams with $k$ propagators and no non-supporting vertices.  By Lemma~\ref{lem dihedral} it suffices to prove the result for at least one Wilson loop diagram from each dihedral orbit. In particular, we can restrict our attention to diagrams $W$ which have a propagator $p$ with the properties in Lemma~\ref{lem good p}.  


Thus $W$ has one of the two configurations depicted in Figure~\ref{fig special p}, and by Remark~\ref{rem n-2 case} we can further assume that if $n-2 \in I_{n-2}^V$ then $n-1 \in I_{n-1}^V$ as well. Let $V$ be $W$ with $p$ removed. 

  From Lemma~\ref{lem shape} we know how the shapes of the Le diagrams of $V$ and $W$ are related; let $\mathcal{A}$ and $\mathcal{B}$ be as described in that lemma and subsequent discussion.  

  Lemmas~\ref{lem n and n-1}, \ref{lem n-2 good}, and \ref{lem n-2 bad} tell us that the three boxes of the bottom row of the Le diagram of $W$ each have a plus, i.e. there are plusses in the positions $(n-3) \rightarrow (n-2)$, $(n-3) \rightarrow (n-1)$ and $(n-3) \rightarrow n$. We will view these as the three ``new'' plusses required by the induction step and use Lemmas~\ref{lem n and n-1} through \ref{lem other k} to construct a pairing between the plusses of the Le diagram of $V$ and the plusses of the Le diagram of $W$ that are not in the bottom row. See Figure~\ref{fig:le diagrams bijection}.

 The pairing is defined as follows:
  \begin{itemize}
  \item The plusses in the $\mathcal{B}$ region of both Le diagrams are identical, so each plus in this region is paired with itself.
  \item Every square containing a plus in the leftmost two columns (columns $n$ and $(n-1)$) of $\mathcal{A}$ for $V$ also contains a plus in $W$, so these plusses are also paired with themselves.
  \item The plusses in column $n-3$ for $V$ are paired with the corresponding plusses in column $n-2$ for $W$.
  \item If there is a plus in column $n-2$ of $\mathcal{A}$ in $V$ then Lemma~\ref{lem n-2 bad} applies, so there is exactly one such plus in this column. If $n-4 \not\in I_1^V$ then pair this plus with the $(n-5)\rightarrow (n-1)$ plus in $W$, whereas if $n-4 \in I_1^V$ then pair it with the $(n-5) \rightarrow n$ plus in $W$.

  \end{itemize}
The first two points follow from Lemmas \ref{lem n and n-1} through \ref{lem n-2 bad}. The third point follows from Lemma~\ref{lem other k}, which tells us that column $n-3$ of $V$ is identical to column $n-2$ of $W$ (excluding the last row).

Finally, if $n-2 \not\in I_{n-2}^V$ by Lemma~\ref{lem n-2 good} there are no plusses in column $n-2$ of the diagram of $V$, and all plusses have been accounted for. On the other hand, if $n-2 \in I_{n-2}^V$ then we are in the setting of Lemma~\ref{lem n-2 bad}, and there is exactly one plus in column $n-2$ of the diagram of $V$. This plus is paired with the one plus in $W$ yet unaccounted for, namely the one in position $(n-5) \rightarrow n$ or $(n-5) \rightarrow (n-1)$, as described in points 4 and 6 of that lemma.

Therefore, the Le diagram of $W$ contains $3(k-1)$ plusses in bijection with the plusses from the Le diagram of $V$ and 3 new plusses in the bottom row, yielding $3k$ in total. Applying induction completes the proof.
\end{proof}

\begin{figure}
\[
\begin{tikzpicture}[scale = 0.85]
\node at (2.5,5.5) {\Large $V$};
\node at (12.5,5.5) {\Large $W$};
\node at (-2,2.5) {\large $n-4 \in I_1^V$};
\node at (-2,-4.5) {\large $n-4 \not\in I_1^V$};
\node at (7.5,2.5) {\Large $\longrightarrow$};
\node at (8,-4.5) {\Large $\longrightarrow$};
\begin{scope}
\fill[pattern color = gray,pattern = north east lines] (0,4.5) -- (0,2) -- (2,2) -- (2,4.5);
\fill[pattern color = gray,pattern = north west lines] (3,4.5) -- (3,0) -- (4,0) -- (4,4.5);
\draw (0,0) grid (4,2);
\foreach \x in {0,1,2,3,4} {
  \draw (\x,2) -- (\x,4);
  \draw[smalldash] (\x,4) -- (\x,4.5);
  };
\foreach \c in {(0.5,0.5),(1.5,0.5),(1.5,1.5)} {
  \node at \c {\leplus};
}
\foreach \c in {(0.5,1.5),(2.5,1.5),(2.5,2.5),(2.5,3.3)} {
  \node at \c {\lezero};
}
\node[red] at (2.5,0.5) {\leplusbold[red]};
\node at (2.5,4.2) {$\vdots$};
\begin{scope}
\fill[pattern color = gray,pattern = horizontal lines] (4,2) .. controls (5.3,2.2) and (4.7,3) .. (5.5,3.2) .. controls (6.3,3.5) .. (6.5,4) -- (6.6,4.5) -- (4,4.5) -- (4,2);
\draw (4,2) .. controls (5.3,2.2) and (4.7,3) .. (5.5,3.2) .. controls (6.3,3.5) .. (6.5,4);
\draw[smalldash] (6.5,4) -- (6.6,4.5);
\end{scope}
\node[below] at (0.5,0) {\scriptsize $n\vphantom{1}$};
\node[below] at (1.5,0) {\scriptsize $n-1$};
\node[below] at (2.5,0) {\scriptsize $n-2$};
\node[below] at (3.5,0) {\scriptsize $n-3$};
\node[right] at (4,0.5) {\scriptsize $n-4$};
\node[right] at (4,1.5) {\scriptsize $n-5$};
\end{scope}

\begin{scope}[shift = {(9,0)}]
\fill[pattern color = gray,pattern = north east lines] (0,4.5) -- (0,2) -- (2,2) -- (2,4.5);
\fill[pattern color = gray,pattern = north west lines] (2,4.5) -- (2,0) -- (3,0) -- (3,4.5);
\draw (0,-1) grid (3,2);
\foreach \x in {0,1,2,3} {
  \draw (\x,2) -- (\x,4);
  \draw[smalldash] (\x,4) -- (\x,4.5);
  };
\foreach \c in {(0.5,0.5),(1.5,0.5),(1.5,1.5)} {
  \node at \c {\leplus};
}
\node[red] at (0.5,1.5) {\leplusbold[red]};
\foreach \c in {(0.5,-0.5),(1.5,-0.5),(2.5,-0.5)} {
  \node[blue] at \c {\leplus};
}
\begin{scope}[shift = {(-1,0)}]
\fill[pattern color = gray,pattern = horizontal lines] (4,2) .. controls (5.3,2.2) and (4.7,3) .. (5.5,3.2) .. controls (6.3,3.5) .. (6.5,4) -- (6.6,4.5) -- (4,4.5) -- (4,2);
\draw (4,2) .. controls (5.3,2.2) and (4.7,3) .. (5.5,3.2) .. controls (6.3,3.5) .. (6.5,4);
\draw[smalldash] (6.5,4) -- (6.6,4.5);
\end{scope}
\node[below] at (0.5,-1) {\scriptsize $n\vphantom{1}$};
\node[below] at (1.5,-1) {\scriptsize $n-1$};
\node[below] at (2.5,-1) {\scriptsize $n-2$};
\node[right] at (3,-0.5) {\scriptsize $n-3$};
\node[right] at (3,0.5) {\scriptsize $n-4$};
\node[right] at (3,1.5) {\scriptsize $n-5$};
\end{scope}

\begin{scope}[shift = {(0,-6)}]
\fill[pattern color = gray,pattern = north east lines] (0,3.5) -- (0,1) -- (2,1) -- (2,3.5);
\fill[pattern color = gray,pattern = north west lines] (3,3.5) -- (3,0) -- (4,0) -- (4,3.5);
\draw (0,0) grid (4,1);
\foreach \x in {0,1,2,3,4} {
  \draw (\x,1) -- (\x,3);
    \draw[smalldash] (\x,3) -- (\x,3.5);
  };
\foreach \c in {(0.5,0.5),(1.5,0.5),(2.5,1.5),(2.5,2.3)} {
  \node at \c {\lezero};
}
\node[red] at (2.5,0.5) {\leplusbold[red]};
\node at (2.5,3.2) {$\vdots$};
\begin{scope}[shift = {(1,-1)}]
\fill[pattern color = gray,pattern = horizontal lines] (3,1) -- (4,1) -- (4,2) .. controls (5.3,2.2) and (4.7,3) .. (5.5,3.2) .. controls (6.3,3.5) .. (6.5,4)-- (6.6,4.5) -- (3,4.5) -- (3,1);
\draw (3,1) -- (4,1) -- (4,2) .. controls (5.3,2.2) and (4.7,3) .. (5.5,3.2) .. controls (6.3,3.5) .. (6.5,4);
\draw[smalldash] (6.5,4) -- (6.6,4.5);
\end{scope}

\node[below] at (0.5,0) {\scriptsize $n\vphantom{1}$};
\node[below] at (1.5,0) {\scriptsize $n-1$};
\node[below] at (2.5,0) {\scriptsize $n-2$};
\node[below] at (3.5,0) {\scriptsize $n-3$};
\node[below] at (4.5,0) {\scriptsize $n-4$};
\node[right] at (5,0.5) {\scriptsize $n-5$};
\end{scope}

\begin{scope}[shift = {(9,-6)}]
\fill[pattern color = gray,pattern = north east lines] (0,3.5) -- (0,1) -- (2,1) -- (2,3.5);
\fill[pattern color = gray,pattern = north west lines] (2,3.5) -- (2,0) -- (3,0) -- (3,3.5);
\draw (0,-1) grid (3,1);
\foreach \x in {0,1,2,3} {
  \draw (\x,1) -- (\x,3);
  \draw[smalldash] (\x,3) -- (\x,3.5);
  };
\node at (0.5,0.5) {\lezero};
\node[red] at (1.5,0.5) {\leplusbold[red]};
\foreach \c in {(0.5,-0.5),(1.5,-0.5),(2.5,-0.5)} {
  \node[blue] at \c {\leplus};
}
\begin{scope}[shift = {(0,-1)}]
\fill[pattern color = gray,pattern = horizontal lines] (3,1) -- (4,1) -- (4,2) .. controls (5.3,2.2) and (4.7,3) .. (5.5,3.2) .. controls (6.3,3.5) .. (6.5,4)-- (6.6,4.5) -- (3,4.5) -- (3,1);
\draw (3,1) -- (4,1) -- (4,2) .. controls (5.3,2.2) and (4.7,3) .. (5.5,3.2) .. controls (6.3,3.5) .. (6.5,4);
\draw[smalldash] (6.5,4) -- (6.6,4.5);
\end{scope}
\node[below] at (0.5,-1) {\scriptsize $n\vphantom{1}$};
\node[below] at (1.5,-1) {\scriptsize $n-1$};
\node[below] at (2.5,-1) {\scriptsize $n-2$};
\node at (3.5,-0.7) {\scriptsize $n-3$};
\node at (3.7,-0.2) {\scriptsize $n-4$};
\node[right] at (4,0.5) {\scriptsize $n-5$};
\end{scope}
\end{tikzpicture}
\]
\caption{The Le diagrams for $V$ and $W$ in the setting of Theorem~\ref{thm dim}. The shaded regions indicate where the plusses match up perfectly from $V$ to $W$, the bold (red) plusses exist if and only if $n-2 \in I_{n-2}^V$, and the (blue) plusses in the bottom row of the Le diagrams of $W$ diagrams are the three new plusses coming from the induction step.}\label{fig:le diagrams bijection}
\end{figure}
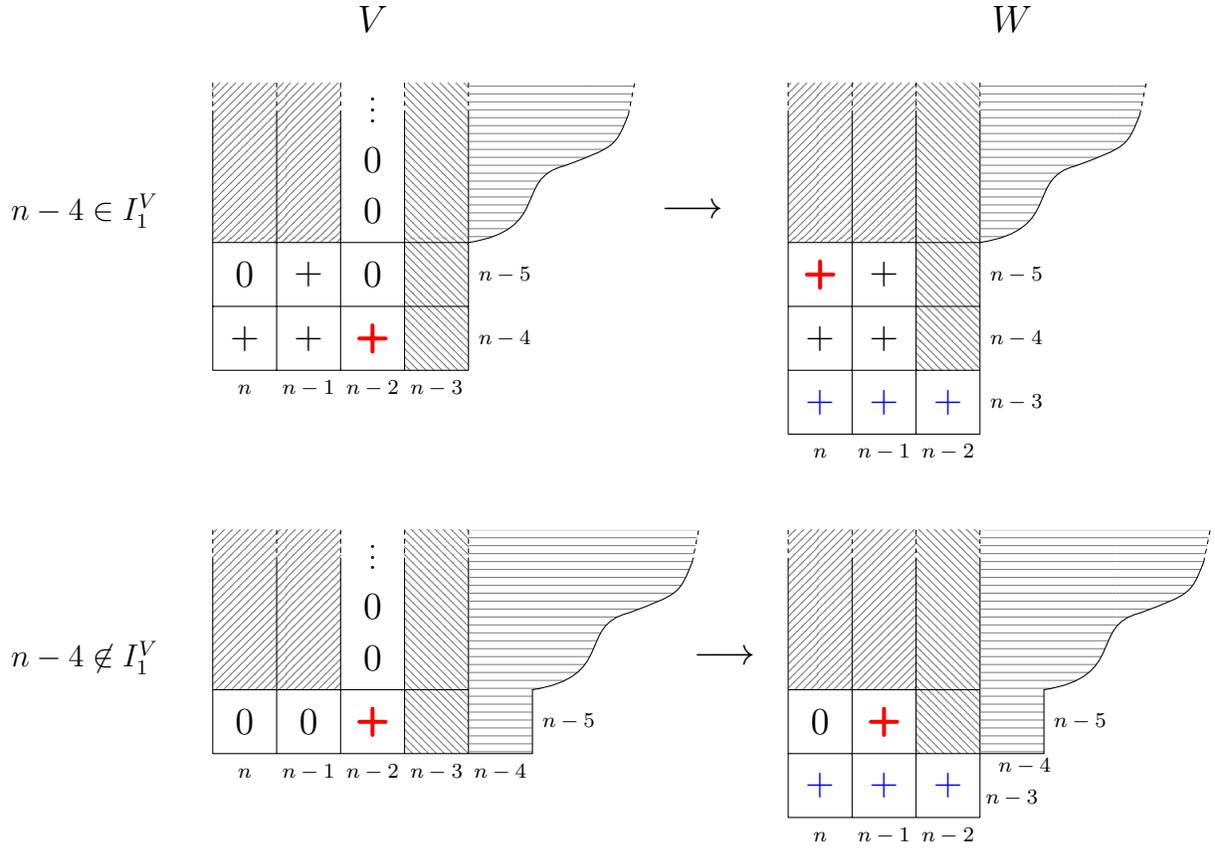

\section{Poles of Wilson loop Integrals}\label{sec poles} 

Recall from Section~\ref{section WLD background} that each Wilson loop diagram defines a functional $I(W)$, which associates a volume to the positroid cell $\Sigma(W)$. In this section we investigate the poles of the integrand of $I(W)$. 

Since SYM $N=4$ is a finite theory {\cite{Alday:2007hr, ParkeTaylorids}}, the scattering amplitudes (and thus the holomorphic Wilson loops $\,\mathfrak{W}_{k,n}$) are finite. That is, for a fixed set of external particles $\cZ$ we must have
\bas \sum_{W \text{ s.t. } |\cP| = k} I(W)(\cZ) <\infty \;.\eas
This holds if all poles of the integrands cancel. As a step towards this, in this section we prove some results about the structure of the denominator $R(W)$ of the integrand, as defined in Definition~\ref{def R(W)}.

The results of Section~\ref{sec GN algorithm} allow us to relate the configuration of propagators in a Wilson loop diagram $W$ to minors of $C(W)$ that contribute to $R(W)$. The main result of this section is Theorem~\ref{thm denom}, which expresses the denominator $R(W)$ in terms of the Grassmann necklace of $W$. This simplifies the computation of $R(W)$ and allows us to directly relate the poles of the integrand to the combinatorics of the diagram.

For the duration of this section, fix an arbitrary ordering on the propagators of $W$. This imposes an ordering on the rows of $C(W)$.

We first give an algorithm which extracts the required minors of $C(W)$ from the Grassmann necklace.

\begin{algorithm}\label{alg WLD to denom via GN}
Let $W = (\cP, [n])$ be a Wilson loop diagram and $C(W)$ be its associated matrix. 
\begin{itemize}
  \item For each $i \in [n]$, we construct a factor $r_i$ as follows:
    \begin{itemize}
      \item Let $S_i = \{p \in \cP \ | \ I_{i-1}(p) \neq I_i(p)\}$. (By convention, set $I_{0} = I_n$.)
      \item Let $r_i$ be the determinant of the $|S_i|\times |S_i|$ minor of $C(W)$ with rows indexed by $S_i$ and columns indexed by $I_i(S_i)$.  If $S_i = \emptyset$, set $r_i = 1$. 
    \end{itemize}
  \item Define $R = \prod_{i=1}^n r_i$.
\end{itemize}
\end{algorithm}

Each $r_i$ is the determinant of a minor determined by certain propagators of $W$. In particular, if a propagator $p$ contributes one value to the $i^{th}$ element of the Grassmann necklace and a different value to the $(i-1)^{th}$ element, then $p$ is one of the propagators contributing to the minor. Specifically, the row associated to $p$ and the column associated to the value that $p$ contributes to the $i^{th}$ Grassmann Necklace element appear in the minor.
Since each $r_i$ is simply the determinant of a particular submatrix of $C(W)$ and an ordering on the propagators has been fixed, the sign of each $r_i$ is well defined.

We also introduce the notation $\Delta_{I_i}$ to represent the determinant of the $k\times k$ minor of $C(W)$ with columns indexed by $I_i$. Note that one summand of this determinant is given by $\prod_{p \in \cP}x_{p, I_i(p)}$; the other summands are (up to sign) given by other bijections between the propagator set $\cP$ and the vertex set $I_i$.

Below, we show that the polynomial $R$ associated to $W$ by Algorithm~\ref{alg WLD to denom via GN} is equal to the denominator $R(W)$ from Definition~\ref{def R(W)}, and that the ideal generated by $R$ is the radical of the ideal generated by the product $\prod_{i=1}^n \Delta_{I_i}$. First, however, we give a worked example of Algorithm~\ref{alg WLD to denom via GN}.

\begin{figure}
\begin{tabular}{rrrrr}
\begin{tikzpicture}[rotate=67.5,baseline=(current bounding box.east)]
  \begin{scope}
  \drawWLD{7}{\diagramscale}
  \drawnumbers
  \drawlabeledprop{1}{-1}{6}{0}{\footnotesize   $p$}
  \drawlabeledprop{1}{0}{5}{0}{\footnotesize   $q$}
  \drawlabeledprop{1}{1}{4}{0}{\footnotesize \; \qquad  $s$}
  \pgfmathsetmacro{\move}{\angle/\n};
  \draw[propassignment,red] (1.5*\angle + -1*\move:\radius) to[bend left = \arrowangle] (\angle*1:\radius); 
  \draw[propassignment,red] (1.5*\angle:\radius) to[bend right = \arrowangle] (\angle*2:\radius); 
  \draw[propassignment,red] (4.5*\angle:\radius) to[bend left = \arrowangle] (\angle*4:\radius); 
  \node at (\angle*1.5:\radius*2) {$I_1$};
  \node[align = center] at (4*\angle:\radius*1.7) {$I_1(p) = 1, \; I_1(q) = 2,$ \\[5pt] $I_1(s) = 4$.};
    \end{scope}
  \end{tikzpicture} 
  & \quad \quad  &
\begin{tikzpicture}[rotate=67.5,baseline=(current bounding box.east)]
  \begin{scope}
  \drawWLD{7}{\diagramscale}
  \drawnumbers
  \drawlabeledprop{1}{-1}{6}{0}{\footnotesize   $p$}
  \drawlabeledprop{1}{0}{5}{0}{\footnotesize   $q$}
  \drawlabeledprop{1}{1}{4}{0}{\footnotesize \; \qquad  $s$}
  \pgfmathsetmacro{\move}{\angle/\n};
  \draw[propassignment,red] (1.5*\angle + -1*\move:\radius) to[bend right = \arrowangle] (\angle*2:\radius); 
  \draw[propassignment,red] (5.5*\angle:\radius) to[bend left = \arrowangle] (\angle*5:\radius); 
  \draw[propassignment,red] (4.5*\angle:\radius) to[bend left = \arrowangle] (\angle*4:\radius); 
  \node at (\angle*1.5:\radius*2) {$I_2$};
  \node[align = center] at (4*\angle:\radius*1.7) {$I_2(p) = 2,  \; I_2(s) = 4,$ \\[5pt] $I_2(q) = 5.$};
  \end{scope} 
\end{tikzpicture}
& \quad & 
\begin{tikzpicture}[rotate=67.5,baseline=(current bounding box.east)]
  \begin{scope}
  \drawWLD{7}{\diagramscale}
  \drawnumbers
  \drawlabeledprop{1}{-1}{6}{0}{\footnotesize   $p$}
  \drawlabeledprop{1}{0}{5}{0}{\footnotesize   $q$}
  \drawlabeledprop{1}{1}{4}{0}{\footnotesize \; \qquad  $s$}
  \pgfmathsetmacro{\move}{\angle/\n};
  \draw[propassignment,red] (6.5*\angle:\radius) to[bend left = \arrowangle] (\angle*6:\radius); 
  \draw[propassignment,red] (5.5*\angle:\radius) to[bend left = \arrowangle] (\angle*5:\radius); 
  \draw[propassignment,red] (4.5*\angle:\radius) to[bend left = \arrowangle] (\angle*4:\radius); 
  \node at (\angle*1.5:\radius*2) {$I_3 = I_4$};
  \node[align = center] at (4*\angle:\radius*1.7) {$I_3(s) = 4,\;  I_3(q) = 5,$ \\[5pt] $I_3(p) = 6.$ };
  \end{scope} 
\end{tikzpicture}  
\\ 
 & & & &  \\
 \begin{tikzpicture}[rotate=67.5,baseline=(current bounding box.east)]
  \begin{scope}
  \drawWLD{7}{\diagramscale}
  \drawnumbers
  \drawlabeledprop{1}{-1}{6}{0}{\footnotesize   $p$}
  \drawlabeledprop{1}{0}{5}{0}{\footnotesize   $q$}
  \drawlabeledprop{1}{1}{4}{0}{\footnotesize \; \qquad  $s$}
  \pgfmathsetmacro{\move}{\angle/\n};
  \draw[propassignment,red] (6.5*\angle:\radius) to[bend right = \arrowangle] (\angle*7:\radius); 
  \draw[propassignment,red] (5.5*\angle:\radius) to[bend right = \arrowangle] (\angle*6:\radius); 
  \draw[propassignment,red] (4.5*\angle:\radius) to[bend right = \arrowangle] (\angle*5:\radius); 
  \node at (\angle*1.5:\radius*2) {$I_5$};
  \node[align = center] at (4*\angle:\radius*1.7) {$I_5(s) = 5,\;  I_5(q) = 6,$  \\[5pt] $I_5(p) = 7.$ };
  \end{scope} 
\end{tikzpicture}
& \quad & 
\begin{tikzpicture}[rotate=67.5,baseline=(current bounding box.east)]
  \begin{scope}
  \drawWLD{7}{\diagramscale}
  \drawnumbers
  \drawlabeledprop{1}{-1}{6}{0}{\footnotesize   $p$}
  \drawlabeledprop{1}{0}{5}{0}{\footnotesize   $q$}
  \drawlabeledprop{1}{1}{4}{0}{\footnotesize \; \qquad  $s$}
  \pgfmathsetmacro{\move}{\angle/\n};
  \draw[propassignment,red] (6.5*\angle:\radius) to[bend right = \arrowangle] (\angle*7:\radius); 
  \draw[propassignment,red] (5.5*\angle:\radius) to[bend right = \arrowangle] (\angle*6:\radius); 
  \draw[propassignment,red] (1.5*\angle + 1*\move:\radius) to[bend left = \arrowangle] (\angle*1:\radius); 
  \node at (\angle*1.5:\radius*2) {$I_6$};
  \node[align = center] at (4*\angle:\radius*1.7) {$I_6(q) = 6, \; I_6(p) = 7, $ \\[5pt] $I_6(s) = 1. $ };  
  \end{scope}
  \end{tikzpicture}  
& \quad &
\begin{tikzpicture}[rotate=67.5,baseline=(current bounding box.east)]
  \begin{scope}
  \drawWLD{7}{\diagramscale}
  \drawnumbers
  \drawlabeledprop{1}{-1}{6}{0}{\footnotesize   $p$}
  \drawlabeledprop{1}{0}{5}{0}{\footnotesize   $q$}
  \drawlabeledprop{1}{1}{4}{0}{\footnotesize \; \qquad  $s$}
  \pgfmathsetmacro{\move}{\angle/\n};
  \draw[propassignment,red] (6.5*\angle:\radius) to[bend right = \arrowangle] (\angle*7:\radius); 
  \draw[propassignment,red] (1.5*\angle:\radius) to[bend left = \arrowangle] (\angle*1:\radius); 
  \draw[propassignment,red] (1.5*\angle + \move:\radius) to[bend right = \arrowangle] (\angle*2:\radius); 
  \node at (\angle*1.5:\radius*2) {$I_7$};
  \node[align = center] at (4*\angle:\radius*1.7) {$I_7(p) = 1,  \; I_7(q) = 1, $ \\[5pt] $I_7(s) = 2.$};
  \end{scope} 
\end{tikzpicture}
\\
\end{tabular}
\caption{Example Wilson loop diagram for illustrating Algorithm~\ref{alg WLD to denom via GN} and bijections between propagators and vertices for each Grassmann necklace element.}\label{fig R eg}
\end{figure}
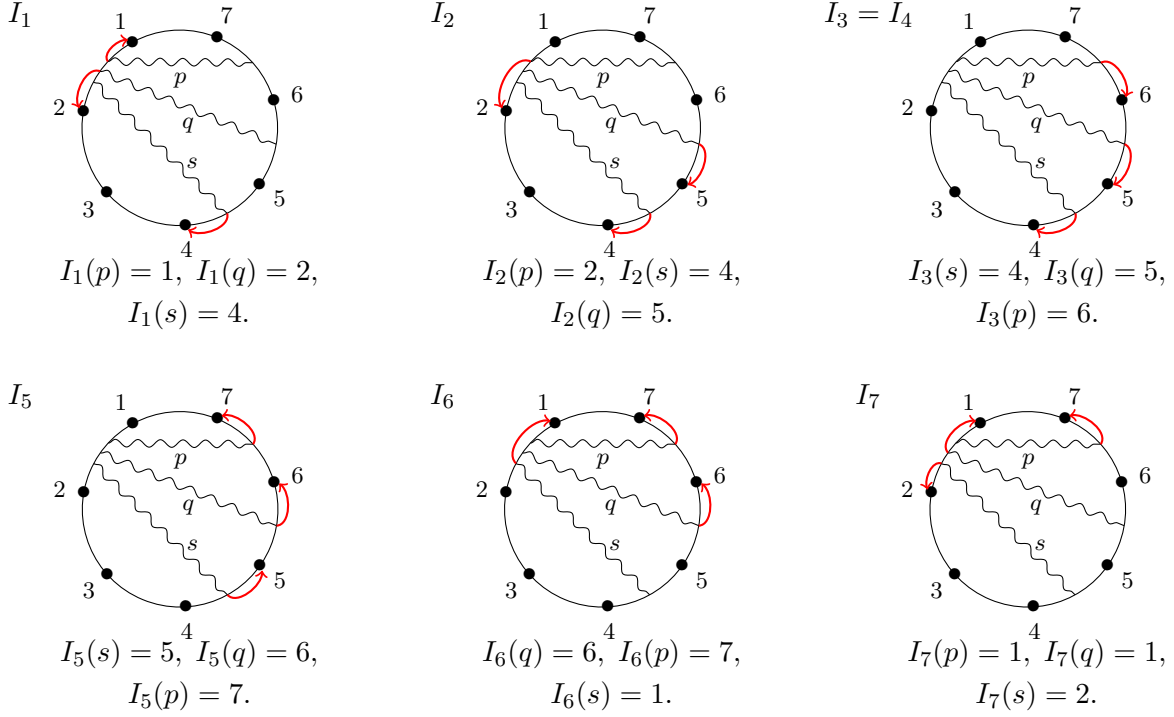

\begin{eg}
Consider the Wilson loop diagram in Figure~\ref{fig R eg}. Assigning propagators $p$, $q$, $r$ to rows $1,2,3$ respectively, we obtain the matrix
\[
C(W) = \begin{bmatrix} a & b & 0 & 0 & 0 & c & d \\ e & f & 0 & 0 & g & h & 0 \\ i & j & 0 & k & l & 0 & 0 \end{bmatrix}.
\]
The Grassmann necklace of this diagram is
\begin{eqnarray*}\Big(I_1 = \{1,2,4\},\ I_2 = \{2,4,5\},\ I_3 = \{4,5,6\},\ I_4=\{4,5,6\},\\ I_5=\{5,6,7\},\ I_6 = \{6,7,1\},\ I_7=\{7,1,2\}\Big). \end{eqnarray*}  
Figure~\ref{fig R eg} indicates which vertex is contributed to $I_i$ by each propagator, for each $i \in [1,7]$.  

From $I_1$ to $I_2$, the propagators $p$ and $q$ change which vertices they contribute to; specifically, we have $I_1(p) = 1$ and $I_1(q)= 2$, while $I_2(p) = 2$ and $I_2(q)= 5$. On the other hand, $r$ contributes  4 to both $I_1$ and $I_2$. We therefore have $S_2 = \{p,q\}$. The associated minors $\Delta_{I_2}$ and $r_2$ are therefore
\[
\Delta_{I_2}=\det\begin{bmatrix} b & 0 & 0 \\ f & 0 & g \\ j & k & l \end{bmatrix} = kgb, \qquad r_2 = \det\begin{bmatrix} b & 0 \\ f & g \end{bmatrix} = gb.
\] 
Continuing likewise, we get $r_3 = c$, $r_4=1$ (since $I_4 = I_3$), $r_5 = lhd$, and $r_6 = i$.

At $I_7$ we see for the first time an irreducible quadratic factor.  We have $S_7 = \{q,s\}$ with ${I_7(S_7) = \{ 1,2\}}$, $\Delta_{I_7} = d(ej-fi)$ and $r_7 = ej-fi$. 

The irreducible quadratic factor corresponds to the fact that $q$ and $s$ share an edge {\em and} contribute both endpoints of that edge to $I_7$; see Proposition~\ref{prop alg gives rad} below.  

Finally, we have $r_1 = (af-be)k$.  Putting the previous calculations together, we obtain
\[
R = (af-be)kgbclhdi(ej-fi) \;,
\]
which is square-free and contains all factors of $\prod_{i=1}^{n}\Delta_{I_i}$. If one were to construct the denominator $R(W)$ associated to this Wilson loop diagram as per Definition~\ref{def R(W)}, we would find that we have $R(W) = R$.
\end{eg}

We are now ready to begin proving the main result of this section. We begin with a proposition stating several facts about $\Delta_{I_i}$ and their relationship to $r_i$.

\begin{prop}\label{prop alg gives rad}
  With notation as in Algorithm~\ref{alg WLD to denom via GN} the following statements hold.
  \begin{enumerate}
    \item Each $\Delta_{I_i}$ is homogeneous, as is each $r_i$.
    \item Each $\Delta_{I_i}$ splits into linear and quadratic factors.  All linear factors of  $\Delta_{I_i}$ are single variables and all irreducible quadratic factors are $2\times 2$ determinants of single variables.
    \item Quadratic factors in $\Delta_{I_i}$ arise precisely when propagators $p$ and $q$ are supported on a common edge $a$ with $I_i(p)=a$ and $I_i(q)=a+1$.
    \item The factor $r_i$ divides $\Delta_{I_i}$.
    \item The ideal generated by $R$ is the radical of the ideal generated by $\prod_{i=1}^{n}\Delta_{I_i}$.
  \end{enumerate}
\end{prop}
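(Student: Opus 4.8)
The plan is to establish items 1--4 by a direct structural analysis of the submatrices of $C(W)$ cut out by Grassmann necklace elements, and then to derive item 5 from those. Item 1 is immediate: any square submatrix of $C(W)$ has entries that are either $0$ or a single distinct indeterminate, so its determinant is a sum of monomials each of degree equal to the size of the submatrix; this gives homogeneity of $\Delta_{I_i}$ (degree $k$) and of each $r_i$ (degree $|S_i|$). For items 2 and 3, fix $i$ and consider the $k\times k$ matrix $M = C(W)[\cP, I_i]$. By Corollary~\ref{lem basis as perm} the map $p \mapsto I_i(p)$ is a bijection $\cP \to I_i$, so after permuting rows we may assume the diagonal entry of row $p$ is $x_{p, I_i(p)}$. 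Using Lemma~\ref{lem no fourth vertex} (so $I_i(p)$ is never the $<_i$-maximal support vertex of $p$) together with non-crossingness and the clockwise-ordering Remark~\ref{clockwise ordering rem}, I would show that, after reordering rows and columns along $<_i$, the determinant of $M$ factors as a product over maximal ``chains'' of propagators whose contributed vertices occupy a consecutive run $e+1, e+2, \ldots$ of vertices on a single edge $e$; each chain contributes a telescoping determinant equal to $x_{p_1,e+1}\prod_{j}(x_{p_j,e}x_{p_{j+1},e+1} - x_{p_{j+1},e}x_{p_j,e+1})x_{p_r,e}$ as in Definition~\ref{def R(W)}. This yields the factorisation of $\Delta_{I_i}$ into single variables and $2\times 2$ determinants and shows that a quadratic factor occurs precisely for a pair $p,q$ supported on a common edge $a$ with $I_i(p)=a$ and $I_i(q)=a+1$. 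Item 4 follows because restricting $M$ to the rows $S_i$ and columns $I_i(S_i)$ selects a sub-collection of these chain blocks (together with isolated rows), so $r_i$ is a sub-product of the factorisation of $\Delta_{I_i}$.

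For item 5 I work in the polynomial ring $k[x_{p,q}]$, which is a UFD, and use the standard fact that two principal ideals have the same radical iff their generators have the same set of irreducible factors up to units. Thus it suffices to prove: (a) $R = \prod_i r_i$ is squarefree, and (b) every irreducible factor of every $\Delta_{I_i}$ divides $R$. One containment is free from item 4: each $r_i$, hence each irreducible factor of $R$, divides $\Delta_{I_i}$, so $R$ divides $\prod_i \Delta_{I_i}$; only the reverse inclusion of factor sets needs work. The core of (b) is to track, using the cyclic interval Lemma~\ref{vertex cyclic int lem} and Remark~\ref{rmk cyclic}, each linear or quadratic factor of a given $\Delta_{I_i}$ back to the Grassmann necklace index where it is ``born'': for a single variable $x_{p,a}$ with $a = I_i(p)$ at a chain boundary, that index is the first $j$ with $I_j(p) = a$ (so $p \in S_j$), and for a quadratic factor from $p,q$ on edge $a$ it is the first index at which the relevant endpoint is contributed by one of them; in either case the block structure of $r_j$ established above forces the same factor to appear in $r_j$, hence in $R$. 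Squarefreeness (a) comes from the same analysis: each $r_i$ is individually squarefree by its chain factorisation, and a factor common to $r_i$ and $r_j$ with $i \neq j$ would be a single variable or a $2\times 2$ determinant shared by two distinct necklace elements, which the cyclic interval property rules out since such a factor is ``introduced'' at exactly one index.

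The main obstacle is precisely this combinatorial bookkeeping for (a) and (b): identifying, for every irreducible factor of every $\Delta_{I_i}$, the unique necklace index at which it reappears in the corresponding $r_j$, and simultaneously controlling the chain structure of all the $\Delta_{I_i}$ and $r_i$ well enough to exclude accidental repetitions across different $r_j$. The boundary single-variable factors, and the situation where a propagator sits at the junction of two adjacent chains, are the delicate cases; Lemmas~\ref{vertex cyclic int lem} and~\ref{lem no fourth vertex} and Remark~\ref{clockwise ordering rem} are the tools, but assembling them into a uniform statement about how factors propagate around the Grassmann necklace is where the real effort lies. By contrast, the ring-theoretic wrapper (radicals of principal ideals in a UFD) and items 1--4 themselves are routine once the chain decomposition of $M$ is in place.
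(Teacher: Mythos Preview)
Your overall outline is sound, but there is a concrete error in your treatment of item~2 and a real gap in item~4.

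For item~2 you claim that a maximal ``chain'' block of $M = C(W)[\cP,I_i]$ contributes the telescoping expression $x_{p_1,e+1}\prod_{j}(x_{p_j,e}x_{p_{j+1},e+1} - x_{p_{j+1},e}x_{p_j,e+1})x_{p_r,e}$ from Definition~\ref{def R(W)}. This cannot be right on degree grounds alone: an $r\times r$ minor of $C(W)$ has degree $r$, while the expression you wrote has degree $2r$. You are conflating the structure of $R_e$ (which collects contributions of all propagators on edge $e$ across \emph{all} necklace elements) with the factorisation of a \emph{single} $\Delta_{I_i}$. In any given $I_i$, at most one pair of propagators on edge $e$ can be assigned to $\{e,e+1\}$, so the blocks of $M$ are $1\times 1$ or $2\times 2$, never longer chains. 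The paper establishes this by a minimal-counterexample argument: assume an irreducible factor of degree $\geq 3$, pass to a smallest diagram exhibiting it, and use the block decomposition coming from $\cP_{in}(p)$ (or $\cP_{in}(q)\cup\{p\}$ in the second case) to exhibit a nontrivial factorisation, reaching a contradiction. Your proposed direct chain decomposition could be made to work if you instead argue that $M$ is, up to permutation, block triangular with $1\times 1$ and $2\times 2$ diagonal blocks; but that is a different statement from what you wrote and requires its own proof using Remark~\ref{clockwise ordering rem} and non-crossingness.

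For item~4, the claim that restricting to rows $S_i$ and columns $I_i(S_i)$ ``selects a sub-collection of these chain blocks'' is exactly the nontrivial part. You must show that if $p\in S_i$ and $p$ participates in a quadratic factor of $\Delta_{I_i}$ together with some $q$ (so $\{I_i(p),I_i(q)\}=\{a,a+1\}$ for a common edge $a$), then $q\in S_i$ as well; otherwise the minor giving $r_i$ cuts through the middle of a $2\times 2$ block and $r_i$ need not divide $\Delta_{I_i}$. The paper proves this by a two-case analysis depending on whether $I_i(q)=a$ or $I_i(q)=a+1$, the second case requiring a careful argument via Lemma~\ref{vertex cyclic int lem} and Lemma~\ref{lem no fourth vertex} to rule out the configuration in which $I_{i-1}(q)=a+1$ while $I_{i-1}(p)\neq a$. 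Your sketch does not supply this, and it does not follow automatically from anything you have established. Once you fix these two points, your argument for item~5 is essentially the same as the paper's.
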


\begin{proof}
\begin{enumerate}
    \item The nonzero entries of $C(W)$ are independent indeterminates and so every $i\times i$ minor of $C(W)$ is either $0$ or homogeneous of degree $i$.  Thus each $\Delta_{I_i}$ and each $r_i$ is homogeneous.  
    \item Viewing the determinant as a sum over permutations, and combining this with Corollary~\ref{lem basis as perm}, we see that $\Delta_{I_i}$ is a sum over bijections between $I_i$ and $\mathcal{P}$.  The nonzero terms in this sum are precisely those bijections such that each propagator is associated to one of its supporting vertices in $I_i$, since only those locations in $C(W)$ are nonzero.  Since the nonzero entries of $C(W)$ are independent there can be no cancellation between terms in this expansion. 

Suppose $\Delta_{I_i}$ has an irreducible factor $f$.  Let $\mathcal{P}'$ be the set of propagators which contribute a variable to $f$ and let $J$ be the set of vertices which contribute a variable to $f$.

The first claim is that the minor of $C(W)$ associated to $\mathcal{P}'$ and $J$ is precisely $f$.

{\em Proof of claim}: By the structure of determinants we know that $\Delta_{I_i} = fg$, where $g$ involves only variables associated to propagators not in $\mathcal{P}'$ and associated to vertices not in $J$.

Expanding out $fg$ yields a signed sum of monomials. Since there is no cancellation between terms, this means that the full expansion over permutations of $\Delta_{I_i}$ contains no other nonzero terms and hence no other variables.  Therefore, $\Delta_{I_i}$ is equal to the determinant of the matrix obtained by taking the submatrix of $C(W)$ with columns indexed by $I_i$ and setting any variables not appearing in $\Delta_{I_i}$ to $0$.  This new matrix is, up to permutations of rows and columns, a block matrix with one block for $\mathcal{P}'$ and $J$ and the other block for the complements.  Thus its determinant, and hence also $\Delta_{I_i}$, is the product of the minors for these two blocks.  By considering which variables appear, these two factors must also be $f$ and $g$, and so in particular $f$ is the minor of $C(W)$ associated to $\mathcal{P}'$ and $J$.  This proves the claim.

A consequence of this claim is that every linear factor of $\Delta_{I_i}$ is a $1\times 1$ minor of $C(W)$, hence is a single variable, and every irreducible quadratic factor of $\Delta_{I_i}$ is a $2\times 2$ minor of $C(W)$, hence is a $2\times 2$ determinant of single variables.

All that remains is to prove that $\Delta_{I_i}$ has no irreducible factors of degree 3 or more.  Suppose, by way of contradiction that $f$ is a factor of $\Delta_{I_i}$ of degree $\geq 3$. Without loss of generality\footnote{This is without loss of generality for the following reason.  By removing the propagators which come before those contributing to $f$ (in the order given by $I_i$) and changing $i$ to be the first vertex which contributes to $f$, we obtain an admissible diagram.  In this diagram, $f$ still divides $\Delta_{I_i}$ and we also have $i\in I_i$ and that 
$i$ contributes to $f$.  Thus this diagram can be used in place of $W$.} we may assume $i\in I_i$ and that $i$ contributes to $f$.
Finally, we can suppose that $W$ is minimal with respect to the number of propagators amongst diagrams with an irreducible factor of degree $\geq 3$. 

Let $p$ be the propagator such that $I_i(p) = i$. There are two cases to consider: when $p$ is supported on the edge $i-1$ or the edge $i$.  These are illustrated in Figure~\ref{fig no big factors}.

\begin{figure}
\[
\begin{tikzpicture}[baseline=(current bounding box.east),rotate=-20,scale = 0.95]
\begin{scope}
  \drawWLDfragment[8]{3}{0.4} 
  \newnode[left]{1}{i-1}
  \newnode[below]{2.5}{i}
  \newnode[below]{7}{m}
  \begin{scope}
    \clipcenterarc(0,0)(\startpoint-10:\endpoint+10:\radius)
    \newpropbend{1.9}{7.2}{60}
    \draw[smallpropagator,gray] (\zero+7.8*\step:\radius*1.1) to[bend left = 55] (\zero + 10*\step:\radius*1.1);
    \node at (\zero + 1.5*\step:\radius*0.5) {\footnotesize $p$};
  \end{scope}
  \node[align = center,black] at (\zero + \step*4.5:\radius*0.65) {\em \footnotesize propagators \\ \footnotesize \em inside $p$};
  \draw[propassignment,gray] (\zero + 1.81*\step:\radius) to[bend right = \arrowangle] (\zero + 2.5*\step:\radius); 
  \draw[propassignment,gray] (\zero + 7.9*\step:\radius) to[bend left = \arrowangle] (\zero + 7*\step:\radius); 
  \node[align = center,black] at (\zero + 6.8*\step:1.5*\radius) {\em \footnotesize If $m \in I_i$ then it was \\ \em \footnotesize contributed by a propagator \\ \em \footnotesize outside $p$};
  \node[align = center,black] at (110:\radius*0.5) {\em Case 1: $p = (i-1,m)$};
\end{scope}
\end{tikzpicture}
\qquad  
\begin{tikzpicture}[baseline=(current bounding box.east),rotate=-20,scale = 0.95]
  \begin{scope}
  \drawWLDfragment[8]{3}{0.4}  
  \newnode[left]{1}{i}
  \newnode[below]{2.5}{i+1\ \ \ }
  \newnode[below]{7}{j}
  \centerarc[red,line width = 2pt,line cap = round](0,0)(\zero + 1.61*\step:\zero+1.88*\step:\radius);
  \draw[propassignment,gray] (\zero + 1.55*\step:\radius) to[bend left = \arrowangle] (\zero + 1*\step:\radius); 
  \draw[propassignment,gray] (\zero + 1.93*\step:\radius) to[bend right = \arrowangle] (\zero + 2.5*\step:\radius); 
  \draw[propassignment,gray] (\zero + 7.7*\step:\radius) to[bend left = \arrowangle] (\zero + 7*\step:\radius); 
  \begin{scope}
    \clipcenterarc(0,0)(\startpoint-10:\endpoint+10:\radius)
    \newpropbend{1.9}{7.2}{40} 
    \draw[smallpropagator] (\zero+1.6*\step:\radius*1.1) to[bend right = 40] (\zero  - 3*\step:\radius*1.1); 
    \draw[smallpropagator,gray] (\zero+7.6*\step:\radius*1.1) to[bend left = 55] (\zero + 10*\step:\radius*1.1);
    \node at (\zero + 5*\step:\radius*0.32) {\footnotesize $q$};
    \node at (\zero + 1.1*\step:\radius*0.6) {\footnotesize $p$};
  \end{scope}
  \node[align = center,black] at (\zero + \step*4.5:\radius*0.72) {\em \footnotesize propagators \\ \footnotesize \em inside $q$};
  \node[align = center] at (\zero + 1.8*\step:\radius*1.5) {\footnotesize \em No \\ \footnotesize \em propagators \\ \footnotesize \em between \\ \footnotesize \em $p$ and $q$ \\ \footnotesize \em here};
  \node[align = center,black] at (\zero + 6.8*\step:1.5*\radius) {\em \footnotesize If $j \in I_i$ then it was \\ \em \footnotesize contributed by a propagator \\ \em \footnotesize outside $q$};
    \node[align = center,black] at (110:\radius*0.5) {\em Case 2: $p = (i,m)$};

\end{scope}
\end{tikzpicture}
\]
  \caption{The two cases in the proof that no factors of $\Delta_{I_i}$ have degree 3 or more.}\label{fig no big factors}
\end{figure}

\textbf{Case 1}: Suppose $p = (i-1, m)$. Then $p$ has one end on the $i-1^{th}$ edge, and $I_{i+1}(p) = m$ by Remark~\ref{rmk cyclic}.

Let $T = \cP_{in}(p)$ be the set of propagators inside $p$ in the $<_i$ order.
By Remark \ref{rmk algorithm locally same}, $I_i$ and $I_{i+1}$ can only differ once $p$ contributes to $I_{i+1}$, so $I_i(q) = I_{i+1}(q)$ for each $q \in T \backslash \{p\}$. Thus every propagator inside $p$ contributes before $m$ and if a propagator contributes $m$ in $I_i$ then it must lie outside $p$.

If neither $m$ nor $m+1$ appear in $I_i$ then by Corollary~\ref{no coloops} we have $V(p) \cap I_i = \{i\}$, and so the row of $p$ in the matrix of $\Delta_{I_i}$ has only one nonzero entry; hence $\Delta_{I_i}$ has a linear factor contributed by $p$ and $i$, which is a contradiction to the fact that $i$ contributes to $f$.  So we must have at least one of $m$ and $m+1$ in $I_i$. However, all propagators in $T$ contribute to $I_{i}$ strictly before $m$ by the previous paragraph so, after permuting rows and columns as needed, the matrix giving $\Delta_{I_i}$ has the form
\[
\begin{bmatrix} A & B \\ 0 & C\end{bmatrix}
\]
where $A$ is the $|T|\times |T|$ matrix indexed by the propagators in $T$ and the vertices in $I_i(T)$. No other propagators can be supported on these vertices since all other propagators are outside $p$, and $p$ is the first propagator supported at $i$; this explains the zero block.  Therefore, $\Delta_{I_i} = \pm \det A \det C$, and both factors are nontrivial since at least one of $m$ and $m+1$ appears in $I_i$.  

Since $p\in \cP_{in}(p)$ we have $f|\det A$, and if we remove a propagator outside $p$ that contributes either $m$ or $m+1$, we get a diagram $W'$ with fewer propagators for which $\det A | \Delta_{I^{W'}_i}$ still holds, hence $f|\Delta_{I^{W'}_i}$.  This contradicts the minimality of our choices. 

\textbf{Case 2}: Suppose $p = (i, m)$, i.e. $p$ has an end on the $i^{th}$ edge. If no other propagators are supported on the vertex $i$ then the corresponding column of $C(W)$ has only one nonzero entry in it, and so $\Delta_{I_i}$ has a linear factor contributed by $p$ and $i$; as above, this is a contradiction.
Since $p$ is the first propagator in the ordering imposed by $I_i$, it is the clockwise-most propagator supported on $i$ (by Remark {\ref{clockwise ordering rem}}). Since there are multiple propagators supported on $i$, there must be a $q$ ending on edge $i$, such that $I_i(q) = i+1$, that is second clockwise-most propagators supported by $i$ (see Figure~\ref{fig no big factors}).
Write $q = (i, j)$ for the support of $q$.  The situation for $q$ is very similar to Case 1: in particular, we have $I_{i+1}(q) = j$ by Lemma {\ref{vertex cyclic int lem}} and so if $j\in I_i$ then the propagator which contributes $j$ is outside $q$.

Similarly to Case 1, let $T = \cP_{in}(q) \cup \{p\}$ in the $<_i$ order. All propagators in $T$ contribute to $I_i$ strictly before $j$ and no other propagators are supported on vertices strictly before $j$.  Thus after permuting rows and columns as needed the matrix giving $\Delta_{I_i}$ has the form 
\[
\begin{bmatrix} A & B \\ 0 & C\end{bmatrix} \;,
\]
where $A$ is the submatrix indexed by the propagators in $T$ and the vertices in $I_i(T)$. Again there are two cases to consider.  If some vertex $j$ or larger (with respect to $<_i$) belongs to $I_i$ then $B$ and $C$ are at least one column wide, and so the block form of the matrix gives a nontrivial factorization of $\Delta_{I_i}=\pm\det A\det C$.  This yields a contradiction as in Case 1: $f|\det A$ and removing a propagator not in $T$ gives a smaller diagram with a factor of degree $\geq 3$, which contradicts our minimality assumption.

On the other hand, if no vertex $\geq_i j$ is in $I_i$ then we have $\Delta_{I_i} = \det A$.  Looking in more detail into $A$, note that the only vertices that support both $p$ and $q$ and also belong to $I_i$ are $i$ and $i+1$. Hence
\[
A = \begin{bmatrix} D & 0 \\ E & F\end{bmatrix}\; ,
\]
where $D$ is the $2\times 2$ matrix indexed by the propagators $p$ and $q$ and the vertices $i$ and $i+1$.  Thus $p$ and $i$ contribute to a quadratic factor of $\Delta_{I_i}$, once again contradicting our assumptions.

All cases have now been covered and so $\Delta_{I_i}$ has only irreducible factors of degree $2$ or less.

\item Suppose propagators $p$ and $q$ are supported on a common edge  $a$, with $I_i(p)=a$ and ${I_i(q)=a+1}$.  Let $x_{p,a},x_{p,a+1},x_{q,a},x_{q,a+1}$ be the associated variables in $C(W)$. For any fixed bijection $\sigma$ from $\cP\setminus\{p,q\}$ to $I_i \setminus\{a,a+1\}$ for which each propagator is supported on its image under the bijection, we can extend $\sigma$ to a bijection from propagators to $I_i$ in two ways: either $p\mapsto a$ and $q\mapsto a+1$ or $p\mapsto a+1$ and $q\mapsto a$.  The sum of the contributions of all these bijections to $\Delta_{I_i}$ is therefore the product of $(x_{p,a}x_{q,a+1}-x_{p,a+1}x_{q,a})$  with the minor coming from $\cP\setminus\{p, q\}$ and $I_i \setminus \{a,a+1\}$.  Since there is no cancellation of terms in the expansion of $\Delta_{I_i}$, if any other terms appear then they will cause a factor which is not in the form described at the beginning of the proof of part (2).  Therefore, no such terms exist and $(x_{p,a}x_{q,a+1}-x_{p,a+1}x_{q,a})$ is a factor of $\Delta_{I_i}$. 

Now let $f$ be a quadratic factor of $\Delta_{I_i}$.  By part (2) we know that $f$ is a $2\times 2$ minor coming from two propagators, call them $p$ and $q$, and two vertices, call them $a$ and $b$, and assume that $a <_i b$.  It remains to show that $b = a+1$.  From this we can conclude that $p$ and $q$ each have one end on edge $a$. 

As in the proof of part (2), make a new admissible diagram by removing the propagators which come before those contributing to $f$ in the order imposed by $I_i$.  Without loss of generality we may assume $i=a$. The cases in the proof of part (2) show how $\Delta_{I_i}$ factors. In particular the vertices supporting the other end of $p$ either do not appear in $I_i$, or they contribute to a different factor of $\Delta_{I_i}$ than $p$ and $a$ do.  By assumption $b$ contributes to the same factor as $a$.  Therefore, $(a,b)$ is an edge defining an end of $p$, that is, $b = a+1$.

\item If $S_i = \emptyset$ then $r_i = 1$ and we immediately have $r_i | \Delta_{I_i}$; otherwise, let $p\in S_i$. 

Note that $\Delta_{I_i}$ is homogeneous linear in the variables of the row corresponding to $p$.  By part (2), either exactly one variable in the row corresponding to $p$ appears in $\Delta_{I_i}$ and this variable is a factor of $\Delta_{I_i}$, or exactly two variables from the row corresponding to $p$ appear in $\Delta_{I_i}$ and they appear as part of a quadratic factor.  In the first case let the variable be $x$. Then $x$ is a factor of $\Delta_{I_i}$. Therefore, $x$ must appear in the row of $p$ and column of $I_i(p)$, i.e. $x = x_{p, I_i(p)}$. Thus $x$ also appears in $r_i$. Since the matrix for $r_i$ is a minor of the matrix for $\Delta_{I_i}$ and every term in $\Delta_{I_i}$ involves $x$, we also have that every term in $r_i$ involves $x$ so $x$ is a factor of both $r_i$ and $\Delta_{I_i}$ and is the only variable from this row in either polynomial.

  Now suppose two variables from the row $p$ appear in a quadratic factor $f$ of $\Delta_{I_i}$.  By part (3), there is another propagator $q$ and an edge $a$ such that $f$ is the $2\times 2$ minor coming from $p, q$ and $a, a+1$.  There are two situations which can occur, either $I_i(p)=a$ and $I_i(q)=a+1$, or $I_i(q)=a$ and $I_i(p)=a+1$. 
    We show that in both cases it follows that $q \in S_i$ as well.


In both cases, since we have $I_{i-1}(p)\neq I_{i}(p)$ by assumption it follows from Lemma~\ref{vertex cyclic int lem} that ${I_{i-1}(p) <_{i-1} I_{i}(p)}$ and no other vertex supporting $p$ lies between $I_{i-1}(p)$ and $I_{i}(p)$. If $I_i(q)=a$ then $q$ lies clockwise of $p$ and $q$ contributes before $p$ in $I_i$. This means that $I_{i-1}(p)=a$ and so $I_{i-1}(q)\neq a$. Thus $q\in S_i$ and so $f$ is a factor of $r_i$.

Now consider the case where $I_i(q) = a+1$, and suppose for contradiction that $q \not\in S_i$, i.e. that $I_{i-1}(q) = a+1$. Since $I_{i-1}(p) \neq a$, there must be some other propagator $s$ with $I_{i-1}(s) = a$ (else $q$ contributes $a$ to $I_{i-1}$). This propagator cannot lie on edge $a$, because Remark~\ref{rmk cyclic} would imply that $I_i(s) \in \{a, a+1\}$, contradicting the fact that $I_i(p) = a$ and $I_i(q) = a+1$; thus $s$ has an end on edge $(a-1)$ and is inside $p$ in the $<_k$ order where $p=(k,a)$.

Say $s = (j,a-1)$ with $i-1 \leq_{i-1} k+1 \leq_{i-1} j+1$. But by Lemma~\ref{lem no fourth vertex}, if $I_{i-1}(s) = a$ then $a$ cannot be maximal in the support of $s$ with respect to $<_{i-1}$; thus we must have $i-1 = j+1$, and we are in the situation in Figure~\ref{fig part 4}.

\begin{figure}
\[
\begin{tikzpicture}[baseline=(current bounding box.east),rotate = -40]
	\begin{scope}
	\drawWLDfragment[8]{3}{0.4} 
	\newnode[left]{1}{j}
	\newnode[left]{2}{i-1}
	\newnode[below]{5}{a-1\quad}
	\newnode[below]{6}{a}
	\newnode[below]{7}{a+1}
	\begin{scope}
		\clipcenterarc(0,0)(\startpoint:\endpoint+\step:\radius)
		\newpropbend{1.5}{5.7}{20}
		\newpropbend{1.3}{6.4}{30}
		\newprop{6.6}{-6}{}
	\end{scope}
	\node at (\zero + 3.5*\step:\radius*0.8){\footnotesize $s$};
	\node at (\zero + 4.8*\step:\radius*0.45){\footnotesize $p$};
	\node at (\zero + 8*\step:\radius*0.65){\footnotesize $q$};
\end{scope}
\end{tikzpicture}
\]
  \caption{In order to obtain $I_{i-1}(s) = a$, propagators $s$ and $p$ must each have an end on edge $(i-2)$.}\label{fig part 4}
\end{figure}
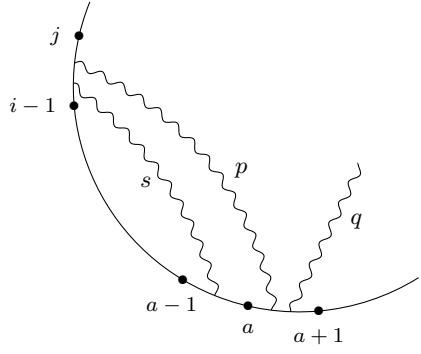

Since $p$ changed its association from $I_{i-1}$ to $I_i$, we have $I_{i-1}(p) = i-1$ by Lemma~\ref{vertex cyclic int lem}. From Figure~\ref{fig part 4} and Remark \ref{rmk algorithm locally same}, $p$ contributes first to  $I_{i-1}$ and proceeds identically to $I_i$ for all vertices inside $p$ in the $<_{i-1}$ ordering, implying that $I_{i-1}(s) = I_i(s)$. Since $I_{i-1}(s) = a$ and $I_i(s) \neq a$, this is a contradiction. 

Thus $q\in S_i$ after all, and so $f$ is a factor of $r_i$ as required.



\item If $W$ has zero propagators then all $I_i=\emptyset$ and both $R$ and $\prod_{i=1}^n \Delta_{I_i}$ are equal to $1$, so the result holds in this case.  Now assume $W$ has at least one propagator.

First we show that every factor of $\prod_{i=1}^n \Delta_{I_i}$ divides $R$.  Take an irreducible factor $f$ of $\prod_{i=1}^n \Delta_{I_i}$. There exists some $i$ such that $f|\Delta_{I_i}$ but $f\!\!\nmid\!\! \Delta_{I_{i-1}}$, since otherwise the variables corresponding to the propagators contributing to $f$ which do not themselves appear in $f$ could never appear, contradicting Lemma~\ref{vertex cyclic int lem}.  If $f$ is a linear factor, say from associating propagator $p$ to vertex $a$, then $I_{i}(p)=a$ and $I_{i-1}(p)\neq a$ so this factor appears in $r_i$.  If $f$ is a quadratic factor, say from associating propagators $p$ and $q$ to vertices $a$ and $a+1$ respectively, then again we cannot have both $I_{i-1}(p) = a$ and $I_{i-1}(q) = a+1$, else $f$ divides $\Delta_{i-1}$. However, by the proof of part (4), if one of $p,q$ belongs to $S_i$ then the other does as well.  Thus $f$ divides $r_i$.

Next we need to show that $R$ is square-free.  Suppose $f^2|R$.  If $f$ is a linear factor, say from associating propagator $p$ to vertex $a$, then there must be two distinct points in the Grassmann necklace algorithm where $p$ changes from not being associated to vertex $a$ to being associated to vertex $a$.  This contradicts Lemma~\ref{vertex cyclic int lem}.  Now suppose $f$ is a quadratic factor, say from propagators $p$ and $q$ supported on edge $a$ with $p$ clockwise to $q$ on the edge.  In this case it is not possible for any $I_i$ to associate $p$ to $a+1$ and $q$ to $a$.  Furthermore, we know by part (4) that $p$ changes from not being associated to $a$ to being associated to $a$ if and only if $q$ changes from not being associated to $a+1$ to being associated to $a+1$.  Thus $f^2|R$ implies that  in the Grassmann necklace $p$ must change twice from not being associated to vertex $a$ to being associated to vertex $a$. This is again a contradiction, and so $R$ is square-free.

{}From the arguments above, we have that $R|\prod_{i=1}^n \Delta_{I_i}$ (from part (4)), $R$ contains all factors of $\prod_{i=1}^n \Delta_{I_i}$, and $R$ is square-free.  Therefore, the ideal generated by $R$ is the radical of the ideal generated by $\prod_{i=1}^n \Delta_{I_i}$.
  \end{enumerate}
\end{proof}

We are now ready to prove the main theorem of this section. We use the facts about $\Delta_{I_i}$ and $R$ above to show that the denominator $R(W)$ of the integrand of $I(W)$ is given by Algorithm \ref{alg WLD to denom via GN}.

\begin{thm}\label{thm denom}
  Given any admissible Wilson loop diagram $W$, let $(I_1, \ldots I_n)$ be the associated Grassmann necklace. Then $R(W)$ as defined in Definition~\ref{def R(W)} is equal to the $R$ constructed by Algorithm~\ref{alg WLD to denom via GN}. Furthermore, the ideal generated by $R(W)$ is the radical of the ideal generated by $\prod_{i=1}^n \Delta_{I_i}$, where $\Delta_{I_i}$ is the determinant of the $k \times k$ minor indicated by $I_i$.  
\end{thm}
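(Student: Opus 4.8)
The second assertion of the theorem is immediate from the first together with Proposition~\ref{prop alg gives rad}(5): once $R(W)=R$ is known, the ideal $(R(W))$ equals $(R)$, which by that part is the radical of $\left(\prod_{i=1}^{n}\Delta_{I_i}\right)$. So the whole task is the polynomial identity $R(W)=R$.

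The plan is to prove this by comparing irreducible factors. Both sides are square-free products of single variables and irreducible $2\times 2$ determinants of single variables: for $R$ this is Proposition~\ref{prop alg gives rad} (parts (2) and (5)), and for $R(W)=\prod_{e\in V}R_e$ it is visible from Definition~\ref{def R(W)}, since inside a fixed $R_e$ the linear factors $x_{p_1,e+1}$ and $x_{p_r,e}$ and the quadratics $x_{p_j,e}x_{p_{j+1},e+1}-x_{p_{j+1},e}x_{p_j,e+1}$ for $1\le j\le r-1$ are pairwise distinct, and, involving only the columns $e$ and $e+1$, are distinct from every factor of $R_{e'}$ for $e'\ne e$. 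So it suffices to show that $R(W)$ and $R$ have the same irreducible factors, and I would do this one edge at a time. Fix an edge $e$ with propagators $p_1,\dots,p_r$ on it, listed in the counterclockwise order of Definition~\ref{def R(W)} (noncrossingness makes these propagators totally ordered by nesting about $e$, and this is that order). On the algorithm side, Proposition~\ref{prop alg gives rad}(2)--(3) tells us that the irreducible factors of $R$ involving only columns $e,e+1$ are exactly: single variables $x_{p_j,e}$ or $x_{p_j,e+1}$, one for each starting vertex $i$ at which $p_j$ begins to contribute the vertex $e$, resp.\ $e+1$, to the Grassmann necklace and for which this is not swallowed into a quadratic; together with quadratics $x_{p,e}x_{q,e+1}-x_{q,e}x_{p,e+1}$, one for each starting vertex $i$ with $p,q\in S_i$ and $\{I_i(p),I_i(q)\}=\{e,e+1\}$.

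The core of the argument is to identify these factors explicitly and see that they coincide with the factors of $R_e$. Here I would use Lemma~\ref{vertex cyclic int lem} and Remark~\ref{rmk cyclic}: as the starting vertex sweeps around the circle, each propagator $p_j=(e,b_j)$ contributes the vertex $e$ on one cyclic interval of starting vertices, then $e+1$ on the next, then $b_j$, then $b_j+1$, changing only at the endpoints of these intervals. Combining this with the rule that the algorithm always removes the clockwise-most (hence, about $e$, the outermost among those still present) propagator at a vertex, one checks that at any given starting vertex at most two propagators on edge $e$ are contributing to $\{e,e+1\}$, that these are consecutive in the nesting order, and that as the sweep advances the pair ``hands off'' through the consecutive pairs $(p_j,p_{j+1})$ in turn---each handoff producing the quadratic $g_j:=x_{p_j,e}x_{p_{j+1},e+1}-x_{p_{j+1},e}x_{p_j,e+1}$ as a factor of the relevant $\Delta_{I_i}$, hence of $r_i$ by Proposition~\ref{prop alg gives rad}(4)---while the first and last transitions involving edge $e$ produce precisely the lone variables $x_{p_1,e+1}$ and $x_{p_r,e}$. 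Conversely, Proposition~\ref{prop alg gives rad}(4) and the argument in the proof of its part (5) show that any other transition of a $p_j$ into contributing $e$ or $e+1$ necessarily coincides with a transition of a neighbouring propagator, and so is absorbed into $g_{j-1}$ or $g_j$ rather than giving a stray variable. Thus the factors of $R$ touching columns $e,e+1$ are exactly the factors of $R_e$; summing over all edges gives $R(W)=R$.

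The main obstacle is exactly the verification in the previous paragraph: showing that, as the starting vertex advances, the two propagators on edge $e$ currently contributing to $\{e,e+1\}$ really do run through the consecutive pairs in order, with each quadratic being $g_j$ for the correct $j$. This comes down to understanding how ``the clockwise-most surviving propagator on edge $e$'' evolves, which is governed by the arrangement of the cyclic intervals $J^{W}_{p_j}(\cdot)$ of Lemma~\ref{vertex cyclic int lem}; the key point to pin down is that these intervals are arranged compatibly with the nesting of $p_1,\dots,p_r$, so that exactly the consecutive pairs ever co-occur on $\{e,e+1\}$. (An alternative would be an induction on $|\cP|$ parallel to the proof of Theorem~\ref{thm dim}, removing a short propagator $p$ using Lemmas~\ref{lem dihedral} and \ref{lem good p} and comparing $R(W)$ with $R(V)$ through Lemma~\ref{lem I}; but $R(V)$ need not divide $R(W)$---a variable factor of $R(V)$ on an edge that $p$ joins typically becomes part of a new quadratic factor of $R(W)$---so this route needs comparably careful bookkeeping, and I would lead with the direct factor-matching argument.)
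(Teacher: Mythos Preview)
Your approach is sound but more laborious than the paper's, and the step you flag as ``the main obstacle'' is exactly what the paper sidesteps.  The paper does not attempt a two-sided matching of irreducible factors edge by edge.  Instead it observes that
\[
\deg R(W)=\deg R=4|\cP|,
\]
the first equality being immediate from Definition~\ref{def R(W)} (each edge with $r$ propagators contributes degree $2r$, and each propagator lies on two edges), and the second following from Remark~\ref{rmk cyclic} since each propagator changes its contribution exactly four times as the starting vertex sweeps around.  With this degree equality in hand, it suffices to prove only \emph{one} divisibility: every irreducible factor of $R$ divides $R(W)$.  Since $R$ is square-free (Proposition~\ref{prop alg gives rad}(5)), this gives $R\mid R(W)$, and equal degree forces $R(W)=cR$; checking $c=1$ is then a sign check.

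The one-sided divisibility is much lighter than your proposed two-sided match.  A quadratic factor of $R$ comes from propagators $p,q$ with $I_i(p)=a$, $I_i(q)=a+1$ on a common edge $a$ (Proposition~\ref{prop alg gives rad}(3)); since the algorithm always removes the clockwise-most remaining propagator, $p$ and $q$ must be \emph{adjacent} on that edge, so this quadratic is literally one of the factors written down in $R_a$.  For a linear factor $x_{p,a}$ of $R$ one similarly argues that $p$ must be extremal on its edge at vertex $a$, hence appears as one of the boundary variables in the corresponding $R_e$.  This is the same local analysis you describe, but you only need the easy direction of it; the degree count does the rest.

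So: your plan would work, but it leaves you having to pin down precisely the handoff behaviour of the $J_p^W(\cdot)$ intervals for all propagators on an edge, which you correctly identify as the crux.  The paper's degree argument makes that detailed bookkeeping unnecessary.  Your observation that $R(W)$ is itself square-free is correct and worth noting (adjacent edges share a column but never a propagator, since admissibility forbids $p=(e,e+1)$), though the paper does not need it.
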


\begin{proof}
The equivalence of the ideal generated by $R$ and the radical of the ideal generated by $\prod_{i=1}^n \Delta_{I_i}$ is due to Proposition~\ref{prop alg gives rad}.  It remains to prove that $R(W)$ is the $R$ of Algorithm~\ref{alg WLD to denom via GN}.

To this end, first note that $R(W)$ and $R$ both have total degree $4|\mathcal{P}|$; the degree of $R(W)$ is immediate from the definition while that of $R$ follows from Remark \ref{rmk cyclic}.  By Proposition~\ref{prop alg gives rad} every factor of $R$ is either a quadratic factor coming from two propagators supported on a common edge or a single variable coming from a support vertex of a propagator not appearing in a quadratic factor.  The factors of each $R_e$ making up $R(W)$ in the notation of Definition~\ref{def R(W)} are all of this form and every such factor occurs.  Hence every factor of $R$ divides $R(W)$.
Finally, since $R$ is square-free, this implies that $R(W)$ is a scalar multiple of $R$.

Finally, we need to check the scalar.  By Definition~\ref{def R(W)} each linear factor appears with coefficient $1$ and each $2\times 2$ determinant factor appears with the same sign as the determinant of the corresponding minor in $C(W)$.  Therefore, $R=R(W)$.
\end{proof}

Thus we have shown a deep relationship between the physically derived poles of the integrand $\Omega(W)$ and the geometry and combinatorics associated to the diagrams $W$. This is yet another piece of substantial evidence indicating that the positroid representation of Wilson loop diagrams is the correct geometric object to use when considering the Wilson loop amplitudes geometrically. 


\bibliographystyle{abbrv}
\bibliography{WLDbib}

\end{document}